\newtheorem{theorem}{Theorem}
\newtheorem{corollary}{Corollary}
\newtheorem{lemma}{Lemma}
\newtheorem{definition}{Definition}
\newtheorem*{theorem-non}{Theorem}
\newtheorem*{lemma-non}{Lemma}
\newtheorem*{corollary-non}{Corollary}
\newtheorem*{proposition-non}{Proposition}
\newcommand{\bx}{\bm{x}}
\newcommand{\by}{\bm{y}}
\newcommand{\bxi}{\bm{x}^{(i)}}
\newcommand{\bxik}{\bm{x}^{(i,k)}}
\newcommand{\rhoik}{\rho^{(i,k)}}
\newcommand{\byik}{\bm{y}^{(i,k)}}
\newcommand{\yik}{y^{(i,k)}}
\newcommand{\btheta}{\bm{\theta}}
\newcommand{\hath}{\hat{h}}
\newcommand{\DD}{\mathbb{D}}
\DeclareMathOperator{\RZ}{RZ}
\DeclareMathOperator{\RY}{RY}
\DeclareMathOperator{\CNOT}{CNOT}
\DeclareMathOperator{\RERM}{\mathsf{R}_{ERM}}
\DeclareMathOperator{\ROPT}{\mathsf{R}}
\DeclareMathOperator{\RGENE}{\mathsf{R}_{Gene}}
\DeclareMathOperator{\haar}{\text{Haar}}
\DeclareMathOperator{\Tr}{Tr}
\DeclareMathOperator{\Loss}{\mathcal{L}}
 \newcommand{\revise}[1]{\textcolor{teal}{#1}}
\begin{document}

\title{Problem-Dependent Power of Quantum Neural Networks on Multi-Class Classification} 

\author{Yuxuan Du}
\email{duyuxuan123@gmail.com}
\affiliation{
 JD Explore Academy, Beijing 10010, China
}

\author{Yibo Yang}
\email{yibo.yang93@gmail.com}
\affiliation{
King Abdullah University of Science and Technology, Thuwal 4700, Kingdom of Saudi Arabia
}
\affiliation{
 JD Explore Academy, Beijing 10010, China
}
\author{Dacheng Tao}
\email{dacheng.tao@gmail.com}
\affiliation{Sydney AI Centre, School of Computer Science, The University of Sydney, NSW 2008, Australia}
\affiliation{
 JD Explore Academy, Beijing 10010, China
}
\author{Min-Hsiu Hsieh}
\email{min-hsiu.hsieh@foxconn.com}
\affiliation{
Hon Hai (Foxconn) Research Institute, Taipei, Taiwan
}

\begin{abstract}
Quantum neural networks (QNNs) have become an important tool for understanding the physical world, but their advantages and limitations are not fully understood. Some QNNs with specific encoding methods can be efficiently simulated by classical surrogates, while others with quantum memory may perform better than classical classifiers. Here we systematically investigate the problem-dependent power of quantum neural classifiers (QCs) on multi-class classification tasks. Through the analysis of expected risk, a measure that weighs the training loss and the generalization error of a classifier jointly, we identify two key findings: first, the training loss dominates the power rather than the generalization ability; second, QCs undergo a U-shaped risk curve, in contrast to the double-descent risk curve of deep neural classifiers. We also reveal the intrinsic connection between optimal QCs and the Helstrom bound and the equiangular tight frame. Using these findings, we propose a method that exploits loss dynamics of QCs to estimate the optimal hyper-parameter settings yielding the minimal risk. Numerical results demonstrate the effectiveness of our approach to explain the superiority of QCs over multilayer Perceptron on parity datasets and their limitations over convolutional neural networks on image datasets. Our work sheds light on the problem-dependent power of QNNs and offers a practical tool for evaluating their potential merit.
\end{abstract}

\maketitle

 \begin{figure*}
  \centering
\includegraphics[width=0.88\textwidth]{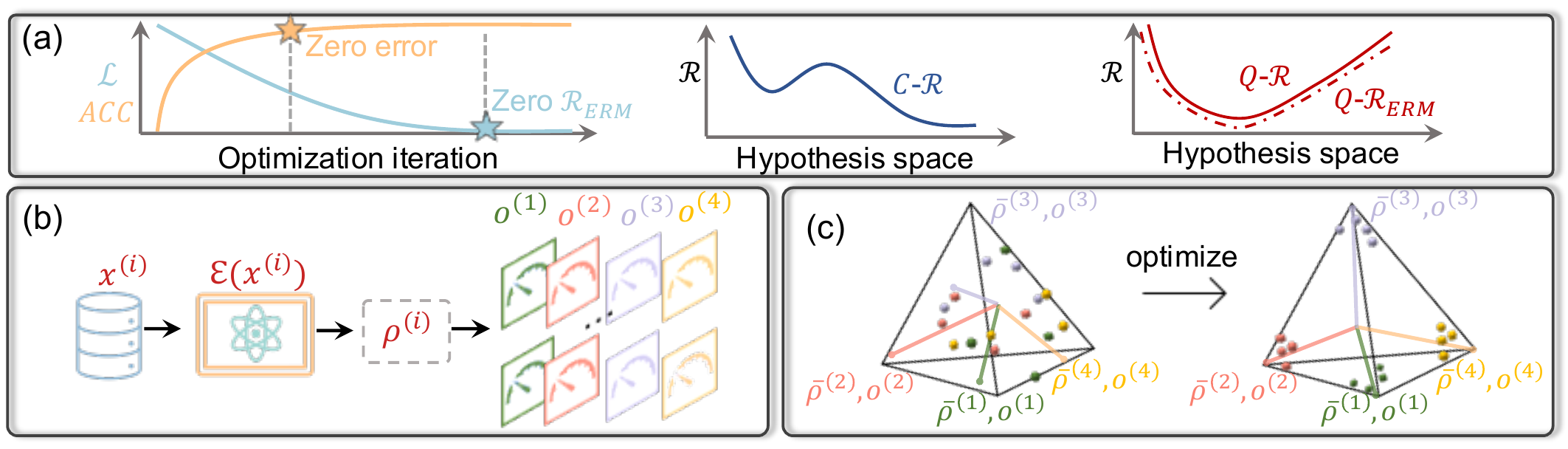}
\caption{\small{\textbf{Risk curve and geometry of the unified QCs.} (a) The left  plot shows the relation between the zero classification error (yellow star) and zero loss (blue star) during the training process. The middle and right plots  depict the risk curve of CCs and QCs (labeled by `C-$\mathcal{R}$' and `Q-$\mathcal{R}$'). Unlike CCs with a double-descent tendency, the risk curve of QCs yields a `U' shape whose power is dominated by the empirical risk (`Q-$\mathcal{R}_{\text{ERM}}$'), dominated by the expected risk. (b) The unified QC consists of two parts, the feature state $\rho$ and the measure operator $\bm{o}$. (c) Geometric interpretation of $\{\rhoik\}$ and $\bm{o}$ of QCs with (near) zero training loss: (i) the feature states associated with train samples belonging to the same class concentrate around their class-feature mean, i.e., $\bar{\rho}^{*(k)}:=\rho^{*(1,k)}=...= \rho^{*(n_c,k)}$ for $\forall k\in[K]$; (ii) the class-feature means are maximally distant with each other, i.e., $\Tr(\bar{\rho}^{*(k)} \bar{\rho}^{*(k')})\sim   \delta_{k,k'}$; (iii) the measure operator should align with class-feature means, i.e., $\Tr(\bar{\rho}^{*(k)} o^{*(k')})\sim  \delta_{k,k'}$.}}
\label{fig:scheme}
\end{figure*}

\noindent\textbf{\textit{Introduction}}.--- The advent of hardware fabrication pushes the boundary of quantum computing from verifying its superiority on artificial tasks \cite{arute2019quantum,zhong2020quantum,wu2021strong} to conquering realistic problems with merits \cite{mi2021information,Xia2021Quantum-enhanced,cerezo2022challenges}. This has led to the emergence of a popular paradigm known as quantum neural networks (QNNs), which combine variational quantum ansatzes with classical optimizers \cite{benedetti2019parameterized,cerezo2021variational}. So far, various QNN-based methods have been proposed to address difficult problems in areas such as quantum physics \cite{yuan2019theory, mcardle2020quantum,cirstoiu2020variational,google2020hartree}, quantum information theory \cite{romero2017quantum,du2021exploring,cerezo2020variational,bondarenko2020quantum}, combinatorial optimization \cite{farhi2016quantum,zhou2020quantum,harrigan2021quantum,zhou2022qaoa,pagano2020quantum}, and machine learning \cite{havlicek2018supervised,huang2021experimental,tian2022recent,wang2021towards,du2022theory}. Among these applications, QNNs are often deployed as \textit{quantum classifiers} (QCs) to predict correct labels of the input data \cite{schuld2019quantum,mitarai2018quantum,schuld2018circuit,li2021vsql,perez2020data,li2022recent}, e.g., categorize image objects \cite{du2021grover,chen2021end,peters2021machine}, classify  phases of quantum matters \cite{cong2019quantum,gong2022quantum,herrmann2022realizing,zhang2022experimental}, and distinguish  entangled states from separable states \cite{grant2018hierarchical,yin2022efficient}.

To comprehend the full potential of existing QCs and to spur the development of novel protocols, huge efforts have been made to unveil the learnability of QCs  \cite{abbas2020power,du2021learnability,huang2021power}. Prior literature establishes the foundations of QCs from three primary aspects, i.e., model capacity \cite{du2018expressive,haug2021capacity,shen2020information,wu2021expressivity}, trainability \cite{anschuetz2022beyond,shirai2021quantum,holmes2021connecting}, and generalization  \cite{banchi2021generalization,caro2021encoding,caro2021generalization,du2022efficient,gyurik2021structural,peters2022generalization,huang2021information}. Nevertheless, the power and limitation of QCs have rarely been proven  \cite{huang2022quantum,huang2021information,ciliberto2020statistical,Landman2022Classically,de2022limitations,schreiber2022classical}. Meanwhile, previous results cannot rigorously explain the empirical observations such that QCs generally exhibit superior performance on handcraft or quantum data \cite{huang2021power,liu2021rigorous} but inferior performance on realistic problems \cite{qian2021dilemma}. As a result, the specific problem domains in which QCs can excel effectively are still uncertain.

A principal criteria in characterizing the power of a classifier is the expected risk \cite{mohri2018foundations}, which weighs the empirical risk (training loss) and the generalization error (test loss) jointly. An \textit{optimal} classifier is one which achieves zero expected risk \footnote{We emphasize that the assessment of classifiers' power is not limited to a single measure. In general, the classification accuracy serves as the primary criterion, while other factors are also critical to consider, including robustness in domains such as healthcare and finance. Therefore, the optimality of a classifier is contingent upon the specific problem domain. Refer to SM~A for further details.}. As shown in Fig.~\ref{fig:scheme}(a), the success of deep neural classifiers is attributed to their double-descent risk curves \cite{Nakkiran2020Deep,belkin2019reconciling}. This means that as the hypothesis space is continually expanded, the expected risk of a trained deep neural classifier initially decreases, increases, and when it overfits the train set, undergoes a second descent. Given the fundamental importance of the expected risk, it demands to distill ubiquitous rules that capture the risk curve of diverse QCs.

In this study, we unify a broad class of QCs in the same framework and understand their problem-dependent ability under the expected risk (see Fig.~\ref{fig:scheme}(b)). Our analysis reveals two substantial outcomes: (i) trainability dominates QCs’ ability more than generalization ability; (ii) QCs undergo a U-shaped risk curve instead of the double-descent curve for CCs  with respect to the increased hypothesis space. These outcomes consolidate and refine previous observations. The first outcome suggests that the deficiency of QCs on classical data may stem from their limited ability to fit the train set, leading high classification error and training loss.  The second outcome highlights the distinct learning behavior of QCs and CCs. Despite the fact that over-parameterization is crucial to enhance the performance of CCs, it may adversely affect the power of QCs. In line with the diverse dynamics of the risk curves for QCs and CCs, we devise an efficient problem-dependent method to determine the suitable circuit depth of a QC with a near-optimal risk. Numerical simulations validate our theoretical results.

\medskip
\noindent \textit{\textbf{Expected risk}}.--- Let us first introduce a $K$-class ($K \geq 2$) classification task. Denote the input space as $\mathcal{X}$, the label (class) space as $\mathcal{Y} = \{1, \cdots, K\}$, and the train set as $\mathcal{D}=\bigcup_{k=1}^K \{(\bxik, \yik)\}_{i=1}^{n_k}$ with $|\mathcal{D}|$ samples  drawn i.i.d. from an unknown probability distribution $\mathbb{D}$ on $\mathcal{Z}=\mathcal{X}\times \mathcal{Y}$. In standard scenarios, the number of train samples in each class is the same, i.e., $n_1=...=n_k\equiv n_c$ and $|\mathcal{D}|:=n=Kn_c$. The purpose of a classification algorithm $\mathcal{A}$ is using $\mathcal{D}$ to infer a hypothesis (a.k.a., a classifier) $h_{\mathcal{A}_\mathcal{D}}:\mathcal{X} \rightarrow \mathbb{R}^K$ from the hypothesis space $\mathcal{H}$ to separate train examples from different classes. This is equivalent to identifying an optimal hypothesis in $\mathcal{H}$ minimizing the \textit{expected risk} $\ROPT (h)=\mathbb{E}_{(\bx, \by)\sim \mathbb{D}}[\ell(h(\bx), y)]$, where $\ell(\cdot, \cdot)$ is the per-sample loss and for clarity we specify it as the square error with $\ell(\bm{a}, \bm{b}) =\frac{1}{2} \|\bm{a}- \bm{b}\|_2^2$ \cite{bishop2006pattern}. Unfortunately, the inaccessible distribution $\DD$ forbids us to assess the expected risk directly. In practice, $\mathcal{A}$ alternatively learns an \textit{empirical classifier} $\hath\in \mathcal{H}$, as the global minimizer of the (regularized) loss function 
 \begin{equation}\label{eqn:gene_obj_func}
  \mathcal{L}(h, \mathcal{D}) =  \frac{1}{n}\sum_{i=1}^{n_c}\sum_{k=1}^K \ell(h(\bxik), \yik)  + \mathfrak{E}(h),
\end{equation}
where $\mathfrak{E}(h)$ is an optional regularizer.   
 
The foremost role of the risk suggests that it can serve as a critical measure to assess the power of QCs. Unlike conventions merely focusing on a QC on one specific task, what we intend to is unearthing \textit{ubiquitous rules} of QCs encompassing  diverse constructions and tasks. To reconcile the intractability of  $\ROPT (\hath)$ and proceed further  analysis, we decompose it into two measurable terms,  
\begin{equation}\label{Eqn:opt_risk}
  \ROPT(\hath)  =   \mathsf{R}_{\text{ERM}}(\hath) + \RGENE(\hath),   
 \end{equation}
 where $\RERM(\hath)=\frac{1}{n}\sum_{i=1}^n\sum_{k=1}^K \ell(\hath(\bxik), \yik)$ is  the \textit{empirical risk} and  $\mathsf{R}_{\text{Gene}}(\hath)= \ROPT(\hath) - \mathsf{R}_{\text{ERM}}(\hath)$ is the \textit{generalization error}. Based on Eq.~(\ref{Eqn:opt_risk}),  quantifying the optimality of QCs amounts  to deriving under what conditions do QCs commit both the vanished $\RERM$ and $\RGENE$. We note that the vanishing conditions serve as sufficient criteria for the success of QCs. In practical scenarios, achieving $\mathsf{R}_{\text{ERM}}(\hath)\rightarrow 0$ may be challenging and not necessary when classification accuracy is the sole concern. Nonetheless, considering multiple metrics to assess classifier power, $\mathsf{R}_{\text{ERM}}(\hath)\rightarrow 0$ becomes important (see SM~A for  explanations \footnote{See Supplemental Material (SM) for the proofs, the implications of Theorem 1 and Corollary 1, and the omitted details of algorithmic implementation and numerical simulations.}).

To better elucidate our results, let us recall that the general form of  QC is $\hath_Q= \arg\min_{h_Q\in \mathcal{H}_Q} \mathcal{L}(h_Q, \mathcal{D})$, where $\mathcal{L}$ is defined in Eq.~(\ref{eqn:gene_obj_func}) and $\mathcal{H}_Q$ is the hypothesis space. For an $N$-qubit QC, its hypothesis space is 
\begin{equation}\label{eqn:hypo-QC-convention}
   \mathcal{H}_Q = \left\{\left[h_Q(\cdot, U(\btheta), O^{(k)})\right]_{k=1:K}\Big|\btheta \in \bm{\Theta}\right\}, 
\end{equation}   
where $[\cdot]_{k=1:K}$ is a $K$-dimensional vector, its $k$-th entry $h_Q(\bm{x}, U(\btheta), O^{(k)})=\Tr(O^{(k)}U(\btheta)\sigma(\bx) U(\btheta)^{\dagger})$ for $\forall k\in[K]$ refers to the output (prediction) of quantum circuits, $\sigma(\bx)=U_E(\bx)(\ket{0}\bra{0})^{\otimes N}U_E(\bx)^{\dagger}$ is the input state of $\bx$ with the encoding circuit $U_E(\cdot)$, $\bm{O}=\{O^{(k)}\}_{k=1}^K$ is a set of  measure operators, and $U(\btheta)$ is the adopted ansatz with  trainable parameters $\btheta$ living in the parameter space $\bm{\Theta}$. Without loss of generality, we define  $U(\btheta)=\prod_{l=1}^{N_t}(u_l(\btheta)u_e) \in\mathcal{U}(2^N)$, where $u_l(\btheta)\in\mathcal{U}(2^m)$ is the $l$-th parameterized quantum gate operated with at most $m$ qubits ($m\leq N$) and $u_e$ refers to  fixed quantum gates. Similarly, we define $U_E(\bx)=\prod_{g=1}^{N_g}u_g(\bx) \in\mathcal{U}(2^N)$, where $u_g(\bx)\in\mathcal{U}(2^m)$ refers to the $g$-th  quantum gate operated with at most $m$ qubits, and $N_g$ gates contain $N_{ge}$ tunable gates and $N_g-N_{ge}$ fixed gates.  

 Due to the diverse constructions of $U(\btheta)$ and $U_E(\cdot)$, it necessitates to unify various QCs into the same framework to obtain the generic results. Notably, the unified QC should be \textit{agnostic to} particular forms of these two terms and capable of addressing both the under- and over-parameterized regimes. Note that the definition of over-parameterization varies in the literature when considering $\RERM$ and $\RGENE$ (see SM B for details). For this reason, we define over-parameterization as follows. 
\begin{definition}[Over-parameterization of QCs]\label{def:over-para} 
The over-parameterization regime of QCs is 
	$N_t>n$ in terms of generalization and when $U(\btheta)$ forms a 2-design in terms of trainability.
\end{definition}
 
\noindent To satisfy the above requirements, a feasible way is rewritten $h_Q(\cdot, U(\btheta), O^{(k)})$ as  
\begin{equation}\label{eqn:hypo-qc-uni}
h_Q(\bxik, U(\btheta), O^{(k)}) :=  \Tr(\rhoik o^{(k)}),~\forall k\in[K],
\end{equation}
where $O^{(k)}=\mathbb{I}_{2^{N-D}}\otimes o^{(k)}$ with the nontrivial local operator $o^{(k)}\in \mathbb{C}^{2^{D}\times 2^{D}}$, $D$ describes the locality with $2^D\geq K$, and $\rho^{(i,k)}= \Tr_D(U(\btheta)\sigma(\bxik)U(\btheta)^{\dagger})$ corresponds to the state before measurements, named as \textit{feature state}. See Fig.~\ref{fig:scheme}(b) for an intuition.

We now exploit the unified framework to analyze the expected risk of QCs. Let $\bm{\rho}=\{\rhoik\}$ and $\bm{o}=\{o^{(k)}\}$ be two sets collecting all feature states and measure operators. The following theorem exhibits properties of $\bm{\rho}$ and   $\bm{o}$ in which QCs achieve a low expected risk, where the formal statement and the proof are deferred to SM~C. 
 \begin{theorem}[informal]\label{thm:opt_learnability_QC}
 Following notations in Eqs.~(\ref{eqn:gene_obj_func})-(\ref{eqn:hypo-qc-uni}), the global optimizer $(\bm{\rho}^*,\bm{o}^*)$ that can reach $\RERM\rightarrow 0 $  satisfies the following properties: (i) the feature states have the vanished variability in the same class; (ii) all feature states are equal length and are orthogonal in the varied classes; (iii) any feature state is alignment with the measure operator in the same class. Moreover, when the  train data size is $n \gg O(KN_{ge}\log \frac{KN_g}{\epsilon\delta})$ with $\epsilon$ being the tolerable error, with probability $1-\delta$, the expected risk of this optimal QC tends to be zero, i.e., $\ROPT(\hath_Q) \rightarrow 0$. 
\end{theorem}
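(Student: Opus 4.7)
The plan is to split the claim into two essentially decoupled parts: (A) a structural characterization of any $(\bm{\rho}^*, \bm{o}^*)$ that drives $\RERM$ to zero, from which properties (i)--(iii) follow; and (B) a covering-number style generalization bound that converts vanishing $\RERM$ into vanishing $\ROPT$ with the stated sample size. Properties (i)--(iii) make no reference to $\btheta$, so Part (A) can be conducted entirely at the level of feature states and measurement operators, while Part (B) only requires Lipschitz dependence of $h_Q$ on the gate parameters.

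For Part (A) I would start from the squared-error loss with one-hot labels $\by^{(k)}=\bm{e}_k$. Requiring $\mathcal{L}(h_Q^*, \mathcal{D}) \to 0$ coordinate-wise forces
\begin{equation}
\Tr(\rho^{*(i,k)} o^{*(k')}) = \delta_{k,k'}, \quad \forall i,k,k'.
\end{equation}
Combining the diagonal equations $\Tr(\rho^{*(i,k)} o^{*(k)}) = 1$ with $\rho^{*(i,k)}\succeq 0$, $\Tr(\rho^{*(i,k)})=1$, and the paper's normalization on $o^{*(k)}$ (bounded spectrum), Cauchy--Schwarz together with the extremal characterization of the Helstrom bound forces every $\rho^{*(i,k)}$ to live in the top-eigenspace of $o^{*(k)}$ at eigenvalue $1$. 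Intersecting this with the off-diagonal equations $\Tr(\rho^{*(i,k)} o^{*(k')})=0$ for $k\neq k'$ pins all within-class feature states to a common class mean $\bar{\rho}^{*(k)}$, giving (i), and makes the $K$ class means mutually orthogonal PSD operators in the $D$-local factor (recall $2^D\geq K$), giving (ii). The alignment statement (iii) is then a direct restatement of the constraint equations. This is exactly the ETF/orthonormal geometry advertised in Fig.~\ref{fig:scheme}(c).

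For Part (B) I would apply the decomposition $\ROPT(\hath_Q) \leq \RERM(\hath_Q) + \RGENE(\hath_Q)$ and uniformly control $\RGENE$ over $\mathcal{H}_Q$. The key observation is that $h_Q(\bx, U(\btheta), O^{(k)})$ is Lipschitz in every gate parameter of $U_E(\bx)$ and $U(\btheta)$ under the operator norm, so only the $N_{ge}$ tunable encoding gates and the $N_t$ trainable ansatz gates contribute to the metric entropy of $\mathcal{H}_Q$; the fixed gates enter only as constants. Building an $\varepsilon$-net of size $(N_g/\varepsilon)^{\mathcal{O}(N_{ge})}$ on the parameter space, applying Hoeffding on each net point (using $|\ell|\leq \mathcal{O}(1)$), and union-bounding over the net and the $K$ output coordinates yields
\begin{equation}
\RGENE(\hath_Q) \;\lesssim\; \sqrt{\frac{KN_{ge}\log(KN_g/(\varepsilon\delta))}{n}}
\end{equation}
with probability at least $1-\delta$. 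Inverting for $n$ to drive this below $\epsilon$ reproduces the stated threshold $n \gg \mathcal{O}(KN_{ge}\log(KN_g/(\epsilon\delta)))$, so combined with Part (A) we conclude $\ROPT(\hath_Q)\to 0$.

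The main obstacle will be in Part (A): extracting the rigid geometric statements (i)--(iii) from the scalar identities $\Tr(\rho^{*(i,k)} o^{*(k')}) = \delta_{k,k'}$ requires excluding degenerate optima in which the $o^{*(k)}$ are not orthogonal projectors, and handling the case $2^D > K$ where the optimizer is only unique up to an isometry within an unused subspace. I expect to resolve this by explicitly enforcing the paper's spectral normalization on $\bm{o}$ and then using the fact that $\Tr(\rho o)=1$ with $\rho,o\preceq \mathbb{I}$ collapses $\rho$ onto the unit-eigenspace of $o$, which, iterated across classes, forces orthogonality and the ETF structure. A secondary but easier technical point in Part (B) is isolating the logarithmic $\log N_g$ rather than linear $N_g$ dependence, which follows because the encoding unitary enters the hypothesis only through the tunable subset of gates of cardinality $N_{ge}$.
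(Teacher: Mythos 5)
Your Part (B) is where the proposal genuinely diverges from the paper, and as sketched it does not deliver the stated bound. You propose uniform convergence over $\mathcal{H}_Q$ with an $\varepsilon$-net of size $(N_g/\varepsilon)^{\mathcal{O}(N_{ge})}$ ``on the parameter space'', but the hypothesis class is parameterized by the $N_t$ trainable ansatz parameters, not by the encoding gates: the encoding angles are functions of the data $\bx$ and determine the complexity of the \emph{input} space, not of $\mathcal{H}_Q$. A correct parameter-space net therefore scales with $N_t$ and reproduces the expressivity-type bound $\mathcal{O}(\sqrt{N_t/n})$ that is vacuous in the over-parameterized regime $N_t>n$ of Definition~\ref{def:over-para} --- precisely the regime the theorem is meant to cover. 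The paper instead proves Lemma~\ref{thm:gene_robust_QNN} via algorithmic robustness \cite{Xu2010Robustness,kawaguchi2022robustness}: it covers the encoded-state space $\mathcal{X}_Q$, whose covering number is $(28N_{ge}/\epsilon)^{4^m N_{ge}}$ because the data enters only through the $N_{ge}$ tunable encoding gates, and obtains a bound involving the number of occupied cells $|\mathcal{T}_{\mathcal{D}}|$. The factor $K$ in the final sample complexity $n\gg \mathcal{O}(KN_{ge}\log(KN_g/(\epsilon\delta)))$ comes from $|\mathcal{T}_{\mathcal{D}}|=K$, which holds \emph{because} of the within-class collapse in condition (i); so Parts (A) and (B) are coupled rather than ``essentially decoupled'' as you assume, and your $K$ (a union bound over output coordinates) does not reproduce this algorithm-dependent reduction.

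In Part (A) your starting point (zero loss forces $\Tr(\rho^{*(i,k)}o^{*(k')})=\delta_{k,k'}$) matches the paper's Lemma~\ref{thm:condit-NC-QNN}, but the passage from these scalar identities to collapse (i) and orthogonality (ii) is exactly the gap you flag, and your proposed fix does not close it: knowing that each $\rho^{*(i,k)}$ lies in the unit eigenspace of $o^{*(k)}$ and is trace-orthogonal to the other $K-1$ operators still leaves a high-dimensional set of admissible states whenever the $K$ operators do not determine an operator in $\mathbb{C}^{2^D\times 2^D}$ (e.g., the paper's own Pauli-string measurements with $D=2$, $K=9$), so the within-class states need not coincide. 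The paper closes this in two ways you do not anticipate: for fixed measurements it additionally assumes the set $\bm{o}$ is mutually orthogonal and complete (its vectorizations resolve the identity up to a factor $B$), which lets the trace constraints pin down $\bar{\rho}^{*(k)}$ and yields (ii) from (iii); for fully tunable $\bm{o}$ it adds Frobenius regularizers (Lemma~\ref{prop:gene-Qc-uncons}), where the Jensen and AM--GM equality conditions force the collapse and the alignment $o^{*(k)}\propto\bar{\rho}^{*(k)}$, but then the attainable risk is $C_1^2/2$ rather than exactly zero. Your variant --- both $\bm{\rho}$ and $\bm{o}$ tunable, no regularizer, exact zero loss plus rigid geometry --- is not what the paper proves and would need an assumption of this completeness type to be made rigorous.
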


\noindent Conditions (i)-(iii) visualized in Fig.~\ref{fig:scheme}(c) sculpt the geometric properties of $\bm{\rho}^*$ and $\bm{o}^*$ achieving $\RERM\rightarrow 0$. The mean feature states of each class $\{\bar{\rho}^{*(k)}\}$ compose the orthogonal frame and Condition (iii) suggests that the optimal measure operators $\{o^{*(k)}\}$ also satisfy this orthogonal frame \footnote{For any class $k\in [K]$, the mean feature state is defined by $\bar{\rho}^{*(k)}= (\rho^{*(1,k)} +\rho^{*(2,k)}+...+ \rho^{*(n_c,k)})/n_c$}. Since any orthogonal frame can trivially be turned into a general simplex equiangular tight frame (ETF) \cite{papyan2020prevalence} by reducing its global mean, it can be concluded that $\{\bar{\rho}^{*(k)}\}$ or $\{o^{*(k)}\}$ forms the general simplex ETF. Note that when $2^D=K$, the orthogonal frame is equivalent to a formal ETF \cite{sustik2007existence}. Building on the extensive research on general simplex and formal ETFs in deep learning and quantum information, we subsequently explore the intrinsic connection between QCs and deep neural classifiers and study the power of QCs in the view of quantum state discrimination (refer to SM~H for the omitted definitions and explanations). 

In the context of deep learning, Refs.~\cite{papyan2020prevalence,liu2023inducing,yang2022we} proved that for a deep neural classifier with zero training loss, its last-layer features also form a general simplex ETF, dubbed neural collapse. In this regard, QCs and deep neural classifiers experience  similar learning behaviors, in which the corresponding features tend to form a general simplex ETF to reach zero training loss.

We next understand the results of Theorem \ref{thm:nogonc} from the perspective of quantum state discrimination \cite{bae2015quantum}. The setting $2^D\geq K$ in Eq.~(\ref{eqn:hypo-qc-uni}) ensures that the global optimizer $(\bm{\rho}^*,\bm{o}^*)$ in Theorem \ref{thm:nogonc} maximizes the Helstrom bound, i.e., for any two varied classes, $o^{(k)}$ and $o^{(k')}$ classify $\bar{\rho}^{*(k)}$ and  $\bar{\rho}^{*(k')}$ with probability $1$. This observation explains the ultimate limit of QCs observed in \cite{zhang2021fast}. Intriguingly, Ref.~\cite{banchi2021generalization} achieved the similar geometric properties in the view of information theory.

The maximized Helstrom bound when $2^D\geq  K$ hints that the locality $D$ of $\{o^{*(k)}\}$ should be carefully selected. Particularly, although the construction of $\{o^{*(k)}\}$ is flexible in QCs, a large $D$ may incur the barren plateaus  \cite{cerezo2020cost,sack2022avoiding}. To this end, it is interesting to explore the properties of QCs when $2^D<K$. In this case, achieving zero error probability in discriminating different feature states is unattainable, supported by the results of \cite{montanaro2008lower}. Moreover, the feature states of optimal QCs with $\RERM\rightarrow 0$ tend to form formal ETFs rather than general simplex ETFs, approaching the corresponding lower bound. Notably, unlike general simplex ETF always exists for any $D$, formal ETFs arise for very few pairs $(2^D, K)$ \cite{sustik2007existence}. A possible solution is using symmetric informationally complete   POVM to attain the lower bound of the error probability  \cite{renes2004symmetric,scott2006tight,garcia2021learning}, since it is a special case of formal ETF with $2^D=\sqrt{K}$.

On the technical side, we prove that the generalization error  of QCs is upper bounded by
\begin{equation}\label{eqn:risk_curve}
 \RGENE(\hath_Q) \leq  \tilde{O}(K\epsilon + \sqrt{{g(\RERM(\hath_Q), n, K)}/{n}}),   
\end{equation}  
where $g(\cdot)$ decreases from $n$ to $K$ when $\RERM(\hath_Q)\rightarrow 0$ and $\epsilon$ is the tolerable error. Connected with  Eq.~(\ref{Eqn:opt_risk}), we prove $\ROPT(\hath_Q) \rightarrow 0$ by separately showing that Conditions (i)-(iii) lead to $\RERM\rightarrow 0$ and $n \gg O(KN_{ge}\log \frac{KN_g}{\epsilon\delta})$ warrants $\RGENE\rightarrow 0$. The derived bound for $\RGENE$  surpasses prior results because it is the first non-vacuous bound in the over-parameterized regime  of Definition \ref{def:over-para}. Namely, previous generalization bounds are algorithm-independent and reflect the influence of expressivity \cite{caro2021generalization,du2022efficient,gyurik2021structural}, which causes $\RGENE\leq O(N_t/n)$. Accordingly, these bounds are vacuous in the over-parameterized regime with $N_t>n$. By contrast, our bound ensures a non-vanishing generalization error even $N_t>n$, since it is algorithmic-dependent and not explicitly relying on $N_t$.

According to above analysis, the challenges in satisfying Conditions (i)-(iii) and the well controlled generalization error pinpoint that the risk of a QC is mostly dominated by its empirical loss. As such, the core in devising QCs is tailoring $\mathcal{H}_Q$ and adopt advanced optimization techniques so that $\hath_Q$ can fulfill Conditions (i)-(iii).  For example, when $K$ is large, Pauli-based measurements are preferable, which allows classical shadow  techniques to accelerate the training procedure \cite{huang2020predicting,huang2022learning}.

\noindent \textit{Remark}. Although the results related to the zero training loss improve the interpretability of QCs, exact satisfaction of this condition in the realistic scenario may be difficult and unrealistic in practice. In SM~H, we discuss the expected risk of QCs under the approximate satisfaction.

 \begin{figure}
  \centering
  \includegraphics[width=0.49\textwidth]{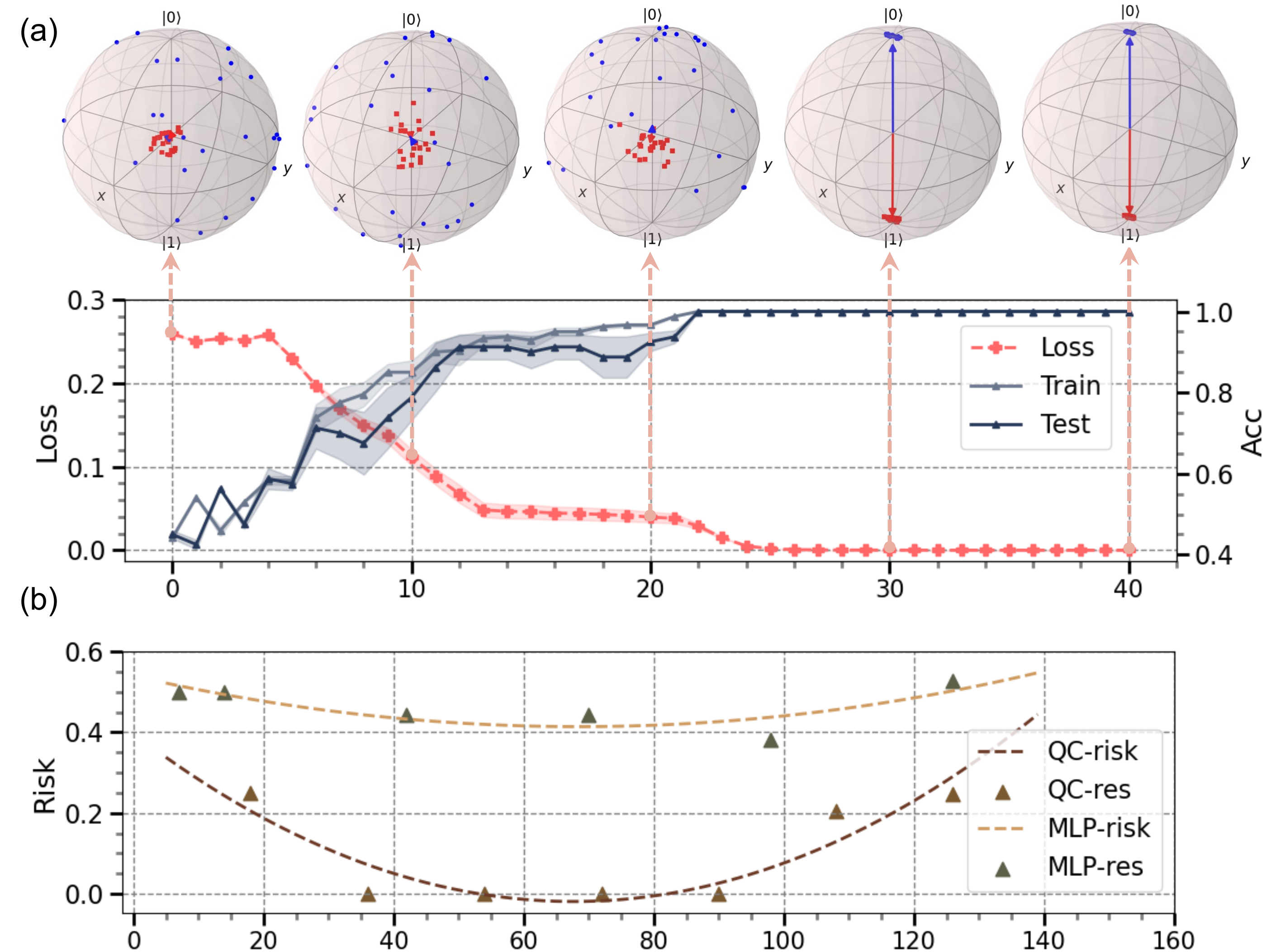}
  \caption{\small{\textbf{Binary classification on the parity dataset.} (a) The learning dynamic of QC. The $x$-axis denotes the epoch numbers. Shaded region represents variance. The Bloch spheres display the quantum feature states at different epochs.  (b) The fitted risk curve of QC and MLP. The $x$-axis denotes the number of trainable parameters. The label `\textit{QC-risk}' (`\textit{MLP-risk}') refers to the fitted risk curve of QC and MLP. The label `\textit{QC-res}' (`\textit{MLP-res}') refers to the collected results used for fitting the curves.}}
  \label{fig:sim-synthetic}
 \end{figure}

\medskip
\noindent\textit{\textbf{U-shaped risk curve.}}--- The risk curve concerns how the risk of a classifier behaves with the varied hypothesis space. It is desired that as with CCs, QCs follow a double-descent risk curve. If so, over-parameterization in Definition \ref{def:over-para} could serve as a golden law in QCs' design \footnote{As previously mentioned, our primary focus lies on the expected risk, which serves as a sufficient condition for the success of QCs. In scenarios where classification is the sole determinant of power, over-parameterized QCs may also achieve perfect classification accuracy.}. However, the corollary below refutes this conjecture.  
\begin{corollary}\label{thm:nogonc}
Following notations in Theorem \ref{thm:opt_learnability_QC}, when $\{U_E(\bx)|\bx\in \mathcal{X}\}$  forms a 2-design, with probability $1-\delta$, the empirical QC follows  $|	\Tr\left(\sigma(\bxik)\sigma(\bx)\right) - {1}/{2^N}| \leq \sqrt{{3}/{(2^{2N}\delta)}}$. When $\{U(\btheta)|\btheta\in \Theta\}$  forms a 2-design, with probability $1-\delta$, the empirical QC follows $|\Tr(\rhoik o^{(k')}) -\frac{\Tr(o^{(k')})}{2^{D}} |  < \sqrt{\frac{\Tr(o^{(k')})^2 +  2\Tr((o^{(k')})^2)}{2^{2D} \delta }}$.
\end{corollary}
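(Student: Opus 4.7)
Both bounds are concentration inequalities, and my plan is to compute the first two moments of the relevant random variable using the 2-design assumption and then invoke Chebyshev's inequality in the form $\Pr(|X - \mathbb{E}[X]| \geq \sqrt{\mathrm{Var}(X)/\delta}) \leq \delta$.

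For the first inequality, set $\rho_0 = (|0\rangle\langle 0|)^{\otimes N}$ and write $\Tr(\sigma(\bxik)\sigma(\bx)) = \Tr(V\rho_0 V^{\dagger}\sigma(\bx))$ with $V \equiv U_E(\bxik)$, treating $\sigma(\bx)$ as fixed and $V$ as the 2-design random variable. The 1-design identity $\mathbb{E}_V[V\rho_0 V^{\dagger}] = \mathbb{I}/2^N$ gives mean $\Tr(\sigma(\bx))/2^N = 1/2^N$. The 2-design twirl $\mathbb{E}_V[V^{\otimes 2}\rho_0^{\otimes 2}(V^{\dagger})^{\otimes 2}] = 2\Pi_{\mathrm{sym}}/(2^N(2^N+1))$ contracted with $\sigma(\bx)^{\otimes 2}$ yields $\mathbb{E}[X^2] = (\Tr(\sigma(\bx))^2 + \Tr(\sigma(\bx)^2))/(2^N(2^N+1)) \leq 2/(2^N(2^N+1))$. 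Subtracting the squared mean gives a variance of order $1/2^{2N}$, and Chebyshev then produces the advertised tail with a small explicit constant absorbed into the factor $3$.

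For the second inequality, rewrite $\Tr(\rhoik o^{(k')}) = \Tr(U(\btheta)\sigma(\bxik)U(\btheta)^{\dagger}O^{(k')})$ with $O^{(k')} = \mathbb{I}_{2^{N-D}}\otimes o^{(k')}$, and treat $U(\btheta)$ as the 2-design random variable with $\sigma(\bxik)$ fixed. The 1-design property again gives mean $\Tr(O^{(k')})/2^N = \Tr(o^{(k')})/2^D$. For the second moment, apply the 2-design identity $\mathbb{E}_U[U^{\otimes 2} (\sigma(\bxik)\otimes\sigma(\bxik)) (U^{\dagger})^{\otimes 2}] = \alpha \mathbb{I} + \beta S$, whose coefficients are fixed by $\Tr(\sigma(\bxik))$ and $\Tr(\sigma(\bxik)^2)$ (both equal to $1$ since $\sigma(\bxik)$ is pure), and combine with $\Tr((O\otimes O)(\mathbb{I}+S)) = \Tr(O)^2 + \Tr(O^2)$ to obtain $\mathbb{E}[X^2] = (\Tr(O^{(k')})^2 + \Tr((O^{(k')})^2))/(2^N(2^N+1))$. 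Substituting $\Tr(O^{(k')}) = 2^{N-D}\Tr(o^{(k')})$ and $\Tr((O^{(k')})^2) = 2^{N-D}\Tr((o^{(k')})^2)$ and bounding crudely yields $\mathrm{Var}(X) \leq (\Tr(o^{(k')})^2 + 2\Tr((o^{(k')})^2))/2^{2D}$, after which Chebyshev delivers the stated inequality.

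The main obstacle is the partial-trace bookkeeping in the second bound: one has to propagate the $2^{N-D}$ factors coming from the idle qubits and then fold the resulting expression $(d\Tr(O^2) - \Tr(O)^2)/(d^2(d+1))$ into a form depending only on $o^{(k')}$ and $2^D$ rather than $2^N$, which involves several layers of crude but careful bounding in order to match the exact prefactors in the corollary. Conceptually, however, everything reduces to standard Haar-integration plus Chebyshev. The physical content of the corollary is then that sufficiently expressive encoders force all pairwise feature-state overlaps toward the uniform value $1/2^N$, while sufficiently expressive ansatzes force all measurement expectations toward the class-independent value $\Tr(o^{(k')})/2^D$, so in either 2-design regime the class-separating geometry required by Theorem~\ref{thm:opt_learnability_QC} is unreachable and no double-descent second dip can occur.
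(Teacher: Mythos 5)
Your proposal is correct and follows essentially the same route as the paper's proof in SM~F: compute the first and second moments of $\Tr(\sigma(\bxik)\sigma(\bx))$ and $\Tr(\rhoik o^{(k')})$ under the 2-design assumption (you use the symmetric-projector form of the pure-state twirl, the paper uses the equivalent explicit Haar moment formulas of Lemmas F1--F2), and then apply Chebyshev's inequality; your crude variance bounds indeed land at or below the stated constants. The only cosmetic difference is that for the first bound you treat $U_E(\bxik)$ as the random 2-design element with $\sigma(\bx)$ fixed, while the paper randomizes over $\bx$, which is immaterial by symmetry of the trace.
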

\noindent The proof is deferred to SM~B. The achieved results reveal the caveat of deep QCs. When $U_E(\bx)$ is deep, two encoded states $\sigma(\bxik)$ and $\sigma(\bm{x}^{(i',k)})$ from the same class tend to be orthogonal. Besides, QC's output with deep $U(\btheta)$ concentrates to zero, regardless how $o^{(k')}$ and $\rhoik$ are selected. This violates Condition~(iii). Overall, in conjunction with Eqs.~(\ref{Eqn:opt_risk}) \& (\ref{eqn:risk_curve}), over-parameterization increases $\RERM$  and thus $\ROPT$, which suggests that QCs experience a \textit{U-shaped risk curve}. This phenomenon aligns with the observations made in Ref.~\cite{banchi2021generalization}, where the presence of a U-shaped curve in QCs was uncovered using information theory tools. The U-shaped curve hints the varied design strategies for QCs and variational quantum Eigensolvers, since the latter can benefit from over-parameterization    \cite{liu2022laziness,liu2022analytic,wang2022symmetric,you2022convergence}. Moreover, when the employed $U(\btheta)$ forms a 2-design, QCs not only experience barren plateaus during the training phase but also flatten the entire loss landscape, leading that the global minima of loss function must be greater than zero. Therefore, the rule of thumb in QCs' construction is slimming $\mathcal{H}_Q$ to find the valley region, which echoes with  quantum metric learning and quantum self-supervised learning \cite{lloyd2020quantum,nghiem2021unified,larose2020robust,jaderberg2022quantum,Yang2022Analog}.  

\medskip
\noindent\textbf{\textit{Probe power of QCs via loss dynamics.}}--- The distinct tendency of the risk curves between QCs and CCs indicates a distinct way to improve their performance. As shown in Fig.~\ref{fig:scheme}(a), given a specific dataset and a specified ansatz, the evolution of the risk curve with the hypothesis space is dominated by the number of parameters $N_t$. In other words, it is desirable to find an optimal $N_t$ whose expected risk is lower than other settings. The proved learning behavior of QCs allows us to effectively fit their risk curve according to the loss dynamics and estimate a near-optimal $N_t$ whose  expected risk is around the basin.   Specifically, our method contains three steps. First, $W$ tuples of $\{n, N_t, T\}$ are initialized so that the collected risk points of QC span the basin area. Second, we execute QC  under these $W$ hyper-parameter settings and fit the loss dynamics to attain the risk curve. Last, we use the fitted risk curve to estimate $N_t$ corresponding to the basin. Note that the proposed method is complementary to the recent results in geometric QNN. Refer to See SM~I for elaborations.

\medskip
\noindent\textbf{\textit{Numerical results}.}--- We conduct numerical simulations to exhibit that the power of QCs on different classification tasks can be interpreted by the derived risk curve and feature states.  The omitted construction details and results are deferred to SM~G.

 \begin{figure}[!t]
	\centering
\includegraphics[width=0.49\textwidth]{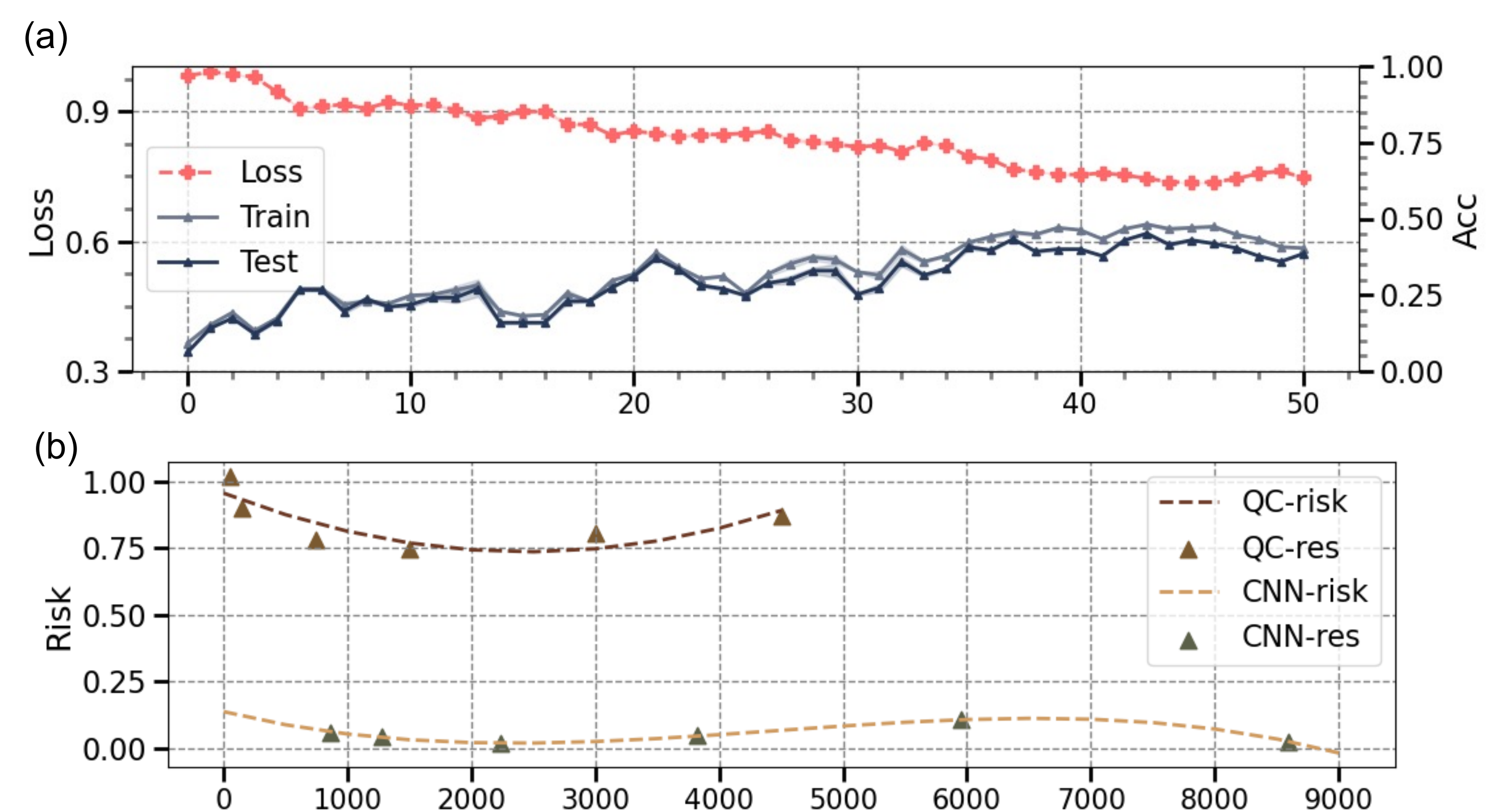}
	 \caption{\small{\textbf{Multi-class classification on the image dataset with $K=9$.} (a) The learning performance of QC when the layer number is $50$.    (b) The fitted risk curve of QC and CNN. All labels have the same meaning with those used in Fig.~\ref{fig:sim-synthetic}.}}
  \label{fig:sim-FMNIST}
\end{figure}

We first apply QC to accomplish the binary classification on the parity dataset \cite{cross2015quantum,riste2017demonstration,sen2022variational}. The number of qubits is $N=6$ and the hardware-efficient ansatz is adopted to realize $U(\btheta)$. The gradient descent method is used to update $\btheta$. Two measure operators are $o^{(1)}=\ket{0}\bra{0}$ and $o^{(2)}=\ket{1}\bra{1}$. The simulation results of QC with $N_t=54$ are displayed in Fig.~\ref{fig:sim-synthetic}(a). The averaged train (test) accuracy steadily grows from $44.1\%$ to $100\%$  within $22$ epochs, and the corresponding loss decreases from $0.26$ to $4\times 10^{-5}$. The dynamics of the feature states $\{\rho^{(i, t)}\}$ visualized by Bloch spheres echo with the theoretical analysis. Besides, QC becomes more robust when we continue the training \footnote{Although the train (test) accuracy reaches the optimum, the loss can be further reduced and suggests a lower risk warranted by Theorem \ref{thm:opt_learnability_QC}.}. We further compare the risk curve between QC and  multilayer Perceptron (MLP) by fitting their risk curves following the proposed method. As shown in Fig.~\ref{fig:sim-synthetic}(b), QC clearly outperforms MLP when $N_t \in [20, 140]$ and its valley is around $N_t=70$ \footnote{The disappeared  double-descent curve of MLP is caused by the limited train data. In other words, over-parameterization and sufficient train data are two necessary conditions to induce the double-descent curve, while parity dataset can only provide limited train data.}.

We then apply QC to learn the Fashion-MNIST image dataset with $K=9$ \cite{xiao2017online}. The employed number of qubits is $N=10$ and the Pauli-based measure operators are employed. Convolutional neural networks (CNN) is exploited as the reference. For all classifiers, the number of epochs is fixed to be $T=50$ and $N_t \in [60, 9000]$. Each setting is repeated with $3$ times. As shown in Fig.~\ref{fig:sim-FMNIST}, when $N_t=1500$, both train and test accuracies of QC are about $50\%$, which is inferior to CNN under the similar setting. To explore the potentials of QC, we compare their risk curves. As shown in Fig.~\ref{fig:sim-FMNIST}(b), unlike the parity dataset, QC is evidently inferior to CNN on Fashion-MNIST dataset.

\medskip 
\noindent \textit{\textbf{Discussions.}}--- We understand the power of QCs in terms of the expected risk and exhibit that the efficacy of QCs is dependent on the problem at hand. Leveraging  the derived U-shaped risk curve, we present a concise technique to enhance the performance of a given quantum classifier by fitting its loss dynamics. There are several interesting future research directions. First,   is it necessary to redesign QCs such as nonlinear QCs \cite{schuld2022quantum,holmes2021nonlinear} that can also exhibit a double-descent risk curve? Second, it is intriguing to extend the developed non-vacuous generalization error bound of QCs to other scenarios to identify potential quantum advantages.  
 
 \begin{acknowledgements}
The authors thank Xinbiao Wang for valuable input and inspiring discussions.
\end{acknowledgements}

\onecolumngrid

\appendix 

\renewcommand{\appendixname}{SM}

 \renewcommand\thefigure{\thesection.\arabic{figure}}    
\bigskip
The organization of the supplementary materials (SM) is as follows. In SM~\ref{append:risk-adv}, we discuss why the expected risk is an appropriate measure to comprehend the power of quantum classifiers. In SM~\ref{sec:preliminary}, we present definitions of over-parameterization in classical and quantum machine learning.  Subsequently, in SM \ref{append:thm:opt_learnability_QC-thm3}, we show the results related to the proof of Theorem 1. Two core lemmas used in the proof of Theorem 1 are demonstrated in SM \ref{append:sec:pro1}  and SM \ref{append:sec:proof-thm1}, respectively. Next, in SM \ref{append:sec:gene_QNN}, we exhibit the proof of Lemma C3. Then, we provide the proof of Corollary 1 in SM~\ref{append:proof-them4}. More details for the implications of Theorem 1 are elucidated in SM~\ref{append:sec:imp-thm1}.  In SM \ref{append:sec:alg-imp}, we elaborate on the proposed method to probe and enhance the power of quantum classifiers. In the end, we illustrate  the details of numerical simulations in SM \ref{append:sec:sim-res}.

\tableofcontents

\section{Different measures of the power of quantum machine learning models}\label{append:risk-adv}

In the main text, we investigate the power of quantum classifiers (QCs) by focusing on the scenario where the expected risk tends to zero \cite{mohri2018foundations}.  However, it is noteworthy    that the power of classifiers cannot be captured by a single measure alone \cite{schuld2022quantum}. Various measures, including asymptotic runtime \cite{harrow2009quantum}, sample complexity \cite{arunachalam2017guest}, and expected risk, contribute to our understanding of the power of QCs. In particular, the expected risk plays a fundamental role and has been leveraged to assess the performance of QCs in previous research. For instance, Ref.~\cite{sharma2022reformulation} presents a quantum version of the no-free-lunch theorem, demonstrating that entangling the input state with a reference system can lead to a lower expected risk compared to using non-entangled input states in unitary learning tasks. Ref.~\cite{huang2021information} shows that classical and quantum learning models exhibit similar performance, as measured by the expected risk, when predicting outcomes of physical experiments.

While the expected risk is of great importance, it is worth noting that achieving a vanished expected risk is a sufficient condition rather than a necessary one for the success of QCs. In practice, both QCs and deep neural classifiers can achieve perfect training classification accuracy even when their empirical risk is non-zero. Therefore, if classification accuracy is the sole measure of the power of QCs, a zero empirical risk is not necessarily required.

However, when considering measures of power beyond classification accuracy, the vanished expected risk becomes significant from both theoretical and practical perspectives. In the context of adversarial learning \cite{goodfeloow2014explain}, the robustness of the classifier is a crucial concern in the sense that a trained QC is expected to to maintain its predictions even when the input is slightly perturbed by an attacker. Our results demonstrate that continuously optimizing QC from perfect classification to perfect training enhances their adversarial robustness by maximizing the distance between examples with different labels. This finding aligns with the field of quantum adversarial learning \cite{lu2020quantum,ren2022experimental} and provides valuable insights into the practical utility of QCs in domains where robustness is critical, such as finance and healthcare.

Furthermore, empirical studies have shown that further improvement in test accuracy can be achieved by continuing the optimization of the classifier after achieving perfect training accuracy \cite{papyan2020prevalence,belkin2019reconciling,xu2023dynamics}. Halting optimization once the training accuracy is perfect may result in suboptimal test accuracy, while continuing optimization can be time-consuming. Thus, understanding the number of essential optimization steps is crucial for designing more efficient optimization methods that yield classifiers with good test accuracy and reduced runtime costs. The results presented in Theorem 1 partially address this knowledge gap by shedding light on the consequences of continuously training quantum classifiers even after reaching zero training classification error. We acknowledge that these results may establish on unrealistic assumptions,  whereas they improve the interpretability of QCs.

\section{More explanations of over-parameterization in classical and quantum machine learning}\label{sec:preliminary}
 
	In this section, we provide a recapitulation of the various definitions of over-parameterization that are used in classical and quantum machine learning communities. Subsequently, we elucidate how these definitions correlate with the over-parameterization of Definition 1  utilized in this work.

The varied definitions of over-parameterization are summarized in Table~\ref{tab:over-para-sum}. Particularly, in the context of deep learning, over-parameterization typically refers to the scenario where the number of trainable parameters, denoted as $N_t$, is significantly larger than the number of training samples, denoted as $n$, i.e., $N_t \gg n$ \cite{zhang2021understanding,allen2019learning,allen2019convergence}. However, in the realm of quantum machine learning, the definition of `over-parameterization' is varied depending on whether it is applied to optimization or learning tasks. In optimization tasks, such as estimating the ground energy of an $N$-qubit Hamiltonian $H$, over-parameterization of quantum neural networks (QNNs) may be defined as $N_t \sim O(\exp(N))$ for problem-agnostic ansatz  \cite{liu2023analytic}, or as $N_t \sim \Omega(d_{\text{eff}})$ for problem-informed ansatz \cite{larocca2021theory,you2022convergence,wang2022symmetric,sauvage2022building}, where $d_{\text{eff}}$ is the effective dimension and could be $d_{\text{eff}} \sim O(poly(N))$ for Hamiltonians with favorable  symmetric properties. In learning tasks, there are two different versions of `over-parameterization' in the literature. In Ref.~\cite{you2023analyzing}, the condition of over-parameterization is defined as $N_t \rightarrow \infty$ to analyze the convergence of QNNs in classification tasks through the lens of quantum neural tangent kernel. On the other hand, when prior information of the explored dataset is exploited to design QNNs, also known as geometric QNNs \cite{larocca2022group,meyer2023exploiting,ragone2022representation,nguyen2022theory}, Ref.~\cite{schatzki2022theoretical} demonstrates that the over-parameterization condition is $N_t \sim O(N^3)$ with $N$ being the number of qubits in QNNs. Notably, Ref.~\cite{schatzki2022theoretical} shows that the generalization error bound of geometric QNNs is $O(N^3/n)$, which becomes vacuous when $N^3>n$ (see SM.~\ref{append:sec:imp-thm1} for explanations). 

\begin{table}[h!]
\centering
\caption{\small{\textbf{Summary of definitions of over-parameterization in classical and quantum machine learning}. The labels `Q-optimization' and `Q-learning' denote when QNNs are applied to optimization and learning tasks, respectively. The labels `agnostic' and `informed' refer that the employed ansatz of QNN is agnostic of the explored problem and problem-dependent, respectively.  Notations $N_t$ and $n$ refer to the number of trainable parameters and the number of training examples, respectively. The notation $d_{\text{eff}}$ denotes the effective dimension of an $N$-qubit Hamiltonian. The notation $N$ represents the number of qubits.}}
\label{tab:over-para-sum}
\resizebox{0.85\textwidth}{!}{%
\begin{tabular}{|c|c|c|c|c|c|}
\hline
\begin{tabular}[c]{@{}c@{}}DNN \\  Refs.~\cite{zhang2021understanding,allen2019learning,allen2019convergence} \end{tabular} &
  \begin{tabular}[c]{@{}c@{}}Q-optimization\\ (agnostic)\\ Ref.~\cite{liu2023analytic} \end{tabular} &
  \begin{tabular}[c]{@{}c@{}}Q-optimization\\ (informed)\\ Refs.~\cite{larocca2021theory,you2022convergence,wang2022symmetric,sauvage2022building}\end{tabular} &
  \begin{tabular}[c]{@{}c@{}}Q-learning\\ (agnostic)\\ Ref.~\cite{you2023analyzing} \end{tabular} &
  \begin{tabular}[c]{@{}c@{}}Q-learning\\ (informed)\\ Ref.~\cite{schatzki2022theoretical} \end{tabular} &
  Our work \\ \hline
$N_t \gg n$ &
  $N_t\sim O(\exp(N))$ &
  $N_t \geq \Omega(d_{\text{eff}})$ &
  $N_t \sim O(\exp(N))$ &
  $N_t \sim O(poly(N))$ &
  \begin{tabular}[c]{@{}c@{}}$N_t >n$ \\ and\\  $N_t > O(\text{poly}(N))$\end{tabular} \\ \hline
\end{tabular}%
}
\end{table}

We will now elucidate how these diverse definitions of over-parameterization have inspired the definition of over-parameterization  in Definition 1 of the main text. Recall that our work considers both the trainability and the generalization of QCs on multi-class classification tasks, i.e., these two quantities are defined as $\mathsf{R}_{\text{ERM}}$ and $\mathsf{R}_{\text{Gene}}$ in Eq.~(2) of the main text. Hence, the notion of over-parameterization in our work should encompass both of these two aspects simultaneously.  
\begin{itemize}
	\item From the perspective of generalization, over-parameterization in our work refers to the condition where the number of training parameters $N_t$  is greater than the number of training examples $n$, i.e., $N_t > n$. This is in line with prior results in the context of deep learning theory \cite{zhang2021understanding,allen2019learning,allen2019convergence} and the generalization of quantum neural networks \cite{caro2021generalization,du2022efficient,gyurik2021structural,cai2022sample}, which have shown that the generalization error of QNNs is upper bounded by $O(\sqrt{N_t/n})$. Therefore, the over-parameterized regime of QNNs in terms of generalization is when $N_t > n$, as in this case, previous generalization bounds become vacuous and fail to explain the generalization ability of QNNs.  
	\item From the perspective of trainability, over-parameterization in our work refers to the scenario where  the adopted ansatz $U(\btheta)$ forms a 2-design, implying that the concentration of measures occurs \cite{popescu2006entanglement,ledoux2001concentration}. The condition of reaching over-parameterization depends on the layout of the selected ansatz. For instance, when $U(\btheta)$ corresponds to the hardware-efficient ansatz, which is supported by results related to barren plateaus \cite{mcclean2018barren, cerezo2020cost}, the over-parameterized regime of QNNs in terms of trainability can be defined as $N_t > O(\text{poly}(N))$, where $N$ represents the number of qubits.      
\end{itemize}
To this end, the over-parameterization in Definition 1 of the main text has a two-fold meaning: the over-parameterization regime of QCs is  $N_t>n$ in terms of generalization and when $U(\btheta)$ forms a 2-design in terms of trainability.

\section{Details related to the proof of Theorem 1}\label{append:thm:opt_learnability_QC-thm3}

For convenience, let us first recall the settings and notations introduced in the main text. When QCs are applied to accomplish the multi-class classification task, the training dataset $\mathcal{D}$ contains $n$ examples and the number of examples in each class is the same with $n=n_cK$. Moreover, the per-sample loss is specified as the mean square error. The dimension of feature states is larger than the number of classes, i.e., $2^D\geq K$.

\medskip
We next introduce the formal description of Theorem 1. Following notations of Eq.~(1) in the main text, we first consider two loss functions for QCs. The first loss function considers the tunable measure operators $\bm{o}$ and the  regularizer $\mathfrak{E}(h)$  whose explicit form is 
\begin{equation}\label{append:eqn:loss_mse_reg}
\Loss(\bm{\rho}, \bm{o}) =  \frac{1}{2n}\sum_{i=1}^{n_c}\sum_{k=1}^K\left([\Tr(\rhoik o^{(k)})]_{k=1:K} - \byik \right)^2 + \frac{\lambda_{\rho}}{2} \sum_{i=1}^{n_c}\sum_{k=1}^K \|\rhoik \|_F^2 +  \frac{\lambda_{o}}{2}\sum_{j=1}^K \|o^{(j)}\|_F^2,  
\end{equation}
where $\byik$ is the unit basis whose $k$-th entry is $1$ for $\forall i\in[n_c]$, $\forall k\in[K]$, and $\lambda_{\rho}$ and $\lambda_o$ refer to hyper-parameters of regularizer yielding $\lambda_o\leq n_c\lambda_{\rho}$ and $C_1:=K\sqrt{n_c\lambda_{o}\lambda_{\rho}}\leq 1$. The second loss function considers the fixed measure operator and the non-regularizer case whose explicit form is
 \begin{equation}\label{append:eqn:loss_mse_reg_no}
\Loss(\bm{\rho}) =  \frac{1}{2n}\sum_{i=1}^{n_c}\sum_{k=1}^K\left([\Tr(\rhoik o^{(k)})]_{k=1:K} - \byik \right)^2 + \frac{\lambda_{\rho}}{2} \sum_{i=1}^{n_c}\sum_{k=1}^K \|\rhoik \|_F^2.  
\end{equation}

Under the above settings, the formal statement of Theorem 1 is as follows. 
\begin{theorem-non}[Formal statement of Theorem 1] 
When QC is optimized under the loss function $\Loss(\bm{\rho}, \bm{o})$ in Eq.~(\ref{append:eqn:loss_mse_reg}), the global minimizer  $(\bm{\rho}^*, \bm{o}^*)=\min_{\bm{\rho}, \bm{o}}\Loss(\bm{\rho}, \bm{o})$  that can reach $\RERM=C_1^2/2$  satisfies the following properties:  
\begin{equation}\label{append:eqn:thm-1-condi-rho-reg}
(i) \bar{\rho}^{*(k)}:=\rho^{*(1,k)}=...= \rho^{*(n_c,k)}; ~(ii) \Tr(\bar{\rho}^{*(k)} \bar{\rho}^{*(k')})=  (1-C_1) \sqrt{\frac{\lambda_{o}}{n\lambda_{\rho}}} \delta_{k,k'};~(iii) o^{*(k)}=\sqrt{\frac{n\lambda_{\rho}}{\lambda_{o}}} \bar{\rho}^{*(k)}.
\end{equation}
Moreover, when the size of train dataset satisfies $n\gg  O(KN_{ge}\log \frac{KN_g}{\epsilon\delta})$, 
with probability $1-\delta$, the expected risk of this optimal QC tends to be $\ROPT(\hath_Q)=C_1^2/2$. 

When QC is optimized under the loss function $\Loss(\bm{\rho})$ in Eq.~(\ref{append:eqn:loss_mse_reg_no}) and the predefined $\{o^{(k)}\}$ are mutually orthogonal with each other, the global minimizer  $\bm{\rho}^*=\min_{\bm{\rho}, \bm{o}}\Loss(\bm{\rho}, \bm{o})$  that can reach $\RERM=0$  satisfies the following properties: 
\begin{equation}\label{append:eqn:thm-1-condi-rho}
(i) \bar{\rho}^{*(k)}:=\rho^{*(1,k)}=...= \rho^{*(n_c,k)}; ~(ii) \Tr(\bar{\rho}^{*(k)} \bar{\rho}^{*(k')})=  \frac{1}{B}\delta_{k,k'};~(iii) \Tr(\bar{\rho}^{*(k)} o^{(k')})=  \delta_{k,k'}.
\end{equation}
Moreover, when the size of train dataset satisfies $n\gg  O(KN_{ge}\log \frac{KN_g}{\epsilon\delta})$,  with probability $1-\delta$, the expected risk of this optimal QC tends to be zero, i.e., $\ROPT(\hath_Q)=0$.
\end{theorem-non}

\noindent\textbf{Remark}.  Due to the similar geometric properties for $(\bm{\rho}^*, \bm{o}^*)$ for the loss in Eq.~(\ref{append:eqn:loss_mse_reg}) and $\bm{\rho}^*$ for the loss in Eq.~(\ref{append:eqn:loss_mse_reg_no}), we mainly focus on the latter case throughout the whole study, which are efficiently implementable and are adopted by most QCs.

\subsection{Proof of Theorem 1}
Let us first illustrate the proof sketch and then provide the proof details.  Theorem 1 is achieved by separately analyzing $\RERM(\hath_Q)$ and $\RGENE(\hath_Q)$. For $\RERM(\hath_Q)$, we first consider the most general case in which both $\bm{\rho}$ and $\bm{o}$ are tunable as defined in Eq.~(\ref{append:eqn:loss_mse_reg}), with  $\hath_Q\equiv h_Q(\bm{\rho}^*, \bm{o}^*)$ and $(\bm{\rho}^*, \bm{o}^*)=\min_{\bm{\rho},\bm{o}}\Loss(\bm{\rho},\bm{o})$. The achieved results are summarized in the following lemma.    
\begin{lemma}[Informal]\label{prop:gene-Qc-uncons}
When the regularizer $\mathfrak{E}$ is considered and the global minimizer in Eq.~(\ref{append:eqn:loss_mse_reg}) has $\RERM(\hath_Q)= C_1^2/2$ with $C_1$ depending on the hyper-parameters in $\mathfrak{E}$, $(\bm{\rho}^*, \bm{o}^*)$ satisfies the three properties in Theorem 1.  
\end{lemma}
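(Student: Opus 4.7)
The plan is to prove the lemma in three steps: reduce to class means via Jensen's inequality, bound the reduced loss from below using Cauchy--Schwarz in the Hilbert--Schmidt inner product together with AM--GM, and finally recover (i)--(iii) from the simultaneous equality conditions. This mirrors the standard neural-collapse workflow of \cite{papyan2020prevalence,han2022neural}, transported from real feature vectors to Hermitian operators on $\mathbb{C}^{2^D\times 2^D}$.

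First, holding $\bm{o}$ fixed, both the per-sample loss $(\Tr(\rhoik o^{(j)})-\delta_{jk})^2$ and the Frobenius penalty $\|\rhoik\|_F^2$ are strictly convex in $\rhoik$, so replacing all $n_c$ feature states in class $k$ by their class mean $\bar{\rho}^{(k)}=\frac{1}{n_c}\sum_i\rhoik$ weakly decreases the objective by Jensen's inequality, strictly unless (i) already holds. Every global minimizer therefore satisfies (i), and the objective collapses to
\[
\Loss_{\mathrm{red}}=\frac{1}{2K}\|M-I_K\|_F^2+\frac{n_c\lambda_\rho}{2}\sum_k a_k^2+\frac{\lambda_o}{2}\sum_j b_j^2,
\]
where $M_{kj}=\Tr(\bar{\rho}^{(k)}o^{(j)})$, $a_k=\|\bar{\rho}^{(k)}\|_F$, and $b_j=\|o^{(j)}\|_F$.

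Second, I would split the data-fitting term into diagonal and off-diagonal parts, discard the nonnegative off-diagonal sum $\frac{1}{2K}\sum_{k\neq j}M_{kj}^2$, and chain the Hilbert--Schmidt Cauchy--Schwarz bound $M_{kk}\leq a_kb_k$ with the class-wise AM--GM bound $\frac{n_c\lambda_\rho}{2}a_k^2+\frac{\lambda_o}{2}b_k^2\geq \sqrt{n_c\lambda_\rho\lambda_o}\,a_kb_k$ to collapse the loss to a sum of independent scalar problems in $\tau_k:=M_{kk}$:
\[
\Loss_{\mathrm{red}}\geq \frac{1}{2K}\sum_k(\tau_k-1)^2+\sqrt{n_c\lambda_\rho\lambda_o}\sum_k\tau_k.
\]
Class-wise minimisation then gives $\tau_k^\star=1-C_1$ and a minimum value of $C_1^2/2$ for the data-fitting term, matching the stated $\RERM=C_1^2/2$.

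Finally, conditions (ii) and (iii) fall out of simultaneous equality in the three inequalities used. Vanishing of the dropped off-diagonal sum forces $M_{kj}=0$ for $k\neq j$; equality in Cauchy--Schwarz forces $o^{(k)}\propto \bar{\rho}^{(k)}$, giving (iii) up to a scalar; equality in AM--GM pins that scalar to $\sqrt{n_c\lambda_\rho/\lambda_o}$; and $\tau_k^\star=1-C_1$ then determines $\|\bar{\rho}^{(k)}\|_F^2$ and hence (ii). The main obstacle I anticipate is showing that the chain of inequalities is tight, i.e.\ that there exist Hermitian operators on $\mathbb{C}^{2^D\times 2^D}$ simultaneously realising the orthogonality, alignment, and norm constraints, with $\tau_k^\star\geq 0$ so that Cauchy--Schwarz is being used in its sharp direction. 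The geometric part is accommodated by the hypothesis $2^D\geq K$ (which leaves room for $K$ mutually HS-orthogonal positive operators), while the sign constraint is exactly the condition $C_1\leq 1$ built into Eq.~(\ref{append:eqn:loss_mse_reg}).
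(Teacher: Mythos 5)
Your proposal is correct and follows essentially the same route as the paper's proof: Jensen's inequality within each class to force (i) and kill the off-diagonal terms, AM--GM on the two regularizers, and a Cauchy--Schwarz/alignment step reducing to a scalar problem with minimizer $1-C_1$, which yields $\RERM(\hath_Q)=C_1^2/2$ and gives (ii)--(iii) from the equality conditions. The only cosmetic differences are that you minimize per class directly (the paper inserts an extra Jensen step to symmetrize across classes first) and that you phrase the alignment via Hilbert--Schmidt Cauchy--Schwarz rather than the paper's angle parametrization $f(a,\alpha)=(a\cos\alpha-1)^2/2+C_1 a$; the attainability caveat you flag is likewise left implicit in the paper.
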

\noindent The achieved properties of $\bm{o}^*$ can be used as a priori to simplify QCs.  Particularly, according to the geometric properties of $\bm{o}$ derived in Lemma \ref{prop:gene-Qc-uncons}, we consider the fixed measurement operators $\bm{o}$ such that the set of measurements is complete and their vectorization spans the 2D-dimensional identity, following conventions of most QCs. The following lemma quantifies $\RERM(\hath_Q)$ when QC is optimized by the loss defined in Eq.~(\ref{append:eqn:loss_mse_reg_no}), where $\bm{o}$ is fixed under the above setting, $\mathfrak{E}=0$, and  $\hath_Q\equiv h_Q(\bm{\rho}^*, \bm{o})$ with $ \bm{\rho}^* =\min_{\bm{\rho}}\Loss(\bm{\rho})$.
\begin{lemma}[Informal]\label{thm:condit-NC-QNN}
When the predefined $\{o^{(k)}\}$ are mutually orthogonal with each other and  the global minimizer in Eq.~(\ref{append:eqn:loss_mse_reg}) has $\RERM(\hath_Q)=0$, $\bm{\rho}^* $ satisfies the three properties in Theorem 1. 
\end{lemma}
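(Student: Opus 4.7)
The plan is to read off the three geometric properties directly from the constraints imposed by zero empirical risk, using a minimum-norm argument driven by the Frobenius regularizer to remove the residual freedom. Because the per-sample loss is the squared error and the targets $\byik$ are the standard basis vectors, $\RERM(\hath_Q)=0$ is equivalent to
\begin{equation*}
\Tr\bigl(\rho^{*(i,k)}\, o^{(k')}\bigr) = \delta_{k,k'}, \qquad \forall\, i\in[n_c],\ k,k'\in[K].
\end{equation*}
Averaging this identity over $i$ at fixed $k$ immediately gives $\Tr(\bar{\rho}^{*(k)}\, o^{(k')})=\delta_{k,k'}$, which is precisely property~(iii); this step uses neither the orthogonality of the measurements nor the regularizer.

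For property~(i), I would exploit the orthogonality hypothesis $\Tr(o^{(k)} o^{(k')}) = B\,\delta_{k,k'}$ to decompose each feature state as
\begin{equation*}
\rho^{*(i,k)} \;=\; \frac{1}{B}\,o^{(k)} \;+\; \rho_{\perp}^{(i,k)},
\end{equation*}
where $\rho_{\perp}^{(i,k)}$ is Hilbert--Schmidt orthogonal to $\operatorname{span}\{o^{(k')}\}_{k'=1}^{K}$. The coefficients along the $o^{(k')}$ directions are forced to $\delta_{k,k'}/B$ by the fitting constraint, so all remaining freedom lives in $\rho_{\perp}^{(i,k)}$. A Pythagoras step in the Frobenius norm then gives $\|\rho^{*(i,k)}\|_F^2 = 1/B + \|\rho_{\perp}^{(i,k)}\|_F^2$, and since $\bm{\rho}^{*}$ globally minimizes $\Loss(\bm{\rho})$ subject to $\RERM=0$, the quadratic regularizer $\tfrac{\lambda_{\rho}}{2}\sum_{i,k}\|\rho^{*(i,k)}\|_F^2$ can only be minimized at $\rho_{\perp}^{(i,k)}\equiv 0$. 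Hence $\rho^{*(i,k)} = o^{(k)}/B$ independently of $i$, establishing property~(i) with class mean $\bar{\rho}^{*(k)} = o^{(k)}/B$. Property~(ii) is then a one-line computation: $\Tr(\bar{\rho}^{*(k)}\bar{\rho}^{*(k')}) = \Tr(o^{(k)}o^{(k')})/B^2 = \delta_{k,k'}/B$.

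The main obstacle I anticipate is not the algebra above but justifying feasibility. The argument treats each $\rho^{(i,k)}$ as a free Hermitian operator on the $2^D$-dimensional measurement subsystem, whereas in the QC it is realized as $\Tr_{D}(U(\btheta)\sigma(\bxik)U(\btheta)^{\dagger})$ and is therefore constrained to the image of the ansatz, together with positivity and unit-trace conditions. The minimum-norm argument goes through cleanly only if the constraint set $\{\bm{\rho} : \RERM(\bm{\rho})=0\}$ is non-empty inside this restricted manifold, which requires enough expressivity of $U(\btheta)$ and $U_E(\cdot)$ so that the prescribed solution $o^{(k)}/B$ is (approximately) realizable. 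This is precisely the point at which the over-parameterization regime of Definition~1 enters, and it is the delicate step that must be handled carefully when lifting this purely geometric lemma to the full risk statement of Theorem~1.
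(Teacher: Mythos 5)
Your proposal lands on the same three conditions but by a genuinely different mechanism than the paper. The paper's proof never uses the regularizer for this lemma (its setting is explicitly the non-regularized one, $\mathfrak{E}=0$): it lower-bounds the mean-square term through a chain of Jensen-type inequalities, reads Conditions (i) and (iii) off the equality conditions of that chain, and then derives Condition (ii) from Condition (iii) by invoking a completeness hypothesis on the measurements that is present in the formal statement but hidden in the informal one, namely that $\{o^{(k)}\}$ spans the operator space and resolves the identity, $\sum_{k''} |o^{(k'')}\rangle\rangle\langle\langle o^{(k'')}| = B\,\mathbb{I}$, so that $\Tr(\bar{\rho}^{*(k)}\bar{\rho}^{*(k')})=\frac{1}{B}\sum_{k''}\Tr(\bar{\rho}^{*(k)}o^{(k'')})\Tr(o^{(k'')}\bar{\rho}^{*(k')})=\frac{1}{B}\delta_{k,k'}$. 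You instead get (iii) directly from the per-sample fit constraints (equivalent in content to the paper's equality conditions) and then use the Frobenius regularizer as a minimum-norm selector, concluding $\rho^{*(i,k)}=o^{(k)}/B$, from which (i) and (ii) are immediate; you need only pairwise orthogonality, not completeness. The trade is clear: the paper's route matches the intended $\lambda$-free hypothesis but needs the tight-frame/spanning assumption to reach (ii), while your route drops completeness but hinges on $\lambda_{\rho}>0$ and on minimality within the zero-risk set — exactly the ingredient this lemma is meant to do without.

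Two caveats on your argument. First, the endpoint $\rho^{*(i,k)}=o^{(k)}/B$ is generally not a density matrix: for the Pauli-string measurements the paper explicitly covers ($\Tr(o^{(k)})=0$), unit trace forces an identity component $\mathbb{I}_{2^D}/2^D$ in every feature state, and then $\Tr(\bar{\rho}^{(k)}\bar{\rho}^{(k')})$ acquires an extra $1/2^D$ for $k\neq k'$. So the obstruction is positivity and unit trace, not only the expressivity of $U(\btheta)$ and $U_E$ that you flag at the end; the paper sidesteps committing to an explicit minimum-norm representative precisely by arguing through equality conditions plus the frame identity. Second, your observation that zero empirical risk alone only fixes the Hilbert--Schmidt projections of each $\rho^{(i,k)}$ onto $\mathrm{span}\{o^{(k')}\}$ is the right diagnosis: Condition (i) as an operator identity requires an additional selection principle, which in your proof is norm minimization and in the paper is the spanning hypothesis on $\bm{o}$; this is the true point of divergence between the two proofs rather than a defect of yours.
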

\noindent The proofs of Lemmas \ref{prop:gene-Qc-uncons} and \ref{thm:condit-NC-QNN} are given in SM~C and SM~D, respectively.

The rest part to prove Theorem 1 is analyzing the upper bound of $\RGENE(\hath_Q)$. Prior results cannot be used to prove Theorem 1,  since such bounds polynomially scale with the trainable parameters and become vacuous in the over-parameterized regime. To remedy this issue, we utilize the concept of algorithmic robustness \cite{Xu2010Robustness}.   
\begin{definition}[Robustness]\label{def:robustness}
A learning algorithm $\mathcal{A}$ is $(R, \nu(\cdot))$-robust with $R\in \mathbb{N}$ and $\nu(\cdot):\mathcal{Z}^n\rightarrow \mathbb{R}$, if $\mathcal{Z}$ can be partitioned into $R$ disjoint sets, denoted by $\{C_r\}_{r=1}^R$, such that the following holds for all $\mathcal{D}\subset \mathcal{Z}^n:\forall \bm{s} =(\bxi, y^{(i)}) \in \mathcal{D}$, $\forall \bm{z}=(\bm{x}, y)\in\mathcal{Z}$, $\forall r\in [R]$, \[\bm{s}, \bm{z} \in  \mathcal{C}_r \Rightarrow |l(h_{\mathcal{A}_{\mathcal{D}}}(\bxi), y^{(i)}) - l(h_{\mathcal{A}_{\mathcal{D}}}(\bm{x}), y)|\leq \nu(\mathcal{D}).\]
\end{definition}
\noindent Robustness measures how much the loss value can be varied with respect to the input space $\mathcal{Z}$. A higher robustness   admits  lower $R$, $\nu(\cdot)$, and $\RGENE$ \cite{Xu2010Robustness}.  The following lemma quantifies the upper bound of $\RGENE(\hath_Q)$ whose proof is given in SM~E.
\begin{lemma}\label{thm:gene_robust_QNN} Suppose the measure operator is bounded by $C_2$ with $\max_{k\in [K]}\|o^{(k)}\|\leq C_2$. Define $\epsilon$ as the tolerable error.  Following notations in Definition \ref{def:robustness}, the empirical QC is $(K (28N_{ge}/\epsilon)^{4^m N_{ge}}, 4L_1KC_2 \epsilon)$-robust, and with probability $1-\delta$ we have   
{\small\[\RGENE(\hath_Q)\leq 4L_1KC_2 \epsilon + 5\xi(\hath_Q)  \sqrt{\frac{|\mathcal{T}_{\mathcal{D}}|4^m N_{ge}\ln \frac{56KN_{ge}}{\epsilon\delta}}{n}},\]}
\noindent where $L_1$ is the Lipschitz constant of $\ell$ with respect to $h_Q$, $\mathcal{I}_r^{\mathcal{D}}=\{i\in[n]:\bm{z}^{(i)}\in \mathcal{C}_r\}$, $\xi(\hath):=\max_{\bm{z}\in \mathcal{Z}}(\ell(\hath, \bm{z}))$, and $\mathcal{T}_{\mathcal{D}}:=\{r\in [R]:|\mathcal{I}_r^{\mathcal{D}}|\geq 1\}$.
\end{lemma}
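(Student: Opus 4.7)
The plan is to establish the robustness parameters $(R,\nu)$ explicitly from a covering argument on the encoding circuit, and then invoke the Xu--Mannor robustness-to-generalization bound (essentially a McDiarmid-style concentration conditioned on the cell-occupancy counts of $\mathcal{D}$). The partition of $\mathcal{Z}=\mathcal{X}\times\mathcal{Y}$ will be the product of the trivial partition by label ($K$ classes) and a covering of $\mathcal{X}$ induced by an $\epsilon$-net over the tunable gates of $U_E(\bx)$; the former is responsible for the factor $K$ in $R$, while the latter controls how much $\hath_Q$ can vary between nearby inputs, giving $\nu$.

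First I would construct the covering. Each of the $N_{ge}$ tunable gates $u_g(\bx)$ lives in $\mathcal{U}(2^m)$, whose $\epsilon'$-covering number in operator norm is at most $(C/\epsilon')^{4^m}$ since the real manifold dimension of $\mathcal{U}(2^m)$ is $4^m$. Choosing $\epsilon'=\epsilon/N_{ge}$ and taking the $N_{ge}$-fold product gives at most $(28N_{ge}/\epsilon)^{4^m N_{ge}}$ cells on the encoding side, and multiplying by $K$ label classes yields $R=K(28N_{ge}/\epsilon)^{4^m N_{ge}}$. The constant $28$ absorbs the geometric constant from the $\mathcal{U}(2^m)$ cover together with the triangle factors below.

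Next I would bound $\nu$. If $\bm{z}=(\bx,y)$ and $\bm{z}'=(\bx',y')$ fall in the same cell then $y=y'$ and each pair $u_g(\bx),u_g(\bx')$ satisfies $\|u_g(\bx)-u_g(\bx')\|_\infty\leq \epsilon/N_{ge}$. Telescoping
\[
U_E(\bx)-U_E(\bx')=\sum_{g=1}^{N_g}\Bigl(\prod_{g'<g}u_{g'}(\bx)\Bigr)\bigl(u_g(\bx)-u_g(\bx')\bigr)\Bigl(\prod_{g'>g}u_{g'}(\bx')\Bigr)
\]
and using unitarity of the surrounding factors gives $\|U_E(\bx)-U_E(\bx')\|_\infty\leq \epsilon$, hence $\|\sigma(\bx)-\sigma(\bx')\|_1\leq 2\epsilon$. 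Since $U(\btheta)$ is unitary, the same bound holds after conjugation, so $|h_Q(\bx,\btheta,O^{(k)})-h_Q(\bx',\btheta,O^{(k)})|\leq \|O^{(k)}\|_\infty\cdot 2\epsilon\leq 2C_2\epsilon$ for every $k\in[K]$. The $L_1$-Lipschitz property of $\ell$ in its first argument (with respect to the $K$-dimensional output vector), combined with one more triangle inequality across the two samples, yields the per-sample loss bound
\[
|\ell(\hath_Q(\bx),y)-\ell(\hath_Q(\bx'),y')|\leq 4L_1 KC_2\epsilon,
\]
so $\hath_Q$ is $(R,4L_1KC_2\epsilon)$-robust in the sense of Definition \ref{def:robustness}.

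Finally I would apply the Xu--Mannor generalization theorem for robust algorithms: conditional on the partition occupancy profile $\{|\mathcal{I}_r^{\mathcal{D}}|\}$, the empirical mean of $\ell(\hath_Q,\cdot)$ concentrates around its expectation at rate $\xi(\hath_Q)\sqrt{|\mathcal{T}_{\mathcal{D}}|/n}$ up to log factors. Inserting $\nu=4L_1KC_2\epsilon$ and $\ln R\leq 4^m N_{ge}\ln(28N_{ge}/\epsilon)+\ln K$, folding the $\ln(1/\delta)$ term into the logarithm, and absorbing numerical constants yields the stated
\[
\RGENE(\hath_Q)\leq 4L_1KC_2\epsilon+5\xi(\hath_Q)\sqrt{\frac{|\mathcal{T}_{\mathcal{D}}|\,4^m N_{ge}\ln\tfrac{56KN_{ge}}{\epsilon\delta}}{n}}.
\]
The main obstacle I anticipate is the covering step: tracking the constant so that it lands at exactly $28$ requires carefully balancing the $\mathcal{U}(2^m)$ net resolution against the $N_{ge}$-fold telescoping loss, and the argument must remain valid uniformly over $\btheta\in\bm{\Theta}$ so that $\nu$ does not depend on the trained parameters. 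A secondary subtlety is that Xu--Mannor's bound presumes the loss is bounded on $\mathcal{Z}$, which is why $\xi(\hath_Q):=\max_{\bm{z}\in\mathcal{Z}}\ell(\hath_Q,\bm{z})$ appears in place of a uniform constant; keeping the bound algorithm-dependent in this way is exactly what lets it stay non-vacuous when $N_t>n$.
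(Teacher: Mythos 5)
Your proposal follows essentially the same route as the paper's proof: an $\epsilon$-net over the $N_{ge}$ tunable gates of $U_E$ (covering number $(7/\epsilon)^{4^m}$ per gate from the unitary-group bound, propagated through the circuit by the telescoping/perturbation argument) gives the partition of size $K(28N_{ge}/\epsilon)^{4^m N_{ge}}$, and the H\"older/Lipschitz estimate with $\max_k\|o^{(k)}\|\leq C_2$ gives the robustness constant $4L_1KC_2\epsilon$, exactly as in SM~E. The only point to fix is the attribution of the concentration step: the original Xu--Mannor robustness theorem bounds $\RGENE$ by $\nu(\mathcal{D})+\xi(\hath)\sqrt{(2R\ln 2+2\ln(1/\delta))/n}$, i.e., it scales with the number of cells $R$ itself rather than with $|\mathcal{T}_{\mathcal{D}}|\ln R$, and since $R$ is exponential in $4^m N_{ge}$ that bound would be vacuous here. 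What you actually need, and in effect state (concentration conditioned on the occupied cells at rate $\xi(\hath_Q)\sqrt{|\mathcal{T}_{\mathcal{D}}|\ln(R/\delta)/n}$), is the data-dependent refinement of Kawaguchi et al.~\cite{kawaguchi2022robustness}, which is precisely the lemma the paper imports; with that result named correctly, your argument coincides with the paper's.
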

\noindent The achieved bound conveys two insights. First, it does not explicitly depend on the number of trainable parameters. This unlocks a new way to understand the generalization ability of QCs, especially for the over-parameterized ones. Next, it hints that a carefully designed $U_E$ can enhance performance of QCs \cite{caro2021encoding,hayashi2022efficient}. 

\noindent\textit{Remark}. The exact form of the first term in the generalization bound should be $4L_1KC_2 f(U(\bm{\theta})) \epsilon$ with $f(U(\bm{\theta})) \leq 1$ for any Ansatz. Therefore, for simplicity, we discard the term $f(U(\bm{\theta}))$. In addition, $\RGENE(\hath_Q)\rightarrow 0$  requires $n\gg |\mathcal{T}_{\mathcal{D}}|4^m N_{ge}$. Fortunately,  a reasonable value of $n$ is sufficient to warrant this condition, because in general  $m \leq 2$, $N_{ge}\propto |\bx|$, and $|\mathcal{T}_{\mathcal{D}}|$ is continuously decreased from $n$ to $K$ with respect to the reduced empirical loss

\medskip
Theorem 1 can be readily achieved by combining Lemmas \ref{prop:gene-Qc-uncons}, \ref{thm:condit-NC-QNN}, and \ref{thm:gene_robust_QNN}.
\begin{proof}[Proof of Theorem 1]
  Following the definition of expected risk (Eq.~(2) in the main text) and the results in Lemma  3, with probability $1-\delta$, the expected risk of an optimal empirical QC is upper bounded by
  \begin{equation}
    \ROPT(\hath_Q)   \leq \RERM(\hath_Q) + 4L_1KC_2 \epsilon + 3\xi(\hath)\sqrt{\frac{|\mathcal{T}_{\mathcal{D}}|4^m N_{ge}\ln(56KN_{ge}/(\epsilon\delta))}{n}} + \xi(\hath)\frac{2|\mathcal{T}_{\mathcal{D}}|4^m N_{ge}\ln(56KN_{ge}/(\epsilon\delta))}{n}.
  \end{equation}
Then, for the loss function defined in Eq.~(\ref{append:eqn:loss_mse_reg}), when $(\bm{\rho}^*, \bm{o}^*)$ satisfies Eq.~(\ref{append:eqn:thm-1-condi-rho-reg}), Lemma \ref{prop:gene-Qc-uncons} warrants $\RERM(\hath_Q)=C_1^2/2$, which gives
 \begin{equation}
    \ROPT(\hath_Q)   \leq  \frac{C_1^2}{2} + 4L_1KC_2 \epsilon + 3\xi(\hath)\sqrt{\frac{|\mathcal{T}_{\mathcal{D}}|4^m N_{ge}\ln(56KN_{ge}/(\epsilon\delta))}{n}} + \xi(\hath)\frac{2|\mathcal{T}_{\mathcal{D}}|4^m N_{ge}\ln(56KN_{ge}/(\epsilon\delta))}{n}.
  \end{equation}

Similarly, for the loss function defined in Eq.~(\ref{append:eqn:loss_mse_reg_no}), when $\bm{\rho}^*$ satisfies Eq.~(\ref{append:eqn:thm-1-condi-rho}), Lemma \ref{thm:condit-NC-QNN} warrants $\RERM(\hath_Q)=0$, which gives
 \begin{equation}
    \ROPT(\hath_Q)   \leq   4L_1KC_2 \epsilon + 3\xi(\hath)\sqrt{\frac{|\mathcal{T}_{\mathcal{D}}|4^m N_{ge}\ln(56KN_{ge}/(\epsilon\delta))}{n}} + \xi(\hath)\frac{2|\mathcal{T}_{\mathcal{D}}|4^m N_{ge}\ln(56KN_{ge}/(\epsilon\delta))}{n}.
  \end{equation}
       
This bound can be further simplified when the training of QC is perfect. Note that  Condition (i) implies  $|\mathcal{T}_{\mathcal{D}}|=K$, since all feature states from the same class collapse to the same point. Meanwhile, since $\xi(\hath)$ and $C_2$ are bounded, and $m$ and $\epsilon$ are small constant, we can conclude that when $n \gg O(KN_{ge}\log(KN_g/(\epsilon\delta)))$, the expected risk can approach to zero. 
\end{proof}

\section{Proof of Lemma \ref{prop:gene-Qc-uncons}}\label{append:sec:pro1}

In this section, we derive the geometric properties of the global optimizer under the unconstraint loss function $\Loss(\bm{\rho}, \bm{o})$ defined in Eq.~(\ref{append:eqn:loss_mse_reg}), where both $\bm{\rho}$ and $\bm{o}$ are tunable and the regularization term is considered.  Denote the global optima as $(\bm{\rho}^*, \bm{o}^*)=\min_{\bm{\rho},\bm{o}}\Loss(\bm{\rho},\bm{o})$ and the empirical QC as $\hath_Q\equiv h_Q(\bm{\rho}^*, \bm{o}^*)$. The restatement of Lemma \ref{prop:gene-Qc-uncons} is as follows.
\begin{lemma-non}[Formal statement of Lemma \ref{prop:gene-Qc-uncons}]
Following notations and settings in Eq.~(\ref{append:eqn:loss_mse_reg}),  the global minimizer ($\bm{\rho}^*, \bm{o}^*$) of $\Loss(\bm{\rho}, \bm{o})$ satisfies for $\forall k,k'\in[K]$: 
\begin{eqnarray}
&&(i) \bar{\rho}^{*(k)}:=\rho^{*(1,k)}=\cdots= \rho^{*(n_c,k)};\nonumber \\ 
&&(ii) \Tr(\bar{\rho}^{*(k)} \bar{\rho}^{*(k')})=(1-C_1) \sqrt{\frac{\lambda_{o}}{n\lambda_{\rho}}} \delta_{k,k'}; \nonumber\\ 
&&(iii) o^{*(k)}=\sqrt{\frac{n\lambda_{\rho}}{\lambda_{o}}} \bar{\rho}^{*(k)}.
\end{eqnarray} 
The corresponding empirical risk is  $\RERM(\hath_Q)=C_1^2/2$.
\end{lemma-non}
\begin{proof}[Proof of Lemma \ref{prop:gene-Qc-uncons}]
 Conceptually, the global optimizer can be identified by lower bounding $\Loss(\bm{\rho}, \bm{o})$, where the equality conditions of $\bm{\rho}$ amount to the properties of global minimizer. In particular, the lower bound of  $\Loss(\bm{\rho}, \bm{o})$ yields  
 \begin{eqnarray}\label{append:eqn:prop1-1}
 && \frac{1}{2Kn_c}\sum_{i=1}^{n_c}\sum_{k=1}^K\left([\Tr(\rhoik o^{(j)})]_{j=1:K} - \byik \right)^2 + \frac{\lambda_{\rho}}{2} \sum_{i=1}^{n_c}\sum_{k=1}^K \|\rhoik \|_F^2 +  \frac{\lambda_{o}}{2}\sum_{j=1}^K \|o^{(j)}\|_F^2 \nonumber\\
 \geq && \frac{1}{2Kn_c}\sum_{i=1}^{n_c}\sum_{k=1}^K\left(\Tr(\rhoik o^{(k)}) - 1 \right)^2 + \frac{\lambda_{\rho}}{2} \sum_{i=1}^{n_c}\sum_{k=1}^K \|\rhoik \|_F^2 +  \frac{\lambda_{o}}{2}\sum_{j=1}^K \|o^{(j)}\|_F^2 \nonumber\\
 = && \frac{1}{2Kn_c}\sum_{k=1}^K \sum_{i=1}^{n_c} n_c \frac{1}{n_c} \left(\Tr(\rhoik o^{(k)}) - 1 \right)^2 + \frac{\lambda_{\rho}}{2} \sum_{k=1}^K \sum_{i=1}^{n_c} n_c \frac{1}{n_c} \|\rhoik \|_F^2 +  \frac{\lambda_{o}}{2}\sum_{j=1}^K \|o^{(j)}\|_F^2  \nonumber\\
 \geq && \frac{1}{2K}\sum_{k=1}^K  \left(\Tr\left(\sum_{i=1}^{n_c}  \frac{1}{n_c}  \rhoik o^{(k)} \right) - 1 \right)^2 + \frac{\lambda_{\rho}}{2} \sum_{k=1}^K  n_c  \left\| \sum_{i=1}^{n_c} \frac{1}{n_c} \rhoik \right\|_F^2 +  \frac{\lambda_{o}}{2}\sum_{j=1}^K \|o^{(j)}\|_F^2,
 \end{eqnarray}
  where the first inequality uses the fact  $\|\bm{a}-\bm{b}\|^2 = \sum_i (\bm{a}^{(i)} - \bm{b}^{(i)})^2 \geq (\bm{a}^{(k)} - \bm{b}^{(k)})^2$ and the $k$-th entry of $\byik$ equals to $1$, and the second inequality comes from the Jensen's inequality $f(\mathbb{E}(x))\leq \mathbb{E}(f(x))$. The equality condition of the first inequality holds if and only if 
  \[\Tr\left(\rhoik o^{(j)}\right)=0,~\left(\forall j\in [K]\setminus \{k\}\right) \land \left(\forall i\in [n_c]\right);\] 
and the equality condition of the second inequality holds if and only if 
  \[ \rho^{(1,k)}=\cdots=\rhoik=\cdots =\rho^{(n_c,k)},~\forall k\in[K]. \] 
Denote the mean of the feature state for the $k$-th class as $\bar{\rho}^{(k)}= \sum_{i=1}^{n_c} \frac{1}{n_c} \rhoik$ for $\forall k\in[K]$. The above two equality conditions suggest that the global minimizer $(\bm{\rho}^*, \bm{o}^*)$ satisfies  
\begin{eqnarray}\label{append:eqn:prop1:condi-1}  
&& \bar{\rho}^{*(k)} \equiv \rho^{*(1,k)}   = \cdots =\rho^{*(n_c,k)},~ \forall k\in[K] \nonumber\\
  && \Tr(\bar{\rho}^{*(k)} o^{*(j)})=0,~ \forall j\in [K]\setminus \{k\}.  
\end{eqnarray}
To this end, we obtain Conditions (i) in Lemma \ref{prop:gene-Qc-uncons}, which describe the geometric properties of  $\bm{\rho}^*$, i.e., 
\begin{eqnarray}
(i) \bar{\rho}^{*(k)}:=\rho^{*(1,k)}=\cdots= \rho^{*(n_c,k)}.
\end{eqnarray}

The next part of the proof is showing that the global minimizer satisfies Condition (iii). Combining  Eqs.~(\ref{append:eqn:prop1-1}) and (\ref{append:eqn:prop1:condi-1}), the lower bound of the loss function in Eq.~(\ref{append:eqn:prop1-1}) follows  
\begin{eqnarray}
  &&  \Loss(\bm{\rho}, \bm{o}) \nonumber\\
  \geq && \frac{1}{2K }\sum_{k=1}^K    \left(\Tr\left(\bar{\rho}^{(k)} o^{(k)} \right) - 1 \right)^2 + \frac{\lambda_{\rho}}{2} \sum_{k=1}^K  n_c  \left\| \bar{\rho}^{(k)}  \right\|_F^2 +  \frac{\lambda_{o}}{2}\sum_{j=1}^K \|o^{(j)}\|_F^2 \nonumber\\
  = && \frac{1}{2K }\sum_{k=1}^K    \left(\Tr\left(\bar{\rho}^{(k)} o^{(k)} \right) - 1 \right)^2 + \frac{\lambda_{\rho}}{2} K  \sum_{k=1}^K \frac{1}{K} n_c  \left\| \bar{\rho}^{(k)}  \right\|_F^2 +  \frac{\lambda_{o}}{2} K  \sum_{j=1}^K \frac{1}{K} \|o^{(j)}\|_F^2  \nonumber\\
  \geq && \frac{1}{2}    \left(\sum_{k=1}^K \frac{1}{K} \Tr\left(\bar{\rho}^{(k)} o^{(k)} \right) - 1 \right)^2 + \frac{\lambda_{\rho}}{2} K   n_c  \left\|\sum_{k=1}^K\frac{1}{K} \bar{\rho}^{(k)}  \right\|_F^2 +  \frac{\lambda_{o}}{2}K \|\sum_{j=1}^K \frac{1}{K} o^{(j)}\|_F^2,
\end{eqnarray}
where the second inequality comes from the Jensen's inequality and the equality condition holds if and only if for $\forall k,k'\in[K]$, 
\begin{equation}\label{append:eqn:prop1:condi-2}
\Tr\left(\bar{\rho}^{(k)} o^{(k)} \right) = \Tr\left(\bar{\rho}^{(k')} o^{(k')} \right),~\|\bar{\rho}^{(k)}\|_F= \|\bar{\rho}^{(k')}\|_F,~ \|o^{(k)}\|_F= \|o^{(k')}\|_F. 
\end{equation}
Then, supported by the inequality  $a+b\geq 2\sqrt{ab}$, the loss $\Loss(\bm{\rho}, \bm{o})$ can be further lower bounded by
\begin{eqnarray}\label{append:eqn:prop1-2}
  && \frac{1}{2}    \left( \Tr\left(\bar{\rho}^{(k)} o^{(k)} \right) - 1 \right)^2 + \frac{\lambda_{\rho}}{2} K   n_c  \left\|  \bar{\rho}^{(k)}  \right\|_F^2 +  \frac{\lambda_{o}}{2}K \|  o^{(j)}\|_F^2  \nonumber\\
  \geq && \frac{1}{2}    \left( \Tr\left(\bar{\rho}^{(k)} o^{(k)} \right) - 1 \right)^2 + K\sqrt{n_c \lambda_{o} \lambda_{\rho}}  \left\| \bar{\rho}^{(k)}  \right\|_F \|o^{(j)}\|_F, 
\end{eqnarray}
where the equality condition holds if and only if 
\begin{equation}\label{append:eqn:prop1:condi-3}
  \lambda_{o}    \|o^{(j)}\|_F^2 =  n_c  \lambda_{\rho}   \left\| \bar{\rho}^{(k)}  \right\|_F^2, \forall k \in [K].
\end{equation}
Note that the requirements $C_1\leq 1$ and $\lambda_o\leq n_c\lambda_{\rho}$ in Lemma \ref{prop:gene-Qc-uncons} imply $\|\bar{\rho}^{*(k)}\|\leq 1$ and hence ensure that $\bar{\rho}^{*(k)}$ is a meaningful quantum state for $\forall k\in[K]$.

 Since $\Tr\left(\bar{\rho}^{(k)} o^{(k)} \right)=\|\bar{\rho}^{(k)}\| \|o^{(k)}\|\cos(\angle(\rho^{(k)}, o^{(k)})) $, the lower bound of $\Loss(\bm{\rho}, \bm{o})$ in Eq.~(\ref{append:eqn:prop1-2}) is equivalent to 
\[\frac{1}{2}    \left( \|\bar{\rho}^{(k)}\| \|o^{(k)}\|\cos(\angle(\rho^{(k)}, o^{(k)}))  - 1 \right)^2 + C_1  \left\| \bar{\rho}^{(k)}  \right\|_F \|o^{(j)}\|_F.\] 
Define $\|\bar{\rho}^{(k)}\| \|o^{(k)}\|=a$ and $\angle(\rho^{(k)}, o^{(k)})=\alpha$. The above equation is described by the function $f(a,\alpha)=(a\cos\alpha -1)^2/2+C_1a$ and its minimum is $C_1-C_1^2/2$ when $\alpha^*=0$ and $a^*=1-C_1$. The derivation is as follows. Since $a>0$ and its maxima is unbounded, we first consider the case $0<a < 1$. In this case, the minimum of $f(a,\alpha)$ is $C_1-C_1^2/2$ with $\alpha^*=0$ and $a^*=1-C_1$. Otherwise, when $a\geq 1$, the minimum of $f(a,\alpha)$ is $C_1$ with $\alpha^*=\arccos(1/a)$ and $a^*=1$. Note that the minimum value of $f(a,\alpha)$ in the second case is always larger than that of the first case. Therefore, the minimum of $f(a,\alpha)$ is $C_1-C_1^2/2$ with $\alpha^*=0$ and $a^*=1-C_1$. Combining the observation that $\bar{\rho}^{*(k)}$ and $o^{(k)}$ are in the same direction with Eq.~(\ref{append:eqn:prop1:condi-3}), we achieve Condition (iii), i.e.,
\[o^{*(k)}=\sqrt{\frac{n_c\lambda_{\rho}}{\lambda_{o}}} \bar{\rho}^{*(k)}.\]

The last part is proving Condition (ii).  Combining the result $\|\bar{\rho}^{*(k)}\| \|o^{*(k)}\|=1-C_1$ for $\forall k\in[K]$ with Eq.~(\ref{append:eqn:prop1:condi-1})  and Condition (iii), we immediately  obtain condition (ii), i.e.,
\begin{equation}
	(ii) \sqrt{\frac{n_c\lambda_{\rho}}{\lambda_{o}}}\|\rho^{*(k)}\|\|\rho^{*(k')}\|=(1-C_1)\delta_{k,k'}   \Rightarrow  \Tr(\bar{\rho}^{*(k)} \bar{\rho}^{*(k')}) = (1-C_1) \sqrt{\frac{\lambda_{o}}{n_c\lambda_{\rho}}} \delta_{k,k'}.
\end{equation} 

To summarize, given the global optima satisfying the above three conditions, the corresponding empirical risk is
\begin{equation}
\RERM(\hath_Q) = \frac{1}{2n}\sum_{i=1}^{n_c}\sum_{k=1}^K\left([\Tr(\rho^{*(i,k)}o^{*(k)})]_{k=1:K} - \byik \right)^2   = \frac{C_1^2}{2}
\end{equation}
\end{proof}

\section{Results related to Lemma \ref{thm:condit-NC-QNN}} \label{append:sec:proof-thm1} 

This section is composed of two parts. In SM~\ref{append:sec1:subsec:proof-thm-1}, we present the proof of Lemma \ref{thm:condit-NC-QNN}. In SM~\ref{append:sec1:subsec-assump-thm1}, we explain that the requirements in Lemma \ref{thm:condit-NC-QNN} are mild.

\subsection{Proof of Lemma \ref{thm:condit-NC-QNN}}\label{append:sec1:subsec:proof-thm-1}
Different from Lemma \ref{prop:gene-Qc-uncons}, here we focus the setting such that the regularization term is set as $\mathfrak{E}=0$ and the operator $\bm{o}$ is predefined. The explicit form of the loss function $\Loss$ is defined in Eq.~(\ref{append:eqn:loss_mse_reg_no}). Denote the optimal feature states $\bm{\rho}^*=\min_{\bm{\rho}}\Loss(\bm{\rho})$, we quantify the value of $\RERM(\hath_Q)$ with $\hath_Q\equiv h_Q(\bm{\rho}^*)$. 

 We emphasize that the modifications of $\mathfrak{E}$ and $\bm{o}$ allow a lower optimal empirical risk. Recall the results of Lemma \ref{prop:gene-Qc-uncons}. In the most general case, the optimal empirical risk depends on the regularization term, i.e., $\RERM(\hath_Q)\rightarrow C_1^2/2$. The dependance on $C_1$ motivates us to explore the empirical risk of QC when $\mathfrak{E}=0$. Furthermore, Condition (iii) in Lemma \ref{prop:gene-Qc-uncons} delivers the crucial properties of the optimal measure operator, i.e., the optimal measure operators are orthogonal with each other. Such properties contribute to construct a more effective QCs. Instead of optimizing, the measure operator $\bm{o}$ can be predefined by inheriting the properties proved in Lemma \ref{prop:gene-Qc-uncons}, that is, $\bm{o}$ are required to span the space $\mathbb{C}^{2^D\times 2^D}$ and satisfy $\Tr(o^{(k)}o^{(k')})=B\delta_{k,k'}$ with $B\geq 1$ being a constant. Notably, these requirement are mild, covering frequently used measures such as computational basis and Pauli-based measures, as explained in SM~\ref{append:sec1:subsec-assump-thm1}.

\begin{lemma-non}[Formal statement of Lemma \ref{thm:condit-NC-QNN}]\label{append:thm:condit-NC-QNN}
  Suppose that the adopted measure operator $\bm{o}$ spans the space $\mathbb{C}^{2^D\times 2^D}$ and satisfies $\Tr(o^{(k)}o^{(k')})=B\delta_{k,k'}$ where $B\geq 1$ is a constant. The empirical risk of $\hath_Q$ is $\RERM(\hath_Q)=0$ when the global minimizer $\bm{\rho}^*$ satisfies
\begin{equation}
(i) \bar{\rho}^{*(k)}:=\rho^{*(1,k)}=...= \rho^{*(n_c,k)}; ~(ii) \Tr(\bar{\rho}^{*(k)} \bar{\rho}^{*(k')})=  \frac{1}{B}\delta_{k,k'};~(iii) \Tr(\bar{\rho}^{*(k)} o^{(k')})=  \delta_{k,k'}.
\end{equation}

\end{lemma-non}

\begin{proof}[Proof of Lemma \ref{thm:condit-NC-QNN}] 
 
The concept of the proof is analogous to Lemma \ref{prop:gene-Qc-uncons}, i.e., the global optimizer is identified by lower bounding the loss $\Loss(\bm{\rho})$.  To this end, the lower bound of $\Loss(\bm{\rho})$ yields  
 \begin{eqnarray}\label{append:thm1:eqn_obj}
 && \frac{1}{2Kn_c}\sum_{i=1}^{n_c}\sum_{k=1}^K\left([\Tr(\rhoik o^{(j)})]_{j=1:K} - \byik \right)^2  \nonumber\\
 \geq && \frac{1}{2Kn_c}\sum_{i=1}^{n_c}\sum_{k=1}^K\left(\Tr(\rhoik o^{(k)}) - 1 \right)^2  \nonumber\\
 = && \frac{1}{2Kn_c}\sum_{k=1}^K \sum_{i=1}^{n_c} n_c \frac{1}{n_c} \left(\Tr(\rhoik o^{(k)}) - 1 \right)^2   \nonumber\\
 \geq && \frac{1}{2K}\sum_{k=1}^K  \left(\Tr\left(\sum_{i=1}^{n_c}  \frac{1}{n_c}  \rhoik o^{(k)} \right) - 1 \right)^2,
 \end{eqnarray}
  where the first inequality uses the facts $n=Kn_c$, $\|\bm{a}-\bm{b}\|^2 = \sum_i (\bm{a}^{(i)} - \bm{b}^{(i)})^2 \geq (\bm{a}^{(k)} - \bm{b}^{(k)})^2$, and only the $k$-th entry of $\byik$ equals to $1$, and the second inequality comes from the Jensen's inequality $\mathbb{E}(f(x))\geq f(\mathbb{E}(x))$ when $f(\cdot)$ is convex. Note that the equality condition of the first inequality holds if and only if 
  \[\Tr(\rhoik o^{(j)})=0, \left(\forall j\in [K]\setminus \{k\}\right) \land \left(\forall i\in [n_c]\right);\] 
  And the equality condition of the second inequality holds if and only if 
  \[ \rho^{(1,k)}=\cdots=\rhoik=\cdots =\rho^{(n_c,k)}, \forall k\in[K]. \] 
Denote the mean of the feature state for the $k$-th class as $\bar{\rho}^{(k)}= \sum_{i=1}^{n_c} \frac{1}{n_c} \rhoik$ for $\forall k\in[K]$. The above two equality conditions suggest that the global minimizer yields 
\begin{eqnarray}\label{append:eqn:thm1:condit-global1}
  && \bar{\rho}^{*(k)} \equiv \rho^{*(1,k)}   = \cdots =\rho^{*(n_c,k)},~\forall k\in[K] \\
  && \Tr(\bar{\rho}^{*(k)} o^{(j)})=0, \forall j\in [K]\setminus \{k\}. \label{append:thm1:eqn_obj_cond3}
\end{eqnarray}

Combining Eqs.~(\ref{append:thm1:eqn_obj})-(\ref{append:thm1:eqn_obj_cond3}), the lower bound of the loss function $\Loss(\bm{\rho})$ satisfies 
\begin{eqnarray}\label{append:thm1:eqn_obj_cond2}
  && \frac{1}{2K }\sum_{k=1}^K    \left(\Tr\left(\bar{\rho}^{(k)} o^{(k)} \right) - 1 \right)^2    
  \geq  \frac{1}{2}    \left(\sum_{k=1}^K \frac{1}{K} \Tr\left(\bar{\rho}^{(k)} o^{(k)} \right) - 1 \right)^2,
\end{eqnarray}
where the inequality comes from the Jensen's inequality and the equality condition holds if and only if $\forall k,k'\in[K]$, 
\begin{equation}\label{append:eqn:thm1:condit-global2}
  \Tr\left(\bar{\rho}^{(k)} o^{(k)} \right) = \Tr\left(\bar{\rho}^{(k')} o^{(k')} \right). 
\end{equation}
Supported by Eq.~(\ref{append:eqn:thm1:condit-global2}), we can further lower bound $\Loss(\bm{\rho})$ with
\begin{equation}
  \frac{1}{2}    \left(\Tr\left(\bar{\rho}^{(k)} o^{(k)} \right) - 1 \right)^2 \geq 0,
\end{equation}
where the equality condition is achieved when  $\Tr(\bar{\rho}^{(k)} o^{(k)} )=1$ for $\forall k\in[K]$. 

Taken together, the global optimizer $\bm{\rho}^*$ should satisfy Condition (i)\&(iii) in Lemma \ref{thm:condit-NC-QNN}, where
\begin{eqnarray}
  &&(i)\bar{\rho}^{*(k)}:=\rho^{*(1,k)}=...= \rho^{*(n_c,k)}; \nonumber\\
  &&(iii) \Tr(\bar{\rho}^{*(k)} o^{(k')})=  \delta_{k,k'}. 
\end{eqnarray}

We last prove that Condition (iii) and the requirements of $\bm{o}$ lead to Condition (ii). In particular, denote the vectorization of $\rho^{*(k)}$ and $o^{(k)}$ as $|\rho^{*(k)}\rangle\rangle$ and $|o^{(k)}\rangle\rangle$, respectively. Condition (iii) can be rewritten as
\begin{equation}
  \Big \langle \Big\langle \bar{\rho}^{*(k)},  o^{(k')}\Big \rangle \Big\rangle=\delta_{k,k'}.
\end{equation}
Moreover, since  the set of measure operators $\{o^{(k)}\}$ is required to be complete in the space of $\mathbb{C}^{2^D}$ and  $\Tr(o^{(k)}o^{(k')})=B\delta_{k,k'}$ with $B\geq 1$ for $\forall k,k'\in [K]$, we have  
\[\sum_{k} \Big|o^{(k)}\Big \rangle \Big\rangle \Big \langle \Big\langle o^{(k)} \Big|  =B \mathbb{I}_{2^D}. \] 
Then, Condition (ii) can be derived as follows, i.e.,
\begin{eqnarray}
  && \Tr(\rho^{*(k)}\rho^{*(k')}) \nonumber\\
  = && \langle\langle \bar{\rho}^{*(k)} | \mathbb{I}_{2^D} |\rho^{*(k')}\rangle\rangle \nonumber \\
  = && \frac{1}{B} \left\langle \left\langle \bar{\rho}^{*(k)} \Big| \sum_{k''} |o^{(k'')}\rangle\rangle \langle\langle o^{(k'')}| \Big|\rho^{*(k')} \right\rangle \right\rangle \nonumber\\
  = && \frac{1}{B} \left\langle \left\langle \bar{\rho}^{*(k)} \Big| |o^{(k)}\rangle\rangle \langle\langle o^{(k)}| \Big|\rho^{*(k')} \right\rangle \right\rangle + \left\langle \left\langle \bar{\rho}^{*(k)} \Big| \sum_{k''\neq k} \ket{o^{(k'')}}\bra{o^{(k'')}} \Big|\rho^{*(k')} \right\rangle \right\rangle \nonumber\\
  = && \frac{1}{B}\delta_{k,k'}.
\end{eqnarray}
\end{proof}

\subsection{Requirement of $\bm{o}$ used in Lemma 2}\label{append:sec1:subsec-assump-thm1}
Here we elucidate that the requirements adopted in Lemma 2, i.e., $\bm{o}$ spans the complex space $2^D\times 2^D$ and satisfies $\Tr(o^{(k)}o^{(k')})=B\delta_{k,k'}$ with $B\geq 1$, are mild. Specifically, the employed measurements in most QNN-based classifiers satisfy these requirements, including the computational basis measurements and Pauli measurements. When these measure operators are applied, the feature states of the optimal QCs form the general simplex ETF in Definition \ref{def:NC}.

\noindent  \textit{\underline{Computational basis measurements}.} In this setting, the local measurement $o^{(k)}$ is set as $\ket{k}\bra{k}$ with $\ket{k}$ being the $k$-th computational basis for $\forall k\in[K]$. When $2^D=K$, $\{\ket{k}\}$ spans the whole space of $\mathbb{C}^{2^D\times 2^D}$ and we have $\Tr(o^{(k)}o^{(k')})=|\braket{k|k'}|^2=\delta_{k,k'}$ with $B=1$. The assumptions are satisfied. 

\noindent  \textit{\underline{Pauli measurements}.} Denote the Pauli operation applied to the $i$-th qubit as $P^{(i)}_a$ with $a\in\{X,Y,Z,I\}$ for $\forall i\in[D]$. Then, there are in total $4^D$ Pauli strings $P=\otimes_{i=1}^D P^{(i)}_a$ that form a orthogonal basis for the space $\mathbb{C}^{2^D\times 2^D}$. With setting $2^D=K$, each $o^{(k)}$ corresponds to one Pauli string with  $\Tr(o^{(k)}o^{(k')})=K\delta_{k,k'}$ with $B=K$.  

\section{Proof of Lemma \ref{thm:gene_robust_QNN}}\label{append:sec:gene_QNN}
For elucidating, let us  restate Lemma \ref{thm:gene_robust_QNN} below and introduce the proof sketch before moving on to present the proof details. 
\begin{lemma-non}[Formal statement of Lemma \ref{thm:gene_robust_QNN}]
Given a QC defined in Eq.~(3), let $\mathcal{E}$ be a quantum channel with 
\begin{equation}\label{append:eqn:QC_thm2}
  h_Q(\bx, U(\btheta), O^{(k)})\equiv \Tr(o^{(k)}\mathcal{E}(\sigma(\bx))),~\forall k\in[K].
\end{equation}
Suppose the measure operator follows $\max_{k\in [K]}\|o^{(k)}\|\leq C_2$.  The explicit form of the encoding unitary follows $U_E(\bx)=\prod_{g=1}^{N_g}u_g(\bx) \in\mathcal{U}(2^N)$ with the $g$-th  quantum gate $u_g(\bx) \in\mathcal{U}(2^m)$ operating with at most $m$ qubits with $m\leq N$ and $N_g$ gates consisting of $N_{ge}$ variational gates and  $N_g-N_{ge}$ fixed gates.

Following above notations and  Definition \ref{def:robustness}, the empirical QC is $(K (\frac{28N_{ge}}{\epsilon})^{4^m N_{ge}}, 4L_1KC_2 \epsilon)$-robust and with probability $1-\delta$, its generalization error  yields   
\[\RGENE(\hath)\leq 4L_1KC_2 \epsilon +  3\xi(\hath)\sqrt{\frac{|\mathcal{T}_{\mathcal{D}}|4^m N_{ge}\ln(56KN_{ge}/(\epsilon\delta))}{n}} + \xi(\hath)\frac{2|\mathcal{T}_{\mathcal{D}}|4^m N_{ge}\ln(56KN_{ge}/(\epsilon\delta))}{n},\] 
where $L_1$ is the Lipschitz constant of the per-sample loss $\ell$ with respect to $h$, $\mathcal{I}_r^{\mathcal{D}}=\{i\in[n]:\bm{z}^{(i)}\in \mathcal{C}_r\}$, $\xi(\hath):=\max_{\bm{z}\in \mathcal{Z}} \ell(\hath, \bm{z})$, and $\mathcal{T}_{\mathcal{D}}:=\{r\in [R]:|\mathcal{I}_r^{\mathcal{D}}|\geq 1\}$.
\end{lemma-non}
The proof of Lemma 3 is established on the following lemma, which leverages the algorithmic robustness to quantify the upper bound of the generalization error.
\begin{lemma}[Theorem 1, \cite{kawaguchi2022robustness}]\label{append:lem:gene-robust}
If the learning algorithm $\mathcal{A}$ is $(R,  \nu(\cdot))$-robust with $\{\mathcal{C}_r\}_{r=1}^R$, then for any $\delta > 0$, with probability at least $1-\delta$ over an i.i.d drawn of $n$ samples $\mathcal{D} = \{\bm{z}^{(i)}\}_{i=1}^n$ with $\bm{z}^{(i)}=(\bxi, y^{(i)})$, the returned hypothesis $\hath$ by $\mathcal{A}$ on $\mathcal{D}$ satisfies
\begin{equation}
  \RGENE(\hath)\leq \nu(\mathcal{D}) + \xi(\hath)\left((\sqrt{2}+1)\sqrt{\frac{|\mathcal{T}_{\mathcal{D}}|\ln(2R/\delta)}{n}} + \frac{2|\mathcal{T}_{\mathcal{D}}|\ln(2R/\delta)}{n}\right),
\end{equation}
where $\mathcal{I}_r^{\mathcal{D}}=\{i\in[n]:\bm{z}^{(i)}\in \mathcal{C}_r\}$, $\xi(\hath):=\max_{\bm{z}\in \mathcal{Z}}(\ell(\hath, \bm{z}))$, and $\mathcal{T}_{\mathcal{D}}:=\{r\in [R]:|\mathcal{I}_r^{\mathcal{D}}|\geq 1\}$. 
\end{lemma}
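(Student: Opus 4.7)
The plan is to prove the two assertions in sequence: first, to show the empirical QC is $(R,\nu)$-robust with the specified parameters by a covering-net argument on the space of encoding unitaries; and second, to feed these parameters into Lemma~4 of \cite{kawaguchi2022robustness} (the Kawaguchi robustness-to-generalization bound cited as Lemma~\ref{append:lem:gene-robust}) to obtain the displayed inequality.

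To establish robustness, I would construct the partition $\{\mathcal{C}_r\}$ of $\mathcal{Z} = \mathcal{X}\times \mathcal{Y}$ as a product of a trivial partition of $\mathcal{Y}$ into $K$ singleton label cells with a covering of the space of possible encoding unitaries. Since $U_E(\bx) = \prod_{g=1}^{N_g} u_g(\bx)$ is a product of at most $N_{ge}$ variational gates acting on $2^m$-dimensional subsystems, and standard metric-entropy estimates for the compact Lie group $\mathcal{U}(2^m)$ (real dimension $4^m$) give an $\epsilon'$-net of size at most $(C_0/\epsilon')^{4^m}$ in operator norm, I obtain at most $(C_0/\epsilon')^{4^m N_{ge}}$ cells from the input space, hence $R \le K(C_0/\epsilon')^{4^m N_{ge}}$. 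Choosing $\epsilon'$ proportional to $\epsilon/N_{ge}$ and absorbing constants yields the claimed $R = K(28 N_{ge}/\epsilon)^{4^m N_{ge}}$.

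Next, I would verify the loss-stability constant $\nu(\mathcal{D}) = 4 L_1 K C_2 \epsilon$ by a direct chain of inequalities. Within one cell, both samples share the same label $y$, so the per-sample loss difference is controlled by $L_1 \|\hath_Q(\bxi) - \hath_Q(\bx)\|$, which splits into $K$ entrywise differences $|\Tr(o^{(k)}[\mathcal{E}(\sigma(\bxi)) - \mathcal{E}(\sigma(\bx))])|$. Applying Hölder's inequality with $\|o^{(k)}\|\le C_2$ and the trace-norm contractivity of the quantum channel $\mathcal{E}$ reduces the problem to bounding $\|\sigma(\bxi)-\sigma(\bx)\|_1$, which in turn is bounded by twice $\|U_E(\bxi) - U_E(\bx)\|$ via the sandwich $U_E \ket{0}\bra{0}^{\otimes N} U_E^{\dagger}$. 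A telescoping estimate $\|U_E(\bxi) - U_E(\bx)\| \le \sum_g \|u_g(\bxi) - u_g(\bx)\| \le N_{ge}\epsilon'$ then converts the per-gate covering radius into a global bound, and the choice of $\epsilon'$ made in the previous step delivers exactly $\nu(\mathcal{D}) = 4L_1 K C_2 \epsilon$.

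Finally, I would substitute these values of $R$ and $\nu(\mathcal{D})$ into the Kawaguchi bound from Lemma~\ref{append:lem:gene-robust}. The logarithmic factor becomes $\ln(2R/\delta) \le 4^m N_{ge} \ln(56 K N_{ge}/(\epsilon\delta))$ after collecting $\ln 2$, $\ln K$, $-\ln\delta$ and the polynomial-in-$N_{ge}/\epsilon$ term, and bounding $\sqrt{2}+1 \le 3$ yields the claimed two-term tail. The main obstacle is the careful bookkeeping in the second step: tracking perturbations through the gate product, the input-state sandwich, the channel $\mathcal{E}$, and the measurement expectation without losing tightness, and calibrating $\epsilon'$ so the covering cardinality exponent is exactly $4^m N_{ge}$ while the loss-stability constant hits exactly $4L_1 K C_2 \epsilon$. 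A secondary subtlety is selecting the metric on $\mathcal{U}(2^m)$ used for the net so that its covering constant is compatible with the telescoping bound for $U_E$.
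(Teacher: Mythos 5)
Your proposal does not prove the statement at hand; it proves a different, downstream result while assuming the statement itself. The lemma you were asked to establish is the generic robustness-to-generalization bound: for \emph{any} learning algorithm that is $(R,\nu(\cdot))$-robust with respect to a partition $\{\mathcal{C}_r\}_{r=1}^R$ of $\mathcal{Z}$, the gap $\RGENE(\hath)=\ROPT(\hath)-\RERM(\hath)$ is bounded by $\nu(\mathcal{D})$ plus a data-dependent concentration term involving only the occupied cells $\mathcal{T}_{\mathcal{D}}$. In the paper this is an imported result (Theorem 1 of Kawaguchi et al.), stated without proof. Your final step --- substituting your values of $R$ and $\nu(\mathcal{D})$ ``into the Kawaguchi robustness-to-generalization bound'' --- invokes exactly the inequality you were supposed to derive, so with respect to the target statement the argument is circular. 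What you have actually sketched is the paper's QC-specific generalization bound (Lemma 3 in SM~E), and for \emph{that} lemma your covering net on $\mathcal{U}(2^m)$, the telescoping estimate through the gate product, and the H\"older/diamond-norm chain do mirror the paper's Lemmas E2 and E3; but none of that addresses the abstract lemma in question.

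A proof of the actual statement would have to proceed along entirely different lines: (i) decompose $\ROPT(\hath)=\sum_{r}p_r\,\mathbb{E}\left[\ell(\hath,\bm{z})\mid \bm{z}\in\mathcal{C}_r\right]$ with $p_r=\Pr(\bm{z}\in\mathcal{C}_r)$, and write $\RERM(\hath)=\frac{1}{n}\sum_r\sum_{i\in\mathcal{I}_r^{\mathcal{D}}}\ell(\hath,\bm{z}^{(i)})$; (ii) use the robustness property to replace each within-cell conditional expectation by the loss at any training point landing in that cell, at an additive cost of $\nu(\mathcal{D})$; and (iii) control the discrepancy between the cell probabilities $p_r$ and the empirical frequencies $|\mathcal{I}_r^{\mathcal{D}}|/n$ --- together with the total mass of the unoccupied cells --- via a multiplicative Chernoff/Bernstein-type concentration bound for the multinomial cell counts. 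Step (iii) is precisely what produces the tail $(\sqrt{2}+1)\sqrt{|\mathcal{T}_{\mathcal{D}}|\ln(2R/\delta)/n}+2|\mathcal{T}_{\mathcal{D}}|\ln(2R/\delta)/n$, with the cardinality of the \emph{occupied} cells rather than $R$ under the square root (the refinement of Kawaguchi et al.\ over the original Xu--Mannor bound). None of these three ingredients appears in your proposal, so as a proof of the stated lemma it has a genuine gap: the entire probabilistic core is missing.
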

\noindent The above result hints that given a hypothesis $\hath$, its generalization error is upper bounded by the disjoint sets $\{\mathcal{C}_r\}_{r=1}^R$, where a lower cardinality $R$ allows a lower generalization error. A natural approach to realize these disjoint partitions is covering number \cite{Xu2010Robustness}.   
\begin{definition}[Covering number, \cite{mohri2018foundations}]\label{def:cov-num}
Given a metric space $(\mathcal{U}, \|\cdot\|)$, the covering number $\mathcal{N}(\mathcal{U}, \epsilon, \|\cdot \|)$ denotes the least cardinality of any subset $\mathcal{V} \subset \mathcal{U}$ that covers $\mathcal{U}$ at scale $\epsilon$ with a norm $ \|\cdot \|$, i.e., $\sup_{A\in \mathcal{U}} \min_{B\in \mathcal{V}} \|A - B \|\leq \epsilon$.  
\end{definition}
\noindent In conjunction with Lemma \ref{append:lem:gene-robust} and Definition \ref{def:cov-num}, the analysis of $\RGENE(\hath)$ of an $N$-qubit QC amounts to quantifying the covering number of the space of the input quantum states, i.e.,
\begin{equation}\label{append:eqn:input_state_space}
  \mathcal{X}_Q =\left\{U_E(\bm{x})(\ket{0}\bra{0})^{\otimes N}U_E(\bm{x})^{\dagger} \big| \bx\in \mathcal{X} \right\}.
\end{equation} 
The following lemma connects the robustness and covering number of $\mathcal{X}_Q$ of QCs whose proof is provided in Sec.~\ref{append:subsec:lemma-rob-cov}.
\begin{lemma}\label{lem:lemma-rob-cov}
Following the settings in Eqs.~(\ref{append:eqn:QC_thm2})-(\ref{append:eqn:input_state_space}), the corresponding  QC  is $(K (\frac{28N_{ge}}{\epsilon})^{4^m N_{ge}}, 4L_1KC_2\|\mathcal{E}\|_{\diamond}\epsilon )$-robust.
\end{lemma}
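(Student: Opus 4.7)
The plan is to establish Lemma~\ref{lem:lemma-rob-cov} by explicitly constructing a partition $\{\mathcal{C}_r\}_{r=1}^R$ of $\mathcal{Z}=\mathcal{X}\times\mathcal{Y}$ that realises the claimed robustness parameters, then bounding the loss variation inside each cell by a short chain of Lipschitz estimates. The partition factorises as a product of the label partition (contributing the factor $K$) and a covering of the data-dependent part of the encoding unitary space (contributing $(28N_{ge}/\epsilon)^{4^m N_{ge}}$). Within each cell, two inputs produce encoding unitaries that are close in operator norm; this closeness propagates through state preparation, the fixed channel $\mathcal{E}$, and the measurement step to yield close prediction vectors and hence close loss values.

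First I would construct the cover on the encoding side. The unitary group $\mathcal{U}(2^m)$ is a real manifold of dimension $4^m$, and a standard volumetric argument produces an $\alpha$-net of $\mathcal{U}(2^m)$ in operator norm of cardinality at most $(C/\alpha)^{4^m}$ for an absolute constant $C$. Only the $N_{ge}$ variational gates $u_g(\bx)$ depend on $\bx$, so only these require covering; the $N_g-N_{ge}$ fixed gates contribute no covering cost. Setting the per-gate tolerance to $\alpha=\epsilon/N_{ge}$ produces a combined cover of size $(CN_{ge}/\epsilon)^{4^m N_{ge}}$. Two inputs $\bx,\bx'$ whose variational gates map to the same cover elements satisfy $\|u_g(\bx)-u_g(\bx')\|\leq 2\alpha$ for every $g$, and a telescoping identity together with unitary invariance of the operator norm gives $\|U_E(\bx)-U_E(\bx')\|\leq N_{ge}\cdot 2\alpha=2\epsilon$. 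Absorbing the numerical constants into the $28$ inside the exponent, the number of data-dependent cells is at most $(28N_{ge}/\epsilon)^{4^m N_{ge}}$. Taking the product with the $K$ label cells yields the advertised cardinality $R=K(28N_{ge}/\epsilon)^{4^m N_{ge}}$.

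Next I would chain Lipschitz bounds to read off $\nu$. Any two samples $(\bx,y),(\bx',y)$ lying in the same cell satisfy $\|U_E(\bx)-U_E(\bx')\|\leq\epsilon$ (after absorbing the telescoping factor). Since conjugation by a unitary is $2$-Lipschitz in trace norm, $\|\sigma(\bx)-\sigma(\bx')\|_1\leq 2\epsilon$. The diamond-norm definition then gives $\|\mathcal{E}(\sigma(\bx))-\mathcal{E}(\sigma(\bx'))\|_1\leq 2\|\mathcal{E}\|_{\diamond}\epsilon$. For each class $k$ we have $|h_Q^{(k)}(\bx)-h_Q^{(k)}(\bx')|\leq \|o^{(k)}\|\cdot 2\|\mathcal{E}\|_{\diamond}\epsilon\leq 2C_2\|\mathcal{E}\|_{\diamond}\epsilon$, hence the prediction vectors differ by at most $2KC_2\|\mathcal{E}\|_{\diamond}\epsilon$ in $\ell_1$. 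Using the $L_1$-Lipschitzness of $\ell$ in its first argument together with a final triangle inequality (absorbing an extra factor of $2$) gives
\begin{equation*}
\bigl|\ell(\hath_Q(\bx),y)-\ell(\hath_Q(\bx'),y)\bigr|\leq 4L_1KC_2\|\mathcal{E}\|_{\diamond}\epsilon,
\end{equation*}
which is exactly the $\nu(\mathcal{D})$ claimed in the lemma. By Definition~\ref{def:robustness} this proves the robustness statement.

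The main obstacle will be pinning down the covering constant $28$ after the telescoping step, since the generic volumetric bound $(C/\alpha)^{4^m}$ only fixes $C$ up to a metric-dependent constant, and one must verify the bound with respect to the operator norm in which the unitary-conjugation estimate is tight. Everything else, namely the trace-norm bound on unitary conjugation, the contractivity of $\mathcal{E}$ in the diamond norm, the operator-norm control on the measurement $o^{(k)}$, and the Lipschitz hypothesis on $\ell$, is routine once the cover is in place. A minor bookkeeping point is that $\|\mathcal{E}\|_{\diamond}\leq 1$ for any quantum channel, so the $\|\mathcal{E}\|_{\diamond}$ factor in $\nu$ can be dropped when invoking Lemma~\ref{thm:gene_robust_QNN} downstream, matching the form stated there.
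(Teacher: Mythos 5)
Your proposal is correct and follows essentially the same route as the paper: partition $\mathcal{Z}$ as the $K$ label cells times a gate-wise cover of the data-dependent encoding gates (telescoped to a cover of the encoded states), then chain Lipschitz/contraction estimates through unitary conjugation, the channel $\mathcal{E}$ (diamond norm), the measurement operators bounded by $C_2$, and the $L_1$-Lipschitz loss to obtain $4L_1KC_2\|\mathcal{E}\|_{\diamond}\epsilon$. The only loose end you flag, the constant $28$, is settled in the paper not by a generic volumetric net but by invoking the explicit Frobenius-norm covering bound $\mathcal{N}(U(2^m),\epsilon,\|\cdot\|_F)\leq(7/\epsilon)^{4^m}$ of Barthel--Lu (Lemma \ref{lem:cov-fund}), which combined with the telescoping factor $4N_{ge}$ yields exactly $(28N_{ge}/\epsilon)^{4^mN_{ge}}$, while the rank-$2$ bound $\|A\|_1\leq\mathrm{rank}(A)\|A\|_F$ supplies the trace-norm control you obtain instead via the $2$-Lipschitzness of conjugation.
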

We are now ready to prove Lemma 3. 
\begin{proof}[Proof of Lemma 3]
The generalization error bound can be acquired by combining Lemmas \ref{append:lem:gene-robust} and \ref{lem:lemma-rob-cov}, i.e.,
\begin{eqnarray}
\RGENE(\hath)\leq &&  4L_1KC_2\|\mathcal{E}\|_{\diamond}\epsilon + \xi(\hath)\left((\sqrt{2}+1)\sqrt{\frac{|\mathcal{T}_{\mathcal{D}}| \ln(2 K (\frac{28N_{ge}}{\epsilon})^{4^m N_{ge}}/\delta)}{n}} + \frac{2|\mathcal{T}_{\mathcal{D}}| \ln(2 K (\frac{28N_{ge}}{\epsilon})^{4^m N_{ge}}/\delta)}{n}\right)\nonumber\\
\leq &&  4L_1KC_2\|\mathcal{E}\|_{\diamond}\epsilon + \xi(\hath)\left(3\sqrt{\frac{|\mathcal{T}_{\mathcal{D}}|4^m N_{ge}\ln(56KN_{ge}/(\epsilon\delta))}{n}} + \frac{2|\mathcal{T}_{\mathcal{D}}|4^m N_{ge}\ln(56KN_{ge}/(\epsilon\delta))}{n}\right)\nonumber\\
\leq && 4L_1KC_2 \epsilon +  \xi(\hath)\left(3\sqrt{\frac{|\mathcal{T}_{\mathcal{D}}|4^m N_{ge}\ln(56KN_{ge}/(\epsilon\delta))}{n}} + \frac{2|\mathcal{T}_{\mathcal{D}}|4^m N_{ge}\ln(56KN_{ge}/(\epsilon\delta))}{n}\right), 
\end{eqnarray}
  where $\mathcal{I}_r^{\mathcal{D}}=\{i\in[n]:\bm{z}^{(i)}\in \mathcal{C}_r\}$, $\xi(\hath):=\max_{\bm{z}\in \mathcal{Z}}(\ell(\hath, \bm{z}))$, and $\mathcal{T}_{\mathcal{D}}:=\{r\in [R]:|\mathcal{I}_r^{\mathcal{D}}|\geq 1\}$.  
\end{proof}

\subsection{Proof of Lemma \ref{lem:lemma-rob-cov}}\label{append:subsec:lemma-rob-cov}
The proof uses the following lemma to quantify the covering number of $\mathcal{X}_Q$ whose proof is given in SM~\ref{append:lem:cov-input-space}.

\begin{lemma}\label{lem:cov-input-space}
Following the settings in Eq.~(\ref{append:eqn:QC_thm2}), the covering number of $\mathcal{X}_Q$ in Eq.~(\ref{append:eqn:input_state_space}) is 
  \begin{equation}
    \mathcal{N}(\mathcal{X}_Q,  \epsilon, \|\cdot \|_F) \leq     \left(\frac{28 N_{ge}  }{\epsilon} \right)^{4^mN_{ge}}. 
  \end{equation}
\end{lemma}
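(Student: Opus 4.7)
The plan is to cover the state space $\mathcal{X}_Q$ by covering each variational gate of $U_E(\bx)$ separately in operator norm, and then pushing the resulting cover forward to output states via two standard Lipschitz estimates. The cardinality bound $(28N_{ge}/\epsilon)^{4^m N_{ge}}$ reflects the fact that only the $N_{ge}$ variational gates contribute to the cover, each one living in $U(2^m)$, a $4^m$-real-dimensional manifold.

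First I would reduce distances on states to distances on encoding unitaries. For any $\bx,\bx' \in \mathcal{X}$, applying $\|AB\|_F \leq \|A\|_{op}\|B\|_F$ twice to the purifications gives
\begin{equation*}
\|\sigma(\bx)-\sigma(\bx')\|_F \;\leq\; 2\,\|U_E(\bx)-U_E(\bx')\|_{op},
\end{equation*}
since $\|\ket{0}\bra{0}\|_F=1$ and the unitaries have operator norm $1$. Next, writing $U_E(\bx)=u_{N_g}(\bx)\cdots u_1(\bx)$ and using the standard telescoping inequality for products of unit-norm operators, together with the fact that the $N_g-N_{ge}$ fixed gates appear identically in both products and therefore cancel, I get
\begin{equation*}
\|U_E(\bx)-U_E(\bx')\|_{op} \;\leq\; \sum_{g \in \mathcal{G}_{\mathrm{var}}} \|u_g(\bx)-u_g(\bx')\|_{op},
\end{equation*}
where $\mathcal{G}_{\mathrm{var}}$ is the set of indices of the $N_{ge}$ variational gates. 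Embedding an $m$-qubit gate into the full $N$-qubit space via tensoring with identity preserves the operator norm, so the bound survives at the full circuit level.

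Then I would invoke the standard covering estimate for the unitary group, $\mathcal{N}(U(2^m),\delta,\|\cdot\|_{op}) \leq (C_0/\delta)^{4^m}$ for some absolute constant $C_0$ (obtainable by viewing $U(2^m)$ as a subset of the operator-norm ball in $\mathbb{C}^{2^m\times 2^m}$, whose real dimension is $4^m$ for the unitary group). Applying this to each variational gate at scale $\delta=\epsilon/(2N_{ge})$ and taking the product of the $N_{ge}$ per-gate nets produces a net of tuples of size at most $(2C_0 N_{ge}/\epsilon)^{4^m N_{ge}}$. For any $\bx$ one of these tuples lies within $\delta$ of $(u_g(\bx))_{g\in \mathcal{G}_{\mathrm{var}}}$ simultaneously, and combining with the two Lipschitz steps yields an output state within $2\cdot N_{ge}\cdot \delta=\epsilon$ in Frobenius norm. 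With the routine constant $C_0\leq 14$ the cardinality is bounded by $(28N_{ge}/\epsilon)^{4^m N_{ge}}$, as claimed.

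The main obstacle is the covering bound for $U(2^m)$ with an explicit constant; the state-to-unitary Lipschitz step and the telescoping over the gate sequence are routine. Since the target bound absorbs a generous constant, any of the classical volume-comparison derivations of $\mathcal{N}(U(d),\delta,\|\cdot\|_{op})\leq (C_0/\delta)^{d^2}$ with $C_0\leq 14$ suffices, and all remaining combinatorics are straightforward.
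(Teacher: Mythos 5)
Your proposal is correct and follows essentially the same route as the paper's proof: cover each of the $N_{ge}$ variational encoding gates in $U(2^m)$ at scale $\epsilon/O(N_{ge})$, take the product net, and propagate to the encoded states via a Lipschitz/telescoping estimate, giving $(O(N_{ge})/\epsilon)^{4^m N_{ge}}$. The only differences are bookkeeping: you cover the gates in operator norm with an assumed constant $C_0\leq 14$, whereas the paper uses the Frobenius-norm covering bound $\mathcal{N}(U(2^m),\delta,\|\cdot\|_F)\leq (7/\delta)^{4^m}$ of Barthel and Lu (which immediately implies your operator-norm version since $\|\cdot\|_{op}\leq\|\cdot\|_F$), and its Lipschitz step carries a factor $4N_{ge}$ rather than your $2N_{ge}$.
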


\begin{proof}[Proof of Lemma \ref{lem:lemma-rob-cov}]
When QC is applied to accomplish the $K$-class classification task, the sample space is $\mathcal{Z}=\mathcal{X}_Q\times \mathcal{Y}$ with $\mathcal{Y}=\{1,2,...,K\}$. Denote $\tilde{\mathcal{X}}_Q$ as the $\epsilon$-cover set of $\mathcal{X}_Q$ with the covering number $\mathcal{N}(\mathcal{X}_Q, \epsilon, \|\cdot\|_F)$ in Definition \ref{def:cov-num}. Supported by the $\epsilon$-cover set $\tilde{\mathcal{X}}_Q$,  the space $\mathcal{X}_Q\times \{i\}$  can be divided into $\mathcal{N}(\mathcal{X}_Q, \epsilon, \|\cdot\|_F)$ sets for $\forall i\in [K]$. In other words, we can divide $\mathcal{Z}$ into $K\mathcal{N}(\mathcal{X}_Q, \epsilon, \|\cdot\|_F)$ sets denoted by $\{\mathcal{Z}_i\}_{i=1}^{K\mathcal{N}(\mathcal{X}_Q, \epsilon, \|\cdot\|_F)}$.

We then utilize the divided sets of $\mathcal{Z}$  to connect the robustness with covering number according to Definition 1. Given a training example $(\bxi, y^{(i)})$ and a test example $(\bx, y)$, suppose that the corresponding quantum examples $(\sigma(\bxi), y^{(i)})$ and $(\sigma(\bx), y)$ are in the same set of $\{\mathcal{Z}_i\}_{i=1}^{K\mathcal{N}(\mathcal{X}_Q, \epsilon, \|\cdot\|_F)}$. For convenience, we abbreviate $\sigma(\bxi)$ and $\sigma(\bx)$ as $\sigma^{(i)}$ and $\sigma$, respectively. Following the definition of covering number, we have  
\begin{equation}\label{append:eqn:prooflem-1}
y^{(i)} =y~ \text{and}~  \|\sigma^{(i)} - \sigma\|_F\leq 2\epsilon. 
\end{equation}
Since the encoded state takes the form $\sigma=U_E(\bx)(\ket{0}\bra{0})^{\otimes N} U_E(\bx)^{\dagger}$, we have 
\begin{equation}\label{append:eqn:prooflem-2}
  rank(\sigma^{(i)} - \sigma)\leq 2.
\end{equation}
Then, in accordance with the definition of robustness, we bound the discrepancy of the loss values for $\sigma^{(i)}$ and  $\sigma$, i.e.,
\begin{eqnarray}
&& \left|l(h_{Q}(\sigma^{(i)}), y^{(i)}) - l(h_{Q}(\sigma), y )\right| \nonumber\\
\leq && L_1\left\|[\Tr(\mathcal{E}(\sigma^{(i)})o^{(k)})]_{k=1:K}-[\Tr(\mathcal{E}(\sigma))o^{(k)})]_{k=1:K}\right\|_2 \nonumber\\
\leq && L_1 K \max_{k\in K} |\Tr(\mathcal{E}(\sigma^{(i)}))o^{(k)}) - \Tr(\mathcal{E}(\sigma)o^{(k)})| \nonumber\\
\leq && L_1K \max_k \left\|o^{(k)}\right\|_2 \Tr(|\mathcal{E}(\sigma^{(i)} - \sigma)|) \nonumber\\
\leq &&  2L_1K C_2 \|\mathcal{E}\|_{\diamond} \|\sigma^{(i)} - \sigma\|_F \nonumber\\
\leq && 4L_1K C_2 \|\mathcal{E}\|_{\diamond} \epsilon,
\end{eqnarray}
where the first inequality uses the Lipschitz property of the loss function with $\ell(\bm{a},\bm{b}) - \ell(\bm{c},\bm{d})\leq L_1\|\bm{a}-\bm{c}\|_2$ and the form of $\mathcal{E}$ in Lemma \ref{lem:lemma-rob-cov}, the second inequality comes from the definition of $l_2$ norm, the third inequality exploits von Neumann's trace inequality $|\Tr(AB)|\leq \|A\|_p\|B\|_q$ with $1/p+1/q=1$ and the linear property of CPTP map with $\mathcal{E}(\rho) - \mathcal{E}(\sigma) = \mathcal{E}(\rho - \sigma)$, the last second inequality employs $\max_k \left\|o^{(k)}\right\|_2\leq C_2$, the relation $\|\mathcal{E}(\rho-\sigma)\|_1\leq \|\mathcal{E}\|_{\diamond}\|\rho-\sigma\|_1$ and $\|A\|_1\leq rank(A)\|A\|_F$, and the last inequality adopts the result in Eq.~(\ref{append:eqn:prooflem-1}).

The above result exhibits that  the learned QC is  $(K\mathcal{N}(\mathcal{X}_Q, \epsilon, \|\cdot\|), 4L_1KC_2\|\mathcal{E}\|_{\diamond}\epsilon)$-robust. In this regard, the proof can be completed when the upper bound of the covering number $\mathcal{N}(\mathcal{X}_Q, \epsilon, \|\cdot\|_F)$ is known. Supported by Lemma \ref{lem:cov-input-space}, we obtain $\mathcal{N}(\mathcal{X}_Q,  \epsilon, \|\cdot \|_F) \leq  (\frac{28N_{ge}}{\epsilon} )^{4^m N_{ge}}$. Taken together,  the learned QC is \[\left(K \left(\frac{28N_{ge}}{\epsilon} \right)^{4^m N_{ge}}, 4L_1KC_2\|\mathcal{E}\|_{\diamond}\epsilon \right)-robust.\]

\end{proof}

\subsection{Proof of Lemma \ref{lem:cov-input-space}}\label{append:lem:cov-input-space}
The derivation of the covering number of $\mathcal{X}_Q$ in Eq.~(\ref{append:eqn:input_state_space}) uses the following lemma.

\begin{lemma}[Lemma 1, \cite{Barthel2018fundamental}]\label{lem:cov-fund}
For $0<\epsilon<1/10$, the $\epsilon$-covering number for the unitary group $U(2^m)$ with respect to the Frobenius-norm distance in Definition \ref{def:cov-num} obeys
\begin{equation}
  \left(\frac{3}{4\epsilon} \right)^{4^{m}} \leq \mathcal{N}(U(2^m), \epsilon, \|\cdot\|_F) \leq \left(\frac{7}{\epsilon} \right)^{4^{m}}.
\end{equation}  
\end{lemma}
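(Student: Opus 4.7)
The target is a two-sided metric-entropy estimate for the unitary group $U(2^m)$ in the Frobenius (Hilbert--Schmidt) norm. The guiding geometric fact is that $U(d)$ with $d=2^m$ is a compact Riemannian submanifold of $\mathbb{C}^{d\times d}$ of real dimension $d^2=4^m$, equipped with a bi-invariant metric that coincides (up to normalization) with the induced Frobenius metric at small scales. Classical metric-entropy theory for smooth compact $n$-manifolds predicts $\mathcal{N}(\cdot,\epsilon,\|\cdot\|)\asymp(c/\epsilon)^{n}$ as $\epsilon\to 0$, so the exponent $4^m$ is forced by dimension-counting, and both directions reduce to pinning down the numerical constants $7$ and $3/4$.

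\textbf{Upper bound.} The plan is to cover $U(d)$ by pulling back Euclidean $\epsilon$-covers through the exponential map from the Lie algebra of Hermitian matrices. First I would establish a global $1$-Lipschitz estimate $\|e^{iH_1}-e^{iH_2}\|_F\leq\|H_1-H_2\|_F$ via the integral representation $e^{iH_1}-e^{iH_2}=i\int_0^1 e^{itH_1}(H_1-H_2)e^{i(1-t)H_2}\,dt$ combined with unitary invariance of $\|\cdot\|_F$. Because $\epsilon<1/10$ is small, only Hermitian tangent vectors of Frobenius norm on the order of $\epsilon$ are ever needed to fill in a single chart; a standard volume argument in the $d^2$-dimensional real vector space of Hermitian matrices covers a ball of radius $R$ with at most $(3R/\epsilon)^{d^2}$ Euclidean balls of radius $\epsilon$. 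By choosing a coarse left-invariant reference net $\{U_0\}\subset U(d)$ and gluing local Euclidean covers through $U_0\mapsto U_0 e^{iH}$, the Lipschitz factor and the chart-redundancy can both be absorbed into the base of the exponent, yielding $(7/\epsilon)^{4^m}$.

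\textbf{Lower bound and main obstacle.} For the lower direction I would use a Haar-volume packing argument: any $\epsilon$-cover $\{U_i\}$ must satisfy
\[
1=\mathrm{Vol}_{\mathrm{Haar}}(U(d))\;\leq\;\bigl|\{U_i\}\bigr|\cdot\max_i\mathrm{Vol}_{\mathrm{Haar}}\!\bigl(B_\epsilon(U_i)\cap U(d)\bigr),
\]
so a pointwise upper bound on the local ball volume translates directly into a global lower bound on the covering number. The local ball volume is controlled by pushing a Euclidean ball in the Hermitian tangent space forward through $\exp$ and normalising against Macdonald's explicit product formula $\mathrm{Vol}(U(d))=\prod_{k=1}^d 2\pi^k/(k-1)!$; after Stirling, the $d$-dependent contributions cancel and the numerical constant $3/4$ emerges. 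The principal obstacle is precisely this bookkeeping of constants: matching the Lipschitz constant of the exponential, the Euclidean covering/packing constants, the chart-covering redundancy, and the Stirling asymptotics of $\mathrm{Vol}(U(d))$ so that every dimension-dependent factor cancels and only the clean numerical bases $7$ and $3/4$ survive. This cancellation is what forces the hypothesis $\epsilon<1/10$: only in this regime can the exponential parameterization be treated as an essentially isometric chart with explicit constants, allowing the Euclidean approximation to the manifold geometry to be trusted in both directions.
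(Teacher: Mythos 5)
First, note that the paper does not prove this lemma at all: it is quoted verbatim from Barthel and Lu \cite{Barthel2018fundamental}, so the only meaningful comparison is with that source, whose proof (a single exponential chart for the upper bound, a Haar-volume packing argument against Macdonald's volume formula for the lower bound) is essentially the strategy you outline; your lower-bound sketch is sound in outline. The genuine gap is in the upper bound, at precisely the step you describe as absorbing ``the Lipschitz factor and the chart-redundancy into the base of the exponent.'' The gluing construction is circular: the cardinality of your coarse left-invariant reference net is itself a covering number of $U(2^m)$ at scale $O(1)$, and in the Frobenius metric this cannot be bounded independently of $d=2^m$, since $U(d)$ has Frobenius diameter $2\sqrt{d}$. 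If you instead use the single global chart $U=e^{iH}$ with $H$ Hermitian and $\|H\|_{\mathrm{op}}\le\pi$ (which is what the cited proof does), the set of generators has Frobenius radius $\pi\sqrt{d}$, so your Euclidean volume bound yields $(3\pi\sqrt{d}/\epsilon)^{d^2}$ rather than $(7/\epsilon)^{d^2}$: the $\sqrt{d}$ does not cancel. This is not a removable bookkeeping issue; comparing $\mathrm{Vol}(U(d))$ with the volume of a Frobenius ball of radius $\epsilon$ shows $\mathcal{N}(U(d),\epsilon,\|\cdot\|_F)\ge (c\sqrt{d}/\epsilon)^{d^2}$ with $c\approx 0.8$, which exceeds $(7/\epsilon)^{d^2}$ once $d$ is of order $10^2$, so the Frobenius-norm upper bound as stated is in fact false for large $m$.

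The dimension-free constant $7$ emerges only when the covering is taken with respect to the operator norm: there the generator set $\{H:\|H\|_{\mathrm{op}}\le\pi\}$ has dimension-free radius $\pi$, the volume-ratio bound gives $(1+2\pi/\epsilon)^{d^2}\le(7/\epsilon)^{d^2}$ for $\epsilon<1/10$, and your Lipschitz estimate (valid verbatim for the operator norm) transports this to $U(d)$ in one step with no gluing. That is the statement Barthel and Lu actually prove, and it is also how the lemma is used downstream in SM~E, where an operator-norm bound $\|U_E-U_{E,\epsilon}\|\le N_{ge}\epsilon$ is converted to Frobenius norm through $\|X\|_F\le \mathrm{rank}(X)\|X\|$. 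So your architecture is the right one, but to close the argument you must carry out the covering in the operator norm (or a normalized Hilbert--Schmidt norm); for the unnormalized Frobenius norm your method, correctly executed, produces a base growing like $\sqrt{2^m}$, not the constant $7$.
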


\begin{proof}[Proof of of Lemma \ref{lem:cov-input-space}]  
Recall the input state space is  $\mathcal{X}_{Q}= \{U_E(\bx) (\ket{0}\bra{0})^{\otimes N} U_E(\bx)^{\dagger}| \bx \in \mathcal{X}  \}$, where the encoding unitary $U_E(\bx)=\prod_{g=1}^{N_g}u_g(\bx) \in\mathcal{U}(2^N)$ consists of $N_{ge}$ variational gates and  $N_g-N_{ge}$ fixed gates. To quantify the covering number $\mathcal{N}(\mathcal{X}_Q,  \epsilon, \|\cdot \|_F)$, we define $\tilde{S}$ as the $\epsilon$-covering set for the unitary group $U(2^m)$, $\tilde{\mathcal{X}}_Q$ as the $\epsilon'$-covering set of $\mathcal{X}_Q$, and define a set 
\begin{equation}
  \tilde{\mathcal{U}}_{E}:=\left\{\prod_{i\in\{N_{ge}\}}u_i(\bx)\prod_{j\in\{N_g-N_{ge}\}}u_j(\bx)\Big| u_i(\bx)\in \tilde{S} \right\},
 \end{equation}
 where $u_i(\bm{\theta}_i)$ and $u_j$ specify to the variational and fixed quantum gates, respectively. Note that for any encoding circuit $U_E(\bx)$, we can always find a unitary $U_{E,\epsilon}(\bx)\in\tilde{\mathcal{U}}_{E}$ where each $u_g(\bx)$ is replaced by the nearest element in  the covering set $\tilde{S}$. To this end, following the definition of covering number, the discrepancy between $U_E(\bx) (\ket{0}\bra{0})^{\otimes N} U_E(\bx)^{\dagger}\in \mathcal{X}_Q$ and $U_{E,\epsilon}(\bx) (\ket{0}\bra{0})^{\otimes N} U_{E,\epsilon}(\bx)^{\dagger}\in \tilde{\mathcal{X}}_Q$  under the Frobenius norm satisfies
 \begin{eqnarray}\label{append:eqn:prooflem3-1}
  && \left\|U_E(\bx) (\ket{0}\bra{0})^{\otimes N} U_E(\bx)^{\dagger} - U_{E,\epsilon}(\bx) (\ket{0}\bra{0})^{\otimes N} U_{E,\epsilon}(\bx)^{\dagger}\right\|_F \nonumber\\
  \leq && 2\left\|U_E(\bx) (\ket{0}\bra{0})^{\otimes N} U_E(\bx)^{\dagger} - U_{E,\epsilon}(\bx) (\ket{0}\bra{0})^{\otimes N} U_{E,\epsilon}(\bx)^{\dagger}\right\| \nonumber\\
  \leq &&  2\|U_E(\bx) -U_{E_\epsilon}(\bx)\|   \|(\ket{0}\bra{0})^{\otimes N} \| \nonumber\\
  \leq &&   4 N_{ge} \epsilon,
 \end{eqnarray} 
where the first inequality uses $\|X\|_F\leq rank(X)\|X\|$ and the relation in Eq.~(\ref{append:eqn:prooflem-2}), the second inequality comes from the Cauchy–Schwarz inequality, and the last inequality follows  $ \|U_E(\bx) -U_{E,\epsilon}(\bx)\| \leq N_{ge}\epsilon$ and $\|(\ket{0}\bra{0})^{\otimes N} \|=1$. In other words, $\epsilon'=2 N_{ge} \epsilon$ and  $\tilde{\mathcal{X}}_Q$ is a $(4 N_{ge} \epsilon)$-covering set for $\mathcal{X}_{Q}$. In conjunction with the observation that there are $|\tilde{\mathcal{S}}|^{N_{ge}}$ combinations for the gates in $\tilde{\mathcal{U}}_E$ and the results in Lemma \ref{lem:cov-fund}, we obtain the cardinality of the set $\tilde{\mathcal{U}}_{E}$ is upper bounded by $|\tilde{\mathcal{U}}_{E}|\leq\left(\frac{7}{\epsilon} \right)^{4^m N_{ge}}$. Accordingly, supported by Eq.~(\ref{append:eqn:prooflem3-1}), the covering number of $\mathcal{X}_Q$ satisfies 
\begin{equation}
   \mathcal{N}(\mathcal{X}_Q,  4N_{ge}\epsilon, \|\cdot \|_F)\leq \left(\frac{7}{\epsilon} \right)^{4^m N_{ge}}.
\end{equation}
After simplification, we have 
\begin{equation}
 \mathcal{N}(\mathcal{X}_Q,  \epsilon, \|\cdot \|_F)\leq \left(\frac{28N_{ge}}{\epsilon} \right)^{4^m N_{ge}}.  
\end{equation}
 
\end{proof}

\section{Proof of Corollary 1}\label{append:proof-them4}
The proof leverages the following two lemmas related to the Haar measure and the unitary $t$-design.

\begin{lemma}\label{lem:Tr(WW)}
    Let $\{ W_y\}_{y\in Y} \subset U(d)$ form a unitary $t$-design with $t>1$, and let $A, B: \mathcal{H}_d\to\mathcal{H}_d$ be arbitrary linear operators. Then
    \begin{equation}
        \frac{1}{|Y|}\sum_{y\in Y}\Tr[W_{y}AW_{y}^{\dagger}B]=\int_{\haar}d\mu(W)\Tr[W_{y}AW_{y}^{\dagger}B]=\frac{\Tr[A]\Tr[B]}{d}.
    \end{equation}
\end{lemma}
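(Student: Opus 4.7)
The plan is to establish the chain of equalities in two separate pieces: first the $t$-design equality, then the Haar integral equality. The first equality is essentially tautological from the definition of a unitary $t$-design, so the real content lies in computing the Haar integral on the right.

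First, I would unpack the integrand as a polynomial in the matrix entries of $W$ and $W^\dagger$. The quantity $\Tr[W A W^\dagger B]$ expands as $\sum_{i,j,k,l} W_{ij} A_{jk} (W^\dagger)_{kl} B_{li} = \sum_{i,j,k,l} W_{ij} \overline{W_{lk}} A_{jk} B_{li}$, which is a polynomial of degree $(1,1)$ in $(W, W^\ast)$, hence of total degree $2$. By the very definition of a unitary $t$-design with $t \geq 1$ (and in particular for the hypothesis $t > 1$), such low-degree polynomial moments taken over $\{W_y\}_{y \in Y}$ coincide with those taken against the Haar measure $d\mu(W)$, which immediately gives the first equality.

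For the second equality, the plan is to invoke the standard left-right unitary invariance of the Haar measure. Define the twirl $\mathcal{T}(A) := \int_{\haar} d\mu(W)\, W A W^\dagger$. For any fixed $V \in U(d)$, the invariance $d\mu(VW) = d\mu(WV) = d\mu(W)$ implies $V \mathcal{T}(A) V^\dagger = \mathcal{T}(A)$, so $\mathcal{T}(A)$ commutes with every unitary and is therefore proportional to the identity by Schur's lemma. Taking traces fixes the constant: $\Tr[\mathcal{T}(A)] = \Tr[A]$, whence $\mathcal{T}(A) = \Tr[A] \mathbb{I}/d$. Substituting into $\Tr[\mathcal{T}(A) B]$ yields $\Tr[A]\Tr[B]/d$, which concludes the argument.

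I do not anticipate any genuine obstacle here, as the result is a textbook consequence of Schur's lemma applied to the commutant of the defining representation of $U(d)$; the only modest subtlety is to note explicitly that the first equality only requires the $t$-design property at degree $t \geq 1$, which is why the stated assumption $t > 1$ is more than sufficient. If one preferred, one could alternatively derive the twirl identity from the $d=1$ case of the Weingarten formula, but Schur's lemma gives the cleaner and more self-contained derivation and is the route I would take.
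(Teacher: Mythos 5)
Your proof is correct. Note that the paper itself offers no proof of this lemma: it is stated in SM~G as a known Haar-integration identity (standard in the barren-plateau literature) and used directly in the proof of Corollary~1, so there is no argument to compare against. Your two-step derivation is exactly the standard one and fills the gap cleanly: the integrand $\Tr[WAW^{\dagger}B]$ is a balanced polynomial of degree $(1,1)$ in the entries of $(W,W^{*})$, so the $t$-design property with $t\geq 1$ already forces agreement with the Haar average, and the Haar average itself follows from the twirl $\mathcal{T}(A)=\int_{\haar}d\mu(W)\,WAW^{\dagger}$ commuting with every unitary, hence (since the defining representation of $U(d)$ is irreducible) $\mathcal{T}(A)=\frac{\Tr[A]}{d}\mathbb{I}$ by Schur's lemma, which gives $\Tr[\mathcal{T}(A)B]=\Tr[A]\Tr[B]/d$ after exchanging trace and integral by linearity. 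The only cosmetic remark is that the same reasoning proves the companion second-moment identity (Lemma~A2) only with genuinely more work via Weingarten calculus or the two-copy commutant, which is presumably why the paper states both lemmas at the $t>1$ level without proof.
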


\begin{lemma}\label{lem:Tr(WW)Tr(WW)}
    Let $\{ W_y\}_{y\in Y} \subset U(d)$ form a unitary $t$-design with $t>1$, and let $A, B, C, D: \mathcal{H}_d\to\mathcal{H}_d$ be arbitrary linear operators. Then
    \begin{align}
        \frac{1}{|Y|}\sum_{y\in Y}\Tr[W_{y}AW_{y}^{\dagger}B]\Tr[W_{y}CW_{y}^{\dagger}D]=&\int_{\haar}d\mu(W)\Tr[W_{y}AW_{y}^{\dagger}B]\Tr[W_{y}CW_{y}^{\dagger}D] \nonumber \\
        =&\frac{1}{d^2-1}\left(\Tr[A]\Tr[B]\Tr[C]\Tr[D] + \Tr[AC]\Tr[BD] \right) \nonumber \\
        & - \frac{1}{d(d^2-1)}\left(\Tr[AC]\Tr[B]\Tr[D] + \Tr[A]\Tr[C]\Tr[BD] \right).
    \end{align}
\end{lemma}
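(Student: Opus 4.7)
\medskip

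\noindent\textit{Proof proposal for Lemma A2.} The plan is to reduce the claim to a standard Haar second-moment (twirl) computation. The left equality — that the design average equals the Haar integral — is immediate from the definition of a unitary $2$-design, since the integrand $\Tr[WAW^\dagger B]\Tr[WCW^\dagger D]$, when expanded, is a homogeneous polynomial of bidegree $(2,2)$ in the entries of $W$ and $\overline{W}$; any $t$-design with $t>1$ (i.e.\ $t\geq 2$) integrates such polynomials exactly like the Haar measure. So the real content is evaluating the Haar integral on the right.

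For that, I would rewrite the integrand using the tensor trick
\begin{equation}
\Tr[WAW^\dagger B]\,\Tr[WCW^\dagger D]=\Tr\!\left[(B\otimes D)\,(W\otimes W)(A\otimes C)(W^\dagger\otimes W^\dagger)\right],
\end{equation}
so the Haar integral becomes $\Tr[(B\otimes D)\,T(A\otimes C)]$ with the twirl map
\begin{equation}
T(X)\;=\;\int_{\haar}\!d\mu(W)\,(W\otimes W)\,X\,(W^\dagger\otimes W^\dagger).
\end{equation}
By construction $T(X)$ commutes with $U\otimes U$ for every $U\in U(d)$, and by Schur--Weyl duality the commutant of the diagonal action of $U(d)$ on $(\mathbb{C}^d)^{\otimes 2}$ is spanned by the identity $I$ and the swap operator $F$. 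Therefore $T(A\otimes C)=\alpha\,I+\beta\,F$ for scalars $\alpha,\beta$ that depend linearly on $A,C$.

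To pin down $\alpha,\beta$, I would take traces of both sides against $I$ and $F$, using the standard identities $\Tr[I]=d^2$, $\Tr[F]=d$, $\Tr[F^2]=d^2$, $\Tr[(X\otimes Y)]=\Tr[X]\Tr[Y]$, and $\Tr[F(X\otimes Y)]=\Tr[XY]$, together with the fact that $F$ commutes with $W\otimes W$ (so the Haar averaging leaves $\Tr[F(A\otimes C)]=\Tr[AC]$ invariant). This yields the $2\times 2$ linear system
\begin{equation}
\alpha d^2+\beta d=\Tr[A]\Tr[C],\qquad \alpha d+\beta d^2=\Tr[AC],
\end{equation}
whose solution is $\alpha=\dfrac{d\Tr[A]\Tr[C]-\Tr[AC]}{d(d^2-1)}$ and $\beta=\dfrac{d\Tr[AC]-\Tr[A]\Tr[C]}{d(d^2-1)}$.

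Finally, I would plug back in: the desired integral equals $\alpha\Tr[B]\Tr[D]+\beta\Tr[BD]$, and collecting coefficients produces exactly the four-term expression in the statement. There is no real obstacle — the argument is a routine twirl computation; the only thing that needs care is the Schur--Weyl step justifying the $\alpha I+\beta F$ ansatz (alternatively one could bypass it by invoking the explicit Weingarten formula for $U(d)$ at $n=2$, whose values $\mathrm{Wg}(\mathrm{id},d)=1/(d^2-1)$ and $\mathrm{Wg}((12),d)=-1/(d(d^2-1))$ directly reproduce the two coefficients appearing in the claim).
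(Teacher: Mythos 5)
Your proposal is correct: the tensorization identity, the Schur--Weyl ansatz $T(A\otimes C)=\alpha I+\beta F$, the $2\times 2$ trace system, and the final substitution all check out and reproduce the stated coefficients exactly (including the sign and the $1/(d(d^2-1))$ factor, which agree with the Weingarten values you quote). Note that the paper itself states this lemma without proof, treating it as a standard second-moment/Weingarten fact imported from the barren-plateau literature, so there is no in-paper argument to compare against; the twirl computation you give is the standard derivation one would supply. The only point worth making explicit if this were written up is that the commutant argument requires $d\geq 2$ (for $d=1$ the identity and the swap coincide and the formula is degenerate anyway), and that $\Tr[F(W\otimes W)(A\otimes C)(W^{\dagger}\otimes W^{\dagger})]=\Tr[AC]$ uses cyclicity of the trace together with $[F,W\otimes W]=0$, exactly as you indicate.
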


\begin{corollary-non}[Restatement of Corollary 1]\label{append:thm:nogonc}
Following notations in Lemmas 2 and 3, when the  encoding unitary $\{U_E(\bx)|\bx\in \mathcal{X}\}$ forms a 2-design, with probability $1-\delta$, the empirical QC follows  $|	\Tr\left(\sigma(\bxik)\sigma(\bx)\right) - \frac{1}{2^N}| \leq \sqrt{\frac{3}{2^{2N}\delta}}$. When the adopted  ansatz  $\{U(\btheta)|\btheta\in \Theta\}$ forms a 2-design, with probability $1-\delta$, the empirical QC follows $|\Tr(\rhoik o^{(k')}) -\frac{\Tr(o^{(k')})}{2^{D}} |  < \sqrt{\frac{\Tr(o^{(k')})^2 +  2\Tr((o^{(k')})^2)}{2^{2D} \delta }}$.\end{corollary-non}
\begin{proof}[Proof of Corollary 1]
We complete the proof by separately analyzing the concentration behavior  of the encoding unitary and the Ans\"atze. Note that the proof can be adapted from the proof presented in the studies of \cite{mcclean2018barren,cerezo2020cost}. We have included it here for completeness and to establish a clear link between our research and previous work on this subject.  

\medskip
\noindent\underline{\textit{Concentration of the encoding unitary.}} Recall that Condition (iii) in Lemma 2 concerns the distance between two feature states $\rhoik$ and $\rho^{(i',k')}$ for $\forall i,i\in[n_c]$ and $\forall k,k' \in[K]$. In this regard, we quantify the distance between the encoded state $\sigma(\bxik)$ and $\sigma(\bx)$ with $\bx\sim \mathcal{X}$ when the deep encoding ansatz $U_E$ is employed. In particular, we have  
\begin{eqnarray}\label{append:eqn:proof-thm4-exp-ue}
	&& \mathbb{E}_{\bx\sim \mathcal{X}}\left( \Tr\left(\sigma(\bxik)\sigma(\bx) \right) \right) \nonumber\\
	= &&\mathbb{E}_{\bx\sim \mathcal{X}} \left( \Tr\left(\sigma(\bxik)  U_E(\bx)(\ket{0}\bra{0})^{\otimes N}U_E(\bx)^{\dagger} \right) \right) \nonumber\\
	= && \int_{\haar} d\mu(U) \Tr\left(\sigma(\bxik)  U(\ket{0}\bra{0})^{\otimes N}U \right) \nonumber\\
	 = && \frac{\Tr(\sigma(\bxik))\Tr(\ket{0}\bra{0})^{\otimes N})}{2^N} \nonumber\\
	 = && \frac{1}{2^N},
\end{eqnarray}
where the third equality uses Lemma \ref{lem:Tr(WW)}. 
Moreover, the variance of the term $\Tr(\sigma(\bxik)\sigma(\bx))$ yields
\begin{eqnarray}\label{append:eqn:proof-thm4-var-ue}
	&& \text{Var}_{\bx\sim \mathcal{X}}\left( \Tr\left(\sigma(\bxik)\sigma(\bx) \right) \right) \nonumber\\
	= && \mathbb{E}_{\bx\sim \mathcal{X}}\left( \Tr\left(\sigma(\bxik)\sigma(\bx) \right)^2 \right) - \mathbb{E}_{\bx\sim \mathcal{X}}\left( \Tr\left(\sigma(\bxik)\sigma(\bx) \right) \right)^2 \nonumber\\
	 = && \int_{\haar} d\mu(U) \Tr\left(\sigma(\bxik)  U(\ket{0}\bra{0})^{\otimes N}U \right)\Tr\left(\sigma(\bxik)  U(\ket{0}\bra{0})^{\otimes N}U \right) - \frac{1}{2^{2N}} \nonumber\\
	 = && \frac{1}{2^{2N}-1}\left(1 + \Tr(\sigma(\bxik)^2) \right) - \frac{1}{2^{2N}(2^{2N}-1)}\left( \Tr(\sigma(\bxik)^2) + 1\right) - \frac{1}{2^{2N}} \nonumber\\
	 \leq && \frac{1}{2^{2N -2}}  - \frac{1}{2^{2N}} \nonumber\\
	 = && \frac{3}{2^{2N}},
\end{eqnarray}
where the second equality uses the property that the deep encoding  unitary forms 2-design and the result in Eq.~(\ref{append:eqn:proof-thm4-exp-ue}), the third equality comes from Lemma \ref{lem:Tr(WW)}, the  inequality adopts $\Tr(\sigma^2)\leq 1$ and $2^{2N}-1 > 2^{2N-1}$, and the last equality is obtained via simplification. 

Supported by the Chebyshev's inequality $\Pr(|X-\mathbb{E}[X]|\geq a)\leq \text{Var}[X]/a^2$, Eqs.~(\ref{append:eqn:proof-thm4-exp-ue}) and (\ref{append:eqn:proof-thm4-var-ue}) indicate 
 \[ \Pr\left(\Big|\Tr\left(\sigma(\bxik)\sigma(\bx)\right) - \frac{1}{2^N}\Big|\geq \tau \right)\leq \frac{3}{2^{2N} \tau^2}. \]
Equivalently, with probability $1-\delta$, we have 
 \begin{equation}
 \Big|	\Tr\left(\sigma(\bxik)\sigma(\bx)\right) - \frac{1}{2^N}\Big| \leq \sqrt{\frac{3}{2^{2N}\delta}}. 
 \end{equation}

\medskip
\noindent\underline{\textit{Concentration of the deep ansatz.}}  Recall Condition (ii) in Lemma 2. Given a feature state $\rhoik$, for $\forall i\in [n_c]$ and $\forall k\in[K]$ and a measure operator  $o^{(k)}$, the optimal feature state should satisfy \[\Tr(\rho^{*(i,k)} o^{(k')})=\delta_{k,k'}.\] 
In other words, we should examine the value of $\Tr(\rhoik o^{(k')})$ when $\rhoik$ is prepared by a deep \revise{ansatz} $U(\btheta)$. Specifically, we have
\begin{eqnarray}\label{append:eqn:proof-thm4-exp-utheta}
&& \mathbb{E}_{\btheta\sim \Theta}\left(\Tr(\rhoik o^{(k')})\right) \nonumber\\
= && \mathbb{E}_{\btheta\sim \Theta}\left(\Tr(U(\btheta)\sigma(\bxik)U(\btheta)^{\dagger}(o^{(k')}\otimes \mathbb{I}_{2^{N-D}})\right) \nonumber\\
= && \int_{\haar}d\mu(U)\Tr\left( U  \sigma(\bxik)  U^{\dagger} (o^{(k')}\otimes \mathbb{I}_{2^{N-D}}) \right) \nonumber\\
= && \frac{ \Tr(o^{(k')})(2^{N-D})}{2^N}\nonumber\\
= && \frac{\Tr(o^{(k')})}{2^{D}},
\end{eqnarray}
where the first equality comes from the explicit form of QC in Eq.~(4) of the main text, the second equality uses the fact that $U$ follows the Haar distribution, and the last second equality comes from Lemma \ref{lem:Tr(WW)}.

We then quantify the variance of $\Tr(\rhoik o^{(k')})$, i.e.,
\begingroup
\allowdisplaybreaks
\begin{eqnarray}\label{append:eqn:proof-thm4-var-utheta}
    &&\text{Var}_{\btheta\sim \Theta}\left(\Tr(\rhoik o^{(k')})\right) \nonumber\\ 
    = &&  \mathbb{E}_{\btheta\sim \Theta}\left(\Tr(\rhoik o^{(k')})^2\right)  -  \left( \mathbb{E}_{\btheta\sim \Theta} \left( \Tr(\rhoik o^{(k')}) \right) \right)^2 \nonumber\\
    = && \int_{\haar}d\mu(U)\Tr\left(U  \sigma(\bxik)  U^{\dagger} (o^{(k')}\otimes \mathbb{I}_{2^{N-D}})\right)^2 - \frac{\Tr(o^{(k')})^2}{2^{2D}} \nonumber\\
        =&&\frac{1}{2^{2N}-1}\left(\Tr(\sigma(\bxik))\Tr(o^{(k')}\otimes   \mathbb{I}_{2^{N-D}})\Tr(\sigma(\bxik))\Tr(o^{(k')}\otimes   \mathbb{I}_{2^{N-D}}) + \Tr(\sigma(\bxik)^2)\Tr((o^{(k')}\otimes   \mathbb{I}_{2^{N-D}})^2) \right) \nonumber \\
        && - \frac{1}{2^N(2^{2N}-1)}\left(\Tr(\sigma(\bxik)^2)\Tr(o^{(k')}\otimes   \mathbb{I}_{2^{N-D}})^2 + \Tr(\sigma(\bxik))^2\Tr((o^{(k')}\otimes   \mathbb{I}_{2^{N-D}})^2) \right) -  \frac{\Tr(o^{(k')})^2}{2^{2D}} \nonumber\\
        \leq &&\frac{1}{2^{2N}-1}\left(\Tr(o^{(k')}\otimes   \mathbb{I}_{2^{N-D}})^2  +  \Tr((o^{(k')}\otimes   \mathbb{I}_{2^{N-D}})^2) \right)   -  \frac{\Tr(o^{(k')})^2}{2^{2D}}\nonumber\\
       = && \frac{1}{2^{2N}-1}\left(\Tr(o^{(k')})^2 2^{2N-2D}  +  \Tr((o^{(k')})^2) 2^{2N-2D}  \right)  -  \frac{\Tr(o^{(k')})^2}{2^{2D}} \nonumber\\
        \leq  && \frac{ \Tr(o^{(k')})^2 +  \Tr((o^{(k')})^2)   }{2^{2D-1}}  -  \frac{\Tr(o^{(k')})^2}{2^{2D}} \nonumber\\ 
  =  && \frac{ \Tr(o^{(k')})^2 +  2\Tr((o^{(k')})^2)  }{2^{2D}}.
   \end{eqnarray}
\endgroup
 where the second equality uses the fact that $U$ forms the 2-design and Eq.~(\ref{append:eqn:proof-thm4-exp-utheta}), the the third equality comes from Lemma \ref{lem:Tr(WW)Tr(WW)},  the first inequality arises from dropping some positive terms, the last second equality employs $\Tr(A\otimes B)=\Tr(A)\Tr(B)$ and $(A\otimes B)(C\otimes D)=(AC)\otimes (BD)$, and the last inequality exploits $(2^{2N}-1)^{-1}> (2^{N-1})^{-1}$, and the last equalities is obtained via simplification.

Supported by the Chebyshev's inequality $\Pr(|X-\mathbb{E}[X]|\geq a)\leq \text{Var}[X]/a^2$, Eqs.~(\ref{append:eqn:proof-thm4-exp-utheta}) and (\ref{append:eqn:proof-thm4-var-utheta}) indicate 
 \[ \Pr\left(\Big|\Tr(\rhoik o^{(k')}) -\mathbb{E}\left( \Tr(\rhoik o^{(k')}) \right)\Big|\geq \tau \right)\leq \frac{\Tr(o^{(k')})^2 +  2\Tr((o^{(k')})^2)}{2^{2D} \tau^2}. \]
Equivalently, with probability $1-\delta$, we have
 \begin{equation}
  \Big|\Tr(\rhoik o^{(k')}) -\frac{\Tr(o^{(k')})}{2^{D}} \Big|  < \sqrt{\frac{\Tr(o^{(k')})^2 +  2\Tr((o^{(k')})^2)}{2^{2D} \delta }}.
 \end{equation}
 
\end{proof}

\section{More details for the implications of Theorem 1}\label{append:sec:imp-thm1}
 
This section expands the implications of Theorem 1 omitted in the main text. In SM~\ref{append:subsec-ETF}, we elucidate how our results in Theorem 1 relate to EFT and SIC-POVM. Then, in SM \ref{append:subsec-IT}, we provide more explanations about how our results connect with the results in Ref.~\cite{banchi2021generalization}. Next, in SM \ref{append:subsec:gene-bound-discussion}, we interpret why prior generalization bounds become vacuous in the over-parameterized regime. Subsequently, we illustrate how our results complements with convergence theory of quantum neural networks in SM~\ref{append:subsec:conv-QNN}. After, in SM~\ref{append:subsec:expected_risk_approximate}, we discuss the expected risk of QCs when the training loss is near-optimal. Last, in SM~\ref{append:subsec:sum_strategy_QC}, we summarize how the results in Theorem 1 provide insights into the construction strategies of  QCs with the improved performance.

\subsection{Connection with ETF and SIC-POVM}\label{append:subsec-ETF}
 
In this subsection, we explain how the results in Theorem 1 connect with ETF and SIC-POVM. It is noteworthy that the definition of ETF discussed in the context of deep learning and quantum information theory differs, where the former pertains to the case of $2^D \geq K$, while the latter is focused on the setting of $2^D\leq K$.  As our work resides at the intersection of quantum computing and machine learning, the results in Theorem 1 encompass both of these settings. To this end, we begin by presenting the definition of general simplex ETFs utilized in deep learning and elucidating their connections to our findings. Then, we introduce the definition of formal ETFs utilized in quantum information theory and elaborate their connections with our results. Last, we exhibit the connections of SIC-POVM and the results of Theorem 1.  The relationship between the results of Theorem 1 and general simplex ETF, formal ETF, and SIC-POVM is based on the interplay between  the locality of measure operators $D$ and the number of classes $K$, as shown in Fig.~\ref{fig:ETF_SIC}. 
 
 \begin{figure}[h!]
	\centering
\includegraphics[width=0.99\textwidth]{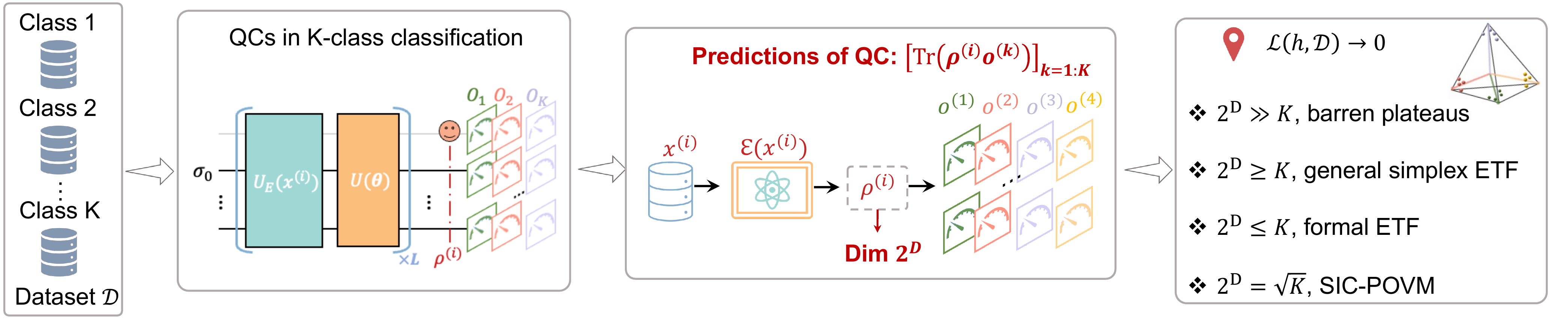}
\caption{\small{\textbf{Connection with general simplex ETFs, formal ETFs, and SIC-POVM}.}}
\label{fig:ETF_SIC}
\end{figure}  
  
\medskip
\noindent\textit{\underline{Connection with general simplex ETF and deep learning}}. The formal definitions of general simplex ETFs is as follows.
\begin{definition}[General simplex ETF, \cite{papyan2020prevalence}]\label{def:NC} 
The standard simplex equiangular tight frame (ETF) is a collection of points in $\mathbb{R}^K$ specified by the columns of $M = \sqrt{\frac{K}{K-1}} (\mathbb{I}_K - \frac{1}{K}\bm{1}_K\bm{1}_K^{\top})$. The general simplex ETF is defined as a collection of points in $\mathbb{R}^{2^D}$ with  $2^D \geq K$ specified by columns of 
\begin{equation}
\tilde{M}\propto \sqrt{\frac{K}{K-1}}P\left(\mathbb{I}_K - \frac{1}{K}\bm{1}_K\bm{1}_K^{\top} \right)	
\end{equation}
with $P\in \mathbb{R}^{2^D\times K}$ is an orthonormal matrix. 
\end{definition}
\noindent Refs.~\cite{papyan2020prevalence,han2022neural} proved that for a deep neural classifier with perfect training, its last-layer features form general simplex ETFs, dubbed \textit{neural collapse}. Suppose that the dimension of the last-layer features is $2^D$ and the number of classes is $K$ with $2^D\geq K$. According to Definition \ref{def:NC}, each class mean corresponds to one column in $\tilde{M}$ and any two class-means $\tilde{M}_{:,i}$ and $\tilde{M}_{:,j}$ are equiangular, i.e.,
\begin{equation}\label{eqn:ETF-DC}
	\langle \tilde{M}_{:,i}, \tilde{M}_{:,j}\rangle = -\frac{1}{K-1},~\forall i, j \in [K].
\end{equation}

We would like to emphasize that Theorem 1 in the main text  provides the same insight into the nature of quantum feature states when a QC reaches perfect training. Specifically, Theorem 1 demonstrates that when $2^D\geq K$, the feature states of a QC form a general simplex ETF, up to a scaling factor. Recall that the first two conditions in Theorem 1 state that the feature states have zero variance within the same class, i.e., 
\[\rho^{(k)}\equiv\rho^{(1,k)}=,..,=\rhoik,...=\rho^{(n_c,k)}\in \mathbb{C}^{2^D\times 2^D}\] with $n_c$ being the number of training examples in each class, and are of equal length and orthogonal in different classes, i.e., 
\[\Tr\Big(\rho^{(k)}\rho^{(k')}\Big)=\delta_{k,k'},~\forall k,k'\in [K].\]
In this respect, the feature states form an orthogonal frame. Given that any orthogonal frame can be transformed into a simplex ETF by scaling down its global mean, we obtain that the feature states of QCs constitute a general simplex ETF, as defined in Definition \ref{def:NC}, when these conditions are met. Denote $\overrightarrow{\rho}^{(k)}$ as the vectorization of $\rho^{(k)}$ for any $k\in[K]$ and the global feature mean as $\overrightarrow{\mu}=\sum_{k=1}^K \rho^{(k)}/K$. The distance for any two quantum feature states after scaling down their global mean is
\begin{eqnarray}\label{eqn:ETF-QC}
	 \Big \langle \overrightarrow{\rho^{(k)}}-\overrightarrow{\mu}, \overrightarrow{\rho^{(k')}} - \overrightarrow{\mu} \Big \rangle   = \Big\langle \overrightarrow{\rho^{(k)}}, \overrightarrow{\rho^{(k')}} \Big\rangle - \Big\langle \overrightarrow{\rho^{(k)}}, \overrightarrow{\mu} \Big\rangle - \Big\langle  \overrightarrow{\rho^{(k')}}, \overrightarrow{\mu} \Big\rangle + \Big\langle \overrightarrow{\mu},  \overrightarrow{\mu} \Big\rangle  = 0 - \frac{1}{K} - \frac{1}{K} + \frac{K}{K^2} = -\frac{1}{K}.
\end{eqnarray}
The combination of Eq.~(\ref{eqn:ETF-DC}) with Eq.~(\ref{eqn:ETF-QC}) suggests that QCs and deep neural classifiers exhibit similar learning behaviors, in which the corresponding features tend to form a general simplex ETF to reach zero training loss. Another interesting property is that the distance between different feature states only depends on the number of classes $K$ but is independent of the feature dimension $2^D$, which contrasts with formal ETFs when $2^D\leq K$, as we will elaborate on later. Besides, Ref.~\cite{montanaro2008lower} proves that the lower bound of the probability error in quantum state discrimination is
\begin{equation}\label{eqn:error_dist_state}
	\frac{(K-1)F^m}{2K} \leq  P_E \leq 2\sum_{k'> k }\sqrt{p_kp_k'}\sqrt{F(\rho^{(k)},\rho^{(k')})},
\end{equation}
where $K$ refers to the number of quantum states, $m$ denotes the number of measurements, and $F$ is the lower-bound fidelity of any pair of states, i.e., $F(\rho^{(k)},\rho^{(k')})\geq F$ for $\forall k,k'\in [K]$. The orthogonality of features states suggests $P_E =0$ for the optimal QCs. However, we will show that this is not the case when $2^D<K$.

\medskip
\noindent\textit{\underline{Connection with formal ETFs}}.  In the following, we expand the discussions about how the optimal QCs relate to formal ETF and SIC-POVM when $2^D \leq  K$. We acknowledge that exploring the setting of $2^D \leq  K$ is more of a theoretical interest rather than a practical one. It is because in most datasets, the number of classes is limited, and it is natural to set $2^D \geq  K$ for efficient learning.

Let us first recall the findings of Theorem 1, where the perfect  training of QCs can be attained by forming an orthogonal frame using either feature states or measurement operators. This orthogonality is maintained when $2^D\geq K$. We note that when $K=2^D$, our results are in line with both the general simplex ETF in Definition \ref{def:NC} and the formal ETF expressed below.
\begin{definition}[Formal ETF, \cite{strohmer2003grassmannian,sustik2007existence}]\label{def:ETF}
	Let $M$ be a $2^D\times K$ matrix  whose columns are  $M_1,...,M_K$ with $2^D\leq K$. The matrix $M$ is called an equiangular tight frame (ETF) if it satisfies three conditions.
	\begin{enumerate}  
		\item Each column has unit norm with $\|M_{:,i}\|_2=1$ for $\forall i \in [K]$. 
		\item The columns are equiangular. For some nonnegative $\alpha$, we have 
		$|\langle M_{:,i}, M_{:,j} \rangle |=\alpha$ for $i\neq j$.
		\item The columns form a tight frame. That is, $MM^{\dagger}  = (K/2^D)\mathbb{I}_{2^D\times 2^D}$.
	\end{enumerate}
\end{definition}
\noindent According to the above definition, an immediate observation is that when $K = 2^D$, the feature states enabling perfect training, i.e., Conditions (i)\&(ii) in Theorem 1, form a formal ETF in which the absolute inner product between distinct feature vectors is zero, i.e., $\alpha=\Tr(\rho^{(k)}\rho^{(k')})=0$.

\medskip 
We now turn our attention to comprehending the optimal QCs in the case of $2^D<K$. Specifically, in the task of $K$-class classification, define the set of measurement operators $\bm{o}=\{o^{(k)}\}$ as
\[
	o^{(k)} = M_{:,k}M_{:,k}^{\dagger}\in \mathbb{C}^{2^D\times 2^D},~\forall k \in [K],\]
where $M_{:,k}$ refers to the $k$-th column of ETF $M$ in Definition \ref{def:ETF}. The loss function to be minimized is  $\mathcal{L} = \frac{1}{n} \sum_{i=1}^{n_c}\sum_{k=1}^K \|\left[\Tr(\rhoik o^{(k)})\right]_{k=1:K} - \yik \|_2^2,$
 where the label vector is rewritten as $\yik=M_{:,k}$. In this setting, it is easy to extend  the proof techniques of Theorem 1 to show that when perfect training of QCs happens with $\mathcal{L}=0$, the feature states are pure states and have the vanished variability in the same class. Moreover, all feature states are equal length and form a formal ETF $M$. Supported by the results in Ref.~\cite{sustik2007existence}, for any two feature states from the varied classes, we have   
 \begin{equation}\label{eqn:ETF-QC-dist}
 	\Big|\Big \langle \overrightarrow{\rho^{(k)}}, \overrightarrow{\rho^{(k')}}\Big \rangle\Big| =  \sqrt{\frac{K-2^D}{2^D(K-1)}},~\forall k, k'\in[K].
 \end{equation} 
In conjunction with Eq.~(\ref{eqn:error_dist_state}) and Eq.~(\ref{eqn:ETF-QC-dist}), we obtain that the feature states of optimal QCs are indistinguishable, where the error probability is lower bounded by
\begin{equation}
	\frac{(K-1)}{2K}\left(\frac{K-2^D}{2^D(K-1)}\right)^{m/2} \leq  P_E.
\end{equation}  

The above result contradicts the learning dynamics of optimal QCs in the setting of $2^D \geq K$ and $2^D<K$, where the former can achieve the zero error probability but the latter cannot. Moreover, when $2^D \geq K$, the general simplex ETF always exists for any $D$, while numerical results show that formal ETFs arise for very few pairs $(2^D, K)$ \cite{sustik2007existence}. Besides, although achieving zero error probability in discriminating different feature states is unattainable, the optimal QCs in the case of  $2^D<K$ can still achieve a lower bound of error probability that is smaller than that of imperfect QCs with non-zero loss. This is due to the fact that when the feature states do not form an ETF, they are not maximally distant from each other, and the quantify $F$ in Eq.~(\ref{eqn:error_dist_state}) would increase, resulting in a higher lower bound.

\medskip 
\noindent\textit{\underline{Connection with SIC-POVM}}. The aforementioned results can be effectively extended to demonstrate the association between the optimal QCs with zero training loss and SIC-POVM, as the latter is a special case of ETF with $2^D=\sqrt{K}$. Consequently, when SIC-POVM is applied, the feature states of the optimal QCs are pure states and equal length,  have the vanished variability in the same class, and form an ETF with $M\in \mathbb{C}^{\sqrt{K}\times K}$. For any two features from the varied classes, their distance is
\begin{equation}
	\Big|\Big \langle \overrightarrow{\rho^{(k)}}, \overrightarrow{\rho^{(k')}}\Big \rangle\Big| =  \sqrt{\frac{K-\sqrt{K}}{\sqrt{K}(K-1)}},~\forall k, k'\in[K].
\end{equation}  
Moreover, the optimal QCs achieves a lower bound of error probability in discriminating feature states that is smaller than that of imperfect QCs with non-zero loss. The corresponding lower bound is  
\begin{equation}
	\frac{K-1}{2K}\left(\frac{K-\sqrt{K}}{\sqrt{K}(K-1)} \right)^{m/2}\leq P_E,
\end{equation}
where $m$ is the number of measurements.

\subsection{Connection with generalization of QML in the view of information theory}\label{append:subsec-IT}

In this subsection, we detail the intrinsic connection and  difference between our work and Ref.~\cite{banchi2021generalization} omitted in the main text. 

Both of our work and Ref.~\cite{banchi2021generalization} achieve the similar results with respect to the variability of feature states, despite a slight difference in the choice of loss functions. Namely, Ref.~\cite{banchi2021generalization} derives its results under the linear loss, while we consider the mean-square loss with an optional regularization term. However, the overall conclusions and findings remain consistent between the two studies. Specifically, in Ref.~\cite[Eq.~(18)]{banchi2021generalization}, it was pointed out that a low training error is achievable for binary classification tasks when the fidelity between two embedded states is small if the inputs are from different classes and high if the inputs are from the same class. For multi-classification tasks, Ref.~\cite[Appendix A.3]{banchi2021generalization} extended this result by showing that the zero training loss can be achieved when embedded states are almost constant within the same class and orthogonal with those of other classes. These properties echo with the geometric interpretation of feature states achieved in Theorem 1. In particular, the obtained Conditions (i)-(ii) of Theorem 1 extend this argument to more general and practical settings, i.e., the prior information about how to construct the optimal measurement is unnecessary and the variational ansatz $U(\btheta)$ is considered. Recall that QC in Eq.~(4) of the main text takes the form
\begin{equation}\label{eqn:QC-gene}
	[h(\rhoik, o^{(1)}),..., h(\rhoik, o^{(k)}),..., h(\rhoik, o^{(K)})],~\text{and}~h(\rhoik, o^{(k)})=\Tr(\rhoik o^{(k)}) ~\forall k\in[K],
\end{equation}  
where $\{o^{(k)}\}$ is a set of  measure operators and $\rhoik$ is the feature state of the $i$-th example in the $k$-th class with $\rhoik=\Tr_D(U(\btheta)\sigma(\bxik)U(\btheta)^{\dagger})$ and $\sigma(\bxik)$ being the embedded state. In this regard,  QCs exploited in Ref.~\cite{banchi2021generalization} are special cases in Eq.~(\ref{eqn:QC-gene}) such that $U(\btheta)$ corresponds to $\mathbb{I}$ and $\{o^{(k)}\}$ corresponds to the optimal measurements.

We next discuss the connections between our work and Ref.~\cite{banchi2021generalization} in the view of generalization error. The authors in Ref.~\cite{banchi2021generalization} prove that with probability $1-\delta$, the generalization bound of QCs yields
\begin{equation}\label{eqn:gene_ban}
	\mathsf{R}_{\text{Gene}}\leq \sqrt{\frac{\mathcal{B}}{n}} + \sqrt{\frac{2\log(1/\delta)}{n}},
\end{equation}
where $n$ is the number of training examples and $\mathcal{B}$  equals to $2^{I_2(X:Q)}$. In  Ref.~\cite[Section IV.A]{banchi2021generalization}, the authors further use information theory to quantify how different embedding methods affect $I_2(X:Q)$ and show that low-entropy datasets and low-dimensional embeddings lead to a smaller $I_2(X:Q)$ and therefor a lower generalization error.   The generalization bound in Eq.~(\ref{eqn:gene_ban}) and the derived generalization bound in Lemma \ref{thm:gene_robust_QNN}  are consistent in terms of the number of encoding gates, where an increase in the number of encoding gates $N_{ge}$ leads to an increased error bound. This is because when the encoding unitary forms a 2-design,  the states from the same class are orthogonal and thus maximize $I_2(X:Q)$, implying a large $\mathcal{B}$.

An attractive feature of our generalization error bound over that in Ref.~\cite{banchi2021generalization} is capturing how the training loss dynamically effects the generalization ability of QCs. This behavior is reflected by the term $|\mathcal{T}|$, which can be decreased from $n$ to $K$ during the optimization. Besides, the generalization bound achieved in our work is more `practical', since it provides a more intuitive description of its dependence on the training loss and the number of encoding gates.

\subsection{Non-vacuous generalization bound in the over-parameterized regime}\label{append:subsec:gene-bound-discussion}
 
In this subsection, we first elucidate why prior results related to the generalization of quantum neural networks such as Refs.~\cite{caro2021generalization,du2022efficient,gyurik2021structural,cai2022sample}, cannot fully account for the generalization ability of over-parameterized QCs when   $n\geq N_t$ given in Definition 1 of the main text. Then, we explain why the generalization error bound derived in Lemma \ref{thm:gene_robust_QNN}  can be applied to the over-parameterized regime.

A crucial reason why prior results fail to explain the generalization ability of over-parameterized QCs is their reliance on the fundamental learning-theoretic technique of uniform convergence  \cite{mohri2018foundations}. The approach taken by these results comprises of two steps: (i) quantify the expressivity of the hypothesis space of QNNs using a complexity measure such as Rademacher complexity, VC dimension, or covering number; (ii) use uniform convergence to estimate the generalization error based on the measured expressivity. However, the expressivity of QNNs grows exponentially with the number of parameters, underlying that the sample size must scale polynomially with the number of parameters  $N_t$. For example, the expressivity and the generalization bound achieved in Refs.~\cite{caro2021generalization} are $O(\exp{(N_t)})$ and $O(\sqrt{N_t/n})$, where  $n$ denotes the number of training examples. The explicitly polynomial dependence on the number of trainable parameters $N_t$ results into a vacuous generalization bound  in the over-parameterized regime with $N_t \gg n$.

The issue of the vacuous generalization bound has also arisen when attempting to explain the generalization ability of over-parameterized deep neural networks. Both empirical and theoretical studies in the field of deep learning theory have highlighted that algorithm-independent and expressivity-induced bounds fail to fully account for the generalization ability of such models \cite{zhang2021understanding,nagarajan2019uniform,berner2021modern}, and one possible solution is to shift the focus from quantifying the expressivity of the entire hypothesis set (i.e., the range of the learning algorithm) to the optimized model. In other words,  an algorithm-dependent generalization bound is needed in order to provide a non-vacuous bound in the over-parameterized regime.
 
Enlightened by the progress in deep learning theory, we leverage  the concept of the algorithmic robustness  in Definition \ref{def:robustness} instead of the expressivity-induced approaches to analyze the generalization ability of QCs. Different from the expressivity-induced bound, algorithmic robustness focuses on the learned model $h_{\mathcal{A}_{\mathcal{D}}}\in \mathcal{H}$ rather than quantifying the whole hypothesis space $\mathcal{H}$ represented by QCs. This ensures the obtained generalization bound in Lemma \ref{thm:gene_robust_QNN} is algorithmic-dependent and is non-vacuous in the over-parameterized regime, i.e., with probability $1-\delta$, the generalization error of the learned QC $\hath_Q$ yields   
 \[\RGENE(\hath_Q)\leq O\left(\sqrt{\frac{|\mathcal{T}_{\mathcal{D}}|4^m N_{ge}\ln \frac{56KN_{ge}}{\epsilon\delta}}{n}}\right),\]
 where $N_{ge}$ is the number of encoding gates, $m$ is the maximum number of qubits that the encoding gate can be applied to, $\epsilon$ is a predefined tolerable error, $n$ refers to the number of training examples, $\mathcal{I}_r^{\mathcal{D}}=\{i\in[n]:\bm{z}^{(i)}\in \mathcal{C}_r\}$, $\xi(\hath):=\max_{\bm{z}\in \mathcal{Z}}(\ell(\hath, \bm{z}))$, and $\mathcal{T}_{\mathcal{D}}:=\{r\in [R]:|\mathcal{I}_r^{\mathcal{D}}|\geq 1\}$. 
 
 Compared with prior expressivity-induced generalization bounds scaling with $O(\sqrt{N_t/n})$, the derived bound in our work  does not explicitly depend on the number of trainable parameters $N_t$. Moreover, the derived bound is algorithmic-dependent in which a lower training loss suggests a lower $|\mathcal{T}_{\mathcal{D}}|$, which in turn leads to a better generalization ability. In the optimal case with zero training loss, we have $|\mathcal{T}_{\mathcal{D}}|=K$. The above features allow that the derived bound in Lemma \ref{thm:gene_robust_QNN} can be used to interpret the over-parameterized QCs.

\begin{figure}[h!]
	\centering
\includegraphics[width=0.65\textwidth]{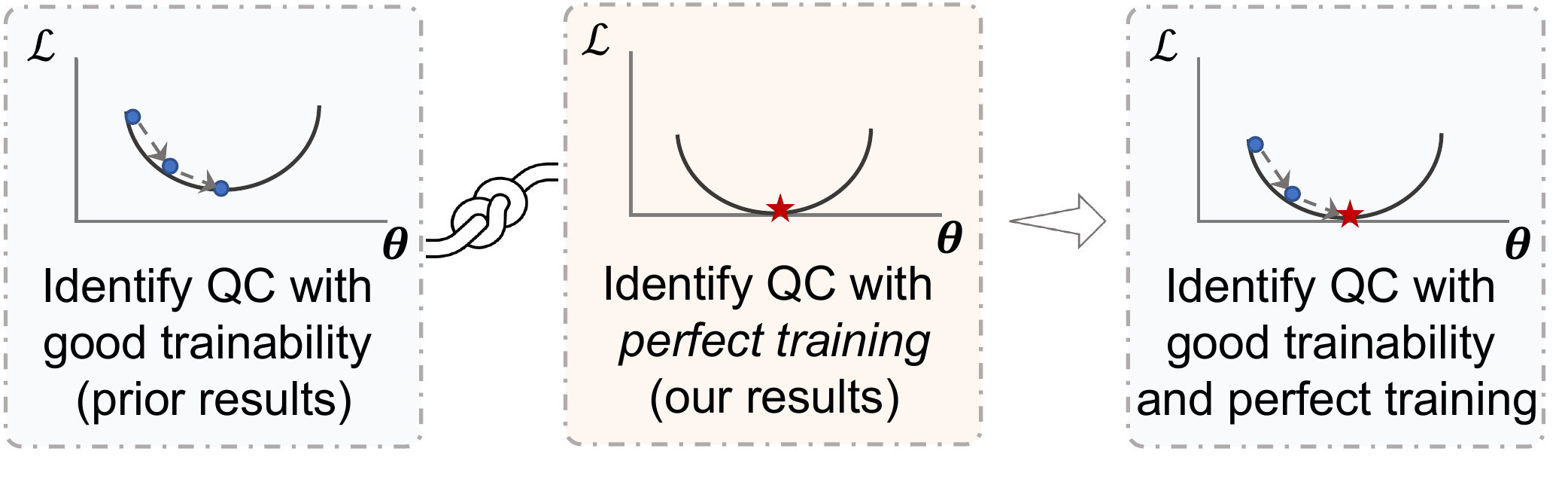}
	\caption{\small{\textbf{The complementary role of analyzing the ability of convergence and the ability to the perfect training}. The left panel indicates the research focus related to barren plateaus, which concerns whether the training parameters can converge to the stationary point, e.g., local and global minima. The middle panel illustrates the research focus of perfect training, which concerns whether the optimal classifier (or equivalently the global minima) can reach the zero loss. The right panel visualizes the complementary role between convergence guarantee and perfect training.}}
	\label{fig:trainability-vs-global-opt}
\end{figure}

\subsection{Relation with convergence theory of quantum neural networks}\label{append:subsec:conv-QNN}
 
We present how the results achieved in Theorem 1 relate to the research on the convergence of QCs in terms of $\RERM$. Note that the loss function $\mathcal{L}$ in Eq.~(\ref{append:eqn:loss_mse_reg_no}) manipulates the ultimate performance of QCs, as it guides the optimization process. Recognizing its significance, considerable effort has been dedicated to understanding the capabilities of QCs by analyzing the properties of the loss function. A crucial research direction in this area is investigating the trainability of QCs, especially for their ability of converging to local or global minima. Theoretical results have shown that improper choices of the encoding unitary $U_E$, ansatz $U(\btheta)$, and measurement operator $\bm{O}$ can result in vanished gradients (a.k.a, barren plateaus), leading to poor trainability of QCs \cite{mcclean2018barren,cerezo2020cost,wang2020noise,holmes2021connecting,marrero2021entanglement,thanasilp2021subtleties,arrasmith2022equivalence}. In addition, some studies have developed advanced tools to diagnose and avoid barren plateaus \cite{grant2019initialization,larocca2022diagnosing,zhang2022escaping,sack2022avoiding}.

In contrast to prior studies that focus on exploring the convergence of QCs, our work aims to comprehensively understand the ultimate performance of QCs by analyzing the loss function in Eq.~(\ref{append:eqn:loss_mse_reg_no}). That is, the conditions under which the optimal parameters of QCs lead to zero loss (a.k.a, perfect training), i.e., $\min_{\btheta} \mathcal{L}(\btheta)\rightarrow 0$. Remarkably, our findings are complementary to the existing research on the convergence ability of QCs. As illustrated in Fig.~\ref{fig:trainability-vs-global-opt}, merely understanding the convergence    of QCs is insufficient to warrant the practical utility of QCs, because the optimized  parameters may still result in a large loss and a large classification error. More specifically, although perfect training is not the necessary condition for the perfect classification, converging to a very large training loss may imply a high classification error, at least in the worst case. However, by combining our findings with the results on the trainability of QCs, we can recognize a class of QCs that can be optimized to achieve the optimal parameters, suggesting zero loss and perfect classification.

\subsection{Expected risk of QCs in the case of approximate satisfaction}\label{append:subsec:expected_risk_approximate}

Here we discuss the expected risk $\ROPT$ of near-optimal QCs, where the empirical risk equals to a small value with $\RERM = \varepsilon$. Recall $\ROPT=\RERM + \RGENE=\varepsilon +\RGENE$. In other words, to quantify $\ROPT$ of near-optimal QCs, it is necessary to comprehend how the imperfect training, or equivalently $\varepsilon$, effects $\RGENE$. 

The derived generalization error bound in Lemma \ref{thm:gene_robust_QNN} offers a straightforward solution to this problem. In particular,  with probability $1-\delta$, the generalization error of the learned QC $\hath_Q$ yields   
\begin{equation}\label{eqn:gene_simp}
\RGENE(\hath_Q)\leq O\left(\sqrt{\frac{|\mathcal{T}_{\mathcal{D}}|4^m N_{ge}\ln \frac{56KN_{ge}}{\epsilon\delta}}{n}}\right),
\end{equation}
where $N_{ge}$ is the number of encoding gates, $m$ is the maximum number of qubits that the encoding gate can be applied to, $\epsilon$ is a predefined tolerable error, $n$ refers to the number of training examples, $\mathcal{I}_r^{\mathcal{D}}=\{i\in[n]:\bm{z}^{(i)}\in \mathcal{C}_r\}$, and $\mathcal{T}_{\mathcal{D}}:=\{r\in [R]:|\mathcal{I}_r^{\mathcal{D}}|\geq 1\}$. The notation $\mathcal{C}_r$ originates from the concept of robustness in Definition \ref{def:robustness}. 
\noindent The relation between $\RGENE$ and $\varepsilon$ is as follows. 
\begin{itemize}
	\item For optimal QCs with $\varepsilon=0$, the vanished variability of feature states in the same class implies that all training examples belong to  $K$ elements of $\{\mathcal{C}_r\}_{r=1}^R$. Consequently, according to the definitions of $\mathcal{I}_r^{\mathcal{D}}$ and  $\mathcal{T}_{\mathcal{D}}$  in Eq.~(\ref{eqn:gene_simp}), we have $|\mathcal{T}_{\mathcal{D}}|=K$. In this case, with probability $1-\delta$, the generalization error yields $\RGENE(\hath_Q)\leq O\Big(\sqrt{(K 4^m N_{ge}\ln \frac{56KN_{ge}}{\epsilon\delta})/n}\Big)$.  
	\item For near-optimal QCs with a very small $\varepsilon$, where all training examples $\{(\bxi, y^{(i)})\}_{i=1}^n$ belong to $K$ elements of the disjoint sets $\{\mathcal{C}_r\}_{r=1}^R$, the generalization error bound is identical to the optimal case, i.e., $\RGENE(\hath_Q)\leq O\Big(\sqrt{(K 4^m N_{ge}\ln \frac{56KN_{ge}}{\epsilon\delta})/n}\Big)$.
	\item For near-optimal QCs with $\varepsilon$ surpassing a threshold, where the feature states from the same class are not sufficiently close and the all training examples $\{(\bxi, y^{(i)})\}_{i=1}^n$ belong to $K'$ elements of the disjoint sets $\{\mathcal{C}_r\}_{r=1}^R$ with $K'>K$, the generalization error bound becomes $\RGENE(\hath_Q)\leq O\Big(\sqrt{(K' 4^m N_{ge}\ln \frac{56KN_{ge}}{\epsilon\delta})/n}\Big)$.    
\end{itemize}
The above analysis conveys the following implications related to QCs in the realistic scenario. When the empirical risk $\varepsilon$ is below a problem-dependent threshold, the generalization error bound is the same with the optimal case. In other words, when the number of training examples satisfies $n\gg O\Big(\sqrt{K 4^m N_{ge}\ln \frac{56KN_{ge}}{\epsilon\delta}}\Big)$, $\RGENE\rightarrow 0$ and the expect risk is $\ROPT=\varepsilon$.  However, when the empirical risk $\varepsilon$ is above such a threshold, a larger number of training examples, i.e., $n\gg O\Big(\sqrt{ K' 4^m N_{ge}\ln \frac{56KN_{ge}}{\epsilon\delta}}\Big)$,  is required to achieve $\RGENE\rightarrow 0$ and $\ROPT=\varepsilon$. Since $K'$ is proportional to $\varepsilon$, it can be concluded that imperfect training hinders the generalization ability of QCs, which requires a larger number of training examples to suppress the generalization error and expected risk.

\subsection{Summary of strategies to construct QCs}\label{append:subsec:sum_strategy_QC}
 
In the main text, we have elucidated how our theoretical results provide insights into the construction strategies of  QCs with improved performance. To ensure clarity, we now summarize these construction strategies.

\begin{itemize}
	\item In order to achieve the optimal power, it is crucial for a QC to have the capability of forming the feature states as ETF. This principle aligns with the principles of quantum metric learning and quantum self-supervised learning \cite{lloyd2020quantum,nghiem2021unified,larose2020robust,jaderberg2022quantum,Yang2022Analog}. Instead of employing  the fixed encoding circuit $U_E$ and ansatz $U(\btheta)$ and adjusting the parameters $\btheta$ to maximize training accuracy, a new strategy is first completing a pretext task to design an encoding circuit and ansatz that can align the feature states with an ETF.  Subsequently, the learned encoding unitary and ansatz are utilized to perform the classification task.
	\item When the number of classes $K$ is not excessively large, employing Pauli-based measurements with $2^D=K$ is advantageous for performing the multi-class classification task. This preference arises due to the effectiveness of Pauli-based measurements in real quantum systems, which allows for the utilization of classical shadow techniques to expedite evaluation. Furthermore, Pauli-based measurements exhibit relative insensitivity to barren plateaus issues compared to computational basis measurements. However, in the case of an extremely large number of classes $K$, the use of SIC-POVM becomes desirable for achieving relatively good performance. 
	\item The training efficiency of QCs can be enhanced by reducing the number of training examples. To determine an appropriate number of training examples, the derived generalization bound in Theorem 1 can serve as a valuable guideline.
\end{itemize}

\section{Implementation of the algorithm to enhance the power of QCs}\label{append:sec:alg-imp}

In practical scenarios, there are many flexible hyper-parameter settings to initiate QCs, each leading to distinct learning performance.  An interpretation is shown in the left panel of Fig.~\ref{fig:relation-with-geometric-QNN}. Namely, for each ansatz, although the structure of the gates in each layer is fixed, the number of layers $L$ can be adjusted to form shallow or deep circuits, implying the employed ansatz can either be under-parameterized or over-parameterized.  To this end, it is desired to estimate the optimal $L^*$ whose minimum expected risk is lower than other settings of $L$, e.g., $U_3$ in Fig.~\ref{fig:relation-with-geometric-QNN}.

\begin{figure}[h!]
\centering
	\includegraphics[width=0.99\textwidth]{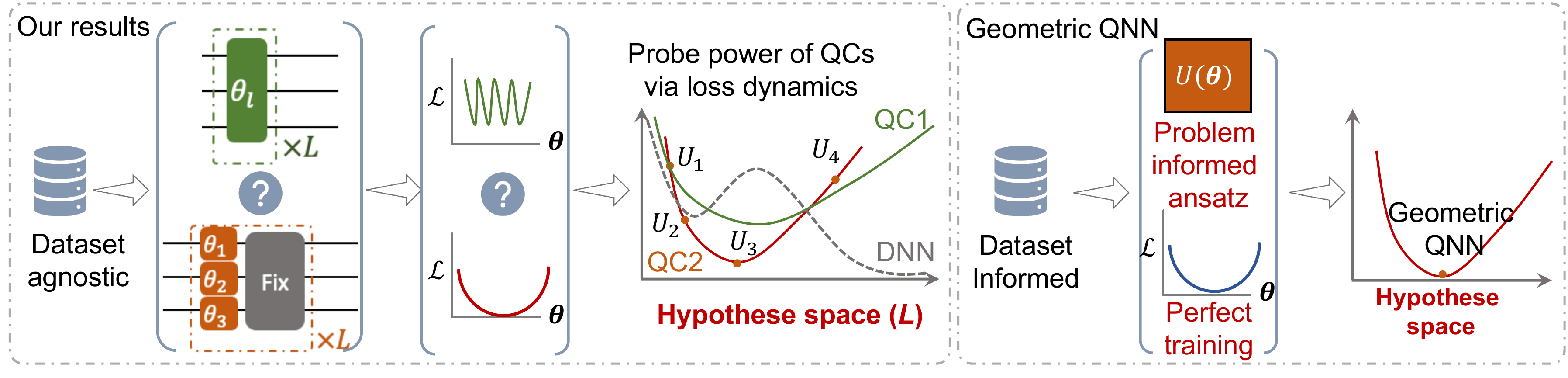}
	\caption{\small{\textbf{Complementary relation of our results and geometric QNNs}. The left and right panels exhibit the achieved results in our work and in Ref.~\cite{schatzki2022theoretical}, respectively.}}
	\label{fig:relation-with-geometric-QNN}
\end{figure}

The above task can be achieved through the following two steps: (1) estimate the optimal hyper-parameter $L^*$, where the corresponding parameter space $\Theta$ (e.g., $\Theta = [0, 2\pi)^{L^*}$ for the green ansatz and $\Theta = [0, 2\pi)^{3L^*}$ for the red ansatz) includes a set of parameters $\btheta^*\in \Theta$ that enable the best risk performance compared to other values of $L$; (2) construct the optimal ansatz $U(\btheta)$ with the layer number $L^*$ and optimizing this ansatz to obtain the optimal parameters $\btheta^*$.  The proposed algorithm orients to leverage the derived U-shaped curve  to identify  $L^*$ as detailed below.

\subsection{Implementation details of the proposed algorithm}
The derived U-shaped curve of QCs indicates that the minimum risk of QC locates at the modest size of the hypothesis space $\mathcal{H}_Q$. In other words, the number of trainable parameters $N_t$ should be lower than $O(poly(N))$, with $N$ being the number of qubits in QC. Moreover, Lemma \ref{thm:gene_robust_QNN} hints that the generalization error of QCs can be well suppressed by using the modest number of train examples. As such, if the available number of training examples in $\mathcal{D}$ is tremendous, we can distill a subset from $\mathcal{D}$ to enhance the training efficiency without increasing the generalization error.

\begin{algorithm}\label{alg:dymanic-loss-curve}
\caption{Estimate risk curves of quantum and classical classifiers}
\KwData{The train dataset $\mathcal{D}$, the test dataset $\mathcal{D}_{Test}$, QC $h_Q$ associated with the hypothesis space $\mathcal{H}_Q$, CC $h_C$ associated with the hypothesis space $\mathcal{H}_Q$, the loss function $\mathcal{L}(\cdot, \cdot)$.}
\KwResult{The estimated risk curves of QC and CC.}
 Initialization: $W$ tuples of hyper-parameter settings $\{n^{(w)}, N_t^{(w)}, T^{(w)}\}_{w=1}^W$ with $n$ being train examples, $N_t$ being the number of trainable parameters, and $T$ being the number of epochs\;
 \For{$w=1$, $w\leq W$, $w++$}{
  Initialize train data as $\mathcal{D}^{(w)}$ by distilling $n^{(w)}$ examples from $\mathcal{D}$\;
  \textcolor{gray}{\# Collect loss dynamics of QC \;}
  Minimize the loss function $\mathcal{L}(\cdot, \cdot)$ via gradient descent methods to obtain the empirical quantum classifier $\bar{h}_Q^{(w)}\in\mathcal{H}_Q$ using $\mathcal{D}^{(w)}$ within $T^{(w)}$ epochs and $N_t$  trainable parameters\;
  Record the loss value $\mathcal{L}(\bar{h}_Q^{(w)},\mathcal{D}_{Test})$ \;
   \textcolor{gray}{\# Collect loss dynamics of CC \;}
    Minimize the loss function $\mathcal{L}(\cdot, \cdot)$ via gradient descent methods to obtain the empirical classical classifier $\bar{h}_C^{(w)}\in\mathcal{H}_C$ using $\mathcal{D}^{(w)}$ within $T^{(w)}$ epochs and $N_t$ trainable parameters\;
  Record the loss value $\mathcal{L}(\bar{h}_C^{(w)},\mathcal{D}_{Test})$ \;  
 }
Fitting the loss dynamics of $\{\mathcal{L}(\bar{h}_Q^{(w)},\mathcal{D}_{Test})\}_{w=1}^W$ to obtain the estimated risk curve of QC \;
Fitting the loss dynamics of $\{\mathcal{L}(\bar{h}_C^{(w)},\mathcal{D}_{Test})\}_{w=1}^W$ to obtain the estimated risk curve of CC.
\end{algorithm}

The Pseudo code of the proposed method is presented in Alg.~\ref{alg:dymanic-loss-curve}. Note that the learning rate, the adopted optimizer, and the batch size can be varied of different classifiers to better estimate the empirical hypothesis. To ensure that the collected results of QC span its basin of the risk curve, the employed $W$ settings of $N_t$ can be acquired by uniformly interpolating from $O(1)$ to $O(poly(N))$. The iteration $T$ should ensure the convergence of QC. Once the loss values of QC and CC under  $\{n^{(w)}, N_t^{(w)}, T^{(w)}\}_{w=1}^W$ are obtained, we can apply certain fitting algorithms to attain their risk curves.  

Moreover, we would like to point out that the collected losses may not necessarily be optimal in practice, as its performance could be influenced by the choice of optimizer and initial parameters. Consequently, this may lead to a lower precision in the fitting curve and an underestimation of the true power of the QC on the given dataset. To address this issue, the use of advanced optimizers and initialization strategies can help to a more accurate identification of QCs.

\subsection{Relation with geometric quantum machine learning}

We now explain the complementary relation between Alg.~\ref{alg:dymanic-loss-curve} and geometric quantum machine learning  \cite{schatzki2022theoretical} in solving classification tasks.  As shown in the left dashed box of Fig.~\ref{fig:relation-with-geometric-QNN}, our work considers an agnostic setting where the prior information of the dataset is unknown, making it impossible to design an effective problem-informed ansatz that guarantees perfect training. Therefore, we analyze how the expected risk changes when the employed ansatz is constructed from shallow to deep, or equivalently, how it evolves as the hypothesis space continuously expands. The resulting U-shaped  risk curve motivates us to devise Alg.~\ref{alg:dymanic-loss-curve} that can locate a suboptimal ansatz to complete the learning task, labeled as `$U_3$' in Fig.~\ref{fig:relation-with-geometric-QNN}. 

However, as depicted in the right dashed box in Fig.~\ref{fig:relation-with-geometric-QNN}, Ref.~\cite{schatzki2022theoretical} and other works related to geometric quantum machine learning \cite{meyer2023exploiting, ragone2022representation,nguyen2022theory} rely on a setting where the prior information of the dataset is available. In such a setting, a problem-informed ansatz can be designed to ensure perfect training and the expected risk can approach to zero when the number of training data is sufficient.

The complementary relation between our results and Ref.~\cite{schatzki2022theoretical} provides the following insights when using QCs to solve classification tasks. On the one hand, the three conditions derived in Theorem 1 raise the question of how to satisfy them. Notably, an interesting observation is that the results achieved in Ref.~\cite{schatzki2022theoretical} offers a readily available solution to address this issue. On the other hand,   our results hint distinct philosophy in the design of quantum machine learning and deep learning models. In the context of deep learning, both over-parameterization (i.e., having many more parameters than training data) and injecting prior information into the neural network design can enhance model performance. However, this is not the case for quantum machine learning. Our result, connected with Ref.~\cite{schatzki2022theoretical}, suggests that designing problem-informed ansatz is a more promising approach than over-parameterization for improving the performance of QCs.

\section{Numerical simulation details}\label{append:sec:sim-res}

\textbf{Dataset.} The construction of the parity dataset mainly follows from Ref.~\cite{cross2015quantum}. Note that this task has also been broadly studied in the field of deep learning to show the limits of deep neural classifiers \cite{daniely2020learning,barak2022hidden}. The constructed dataset contains in total $64$ examples. Each example corresponds to a bit-string with the length $6$, i.e., $\bx\in \{0, 1\}^6$. The label of $\bx$ is assigned to be $1$ if the number of `0' in $\bx$ is even; otherwise, the label is 0. We split it into train dataset and test dataset with the train-test-split ratio being $0.75$. The number of train examples in each class is controlled to be the same. For each example, its feature dimension is $10$. The image dataset is adapted from Ref.~\cite{xiao2017online}. Specifically, the data from the first nine classes are preserved and the total number of examples is $180$. The train-test-split ratio is set as $0.5$ to construct the train and test dataset. Each example corresponds to an image with $28\times 28$ pixels. In the preprocessing stage, we flatten all examples followed by padding and normalization. The processed example yields an $10$-qubit state with $\bx\in \mathbb{R}^{2^{10}}$ and $\|\bx\|_2^2=1$. Some examples after preprocessing are illustrated in Fig.~\ref{fig:append:dataset}(a).

\begin{figure*}
	\centering
	\includegraphics[width=0.92\textwidth]{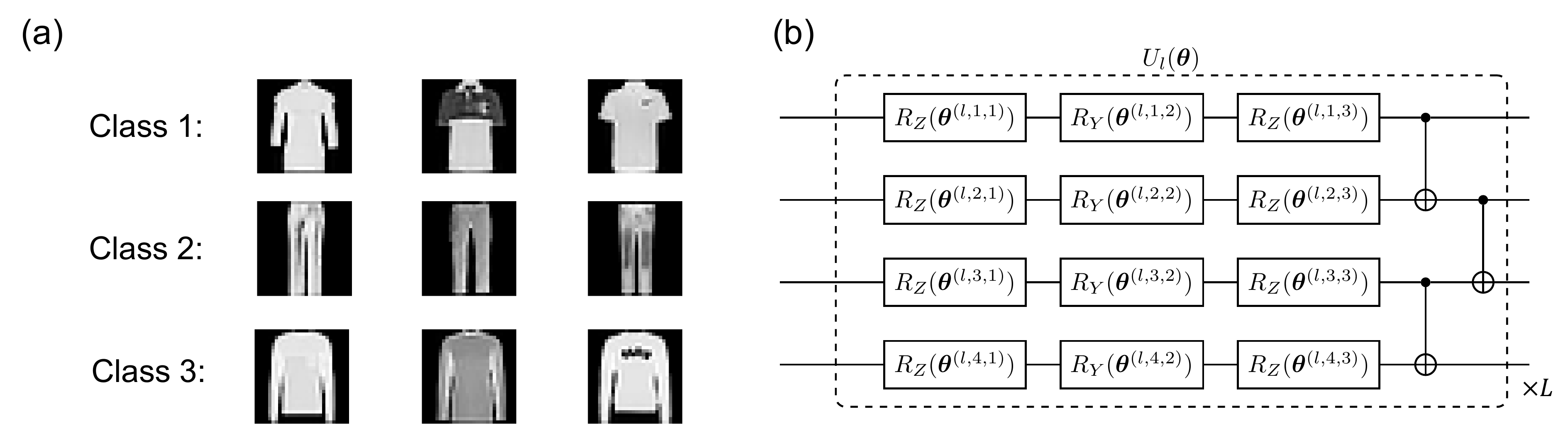}
	\caption{\small{\textbf{Visualization of image dataset and hardware-efficient ansatz.} (a) Image instances sampled from the Fashion-MNIST dataset. (b) The circuit architecture of the employed Hardware-efficient ansatz. The label `$\times L$' denotes the layer number, which means repeating the gates in the dashed box with $L$ times. }}
	\label{fig:append:dataset}
\end{figure*}
\textbf{Construction of QCs.}     
The quantum subroutine of QC consists of  the encoding circuit $U_E$ and the ansatz $U(\btheta)$. For all learning tasks, the hardware-efficient ansatz is employed  whose mathematical expression is $U(\btheta)=\prod_l^L U_l(\btheta)$. The layout of the hardware-efficient  ansatz follows the layer-wise structure and the gate arrangement at each layer is the same.  For $\forall l\in [L]$, $U_l(\btheta)=\bigotimes_{i=1}^N(\RZ(\btheta^{(l,i,1)})\RY(\btheta^{(l,i,2)})\RZ(\btheta^{(l,i,3)}))U_{ent}$ with $U_{ent}$ being the entanglement layer formed by $\CNOT$ gates. Fig.~\ref{fig:append:dataset}(b) depicts the adopted hardware-efficient ansatz  with $L$ layers.   

The encoding methods for the parity dataset classification and the digit images classification are different. The former uses the basis encoding method. Specifically, for a classical example $\bx\in \mathbb{R}^d$, the employed encoding unitary is $U_E(\bx)\ket{0}^{\otimes d}=\ket{\bx}$, which maps $\bx$ to a $2^d$ dimensional quantum state $U_E(\bx)\ket{0}^{\otimes d}$. The latter uses the amplitude encoding method. Given a normalized image $\bx\in \mathbb{R}^{64}$ with $\|\bx\|_2^2=1$, the corresponding unitary encodes it into a $6$-qubit state with $U_E(\bx)\ket{0}^{\otimes 6}=\sum_{j=1}^{64} \bx_j\ket{j}$.

The Pauli-based measure operators are used in learning Fashion-MNIST dataset. Since the preprocessed dataset contains $9$ classes, there are in total $9$ measure operators, i.e., $o^{(1)}=X\otimes X \otimes\mathbb{I}^{\otimes 8}$, $o^{(2)}=X\otimes Y \otimes\mathbb{I}^{\otimes 8}$, $o^{(3)}=X\otimes Z \otimes\mathbb{I}^{\otimes 8}$, $o^{(4)}=Y\otimes X \otimes\mathbb{I}^{\otimes 8}$, $o^{(5)}= Y\otimes Y \otimes\mathbb{I}^{\otimes 8}$, $o^{(6)}=Y\otimes Z \otimes\mathbb{I}^{\otimes 8}$, $o^{(7)}=Z\otimes X \otimes\mathbb{I}^{\otimes 8}$, $o^{(8)}=Z\otimes Y \otimes\mathbb{I}^{\otimes 8}$, $o^{(9)}=Z\otimes Z \otimes\mathbb{I}^{\otimes 8}$. 

\textbf{Multilayer Perceptron}. To better justify the capability and performance of QCs, we apply the multilayer perceptron (MLP) as the reference \cite{goodfellow2016deep}. MLP is composed of an input layer,  $L$ hidden layers with $L\geq 1$, and an output layer. The dimension of the input layer is equivalent to the feature dimension of the input. ReLU activations are added in the hidden layer to perform nonlinear transformation. In the output layer, the activation function, Softmax, is employed. The number of layers $L$ depends on the assigned tuples $\{n, N_t, T\}$. 

\textbf{Convolutional neural network}. In the task of image classification, convolutional neural networks (CNNs) is employed as the reference \cite{goodfellow2016deep}. The employed  CNN is formed by two convolutional layers and one fully-connected layer. ReLU activations and the pooling operation are added in the hidden layer to perform nonlinear transformation. The number of channels for the first convolutional layer is fixed to be $8$ and the corresponding kernel size is $9\times 9$. The kernel size of the pooling operation applied to the two convolutional layers is $2\times 2$. The kernel size for the second convolutional layer is fixed to be $5\times 5$ but the number of output channels is varied depending on the settings in Alg.~\ref{alg:dymanic-loss-curve}. For the sake of fair comparison, the number of output channels is set as $2, 6, 15, 30, 50, 75$, where the corresponding number of parameters is $860$, $1284$, $2238$, $3828$, $5948$, and $8598$, respectively.

\textbf{Optimizer and other hyper-parameters.} The adaptive gradient descent method, named AdaGrad optimizer \cite{duchi2011adaptive}, is used to optimize QCs and MLPs. Compared to the vanilla gradient descent method, AdaGrad permits better performance, since it adapts the learning rate for each feature depending on the estimated geometry of the problem. In the task of parity learning, the initial learning rate is set as $\eta=0.5$ for QC and $\eta=0.01$ for MLP, respectively. For both classifiers, the batch size is fixed to be $4$. In the task of image classification, the initial learning rate is set as $\eta=0.05$ for QC and $\eta=0.01$ for CNN, respectively. The batch size for both classifiers is set as $1$. To make a fair comparison, the hyper-parameter settings applied to QC and CC, especially for those relating to the computational resources, are required to keep to be the same. Specifically, in each comparison, the employed loss function, the train examples $n$, the number of trainable parameters  $N_t$, and the number of epochs $T$ applied to QC and CC should be identical. 

\textbf{Curve fitting method}. To capture the risk curve, Alg.~\ref{alg:dymanic-loss-curve} requests a curve fitting method.  For all experiments, we adopt the polynomial fitting to derive the risk curve by using the collected results. The least squares method in determining the best fitting functions.

\textbf{Source code.} The source code used in numerical simulations will be available at Github repository \url{https://github.com/yuxuan-du/Problem-dependent-power-of-QNNs}.

\begin{figure*}
	\centering
	\includegraphics[width=0.91\textwidth]{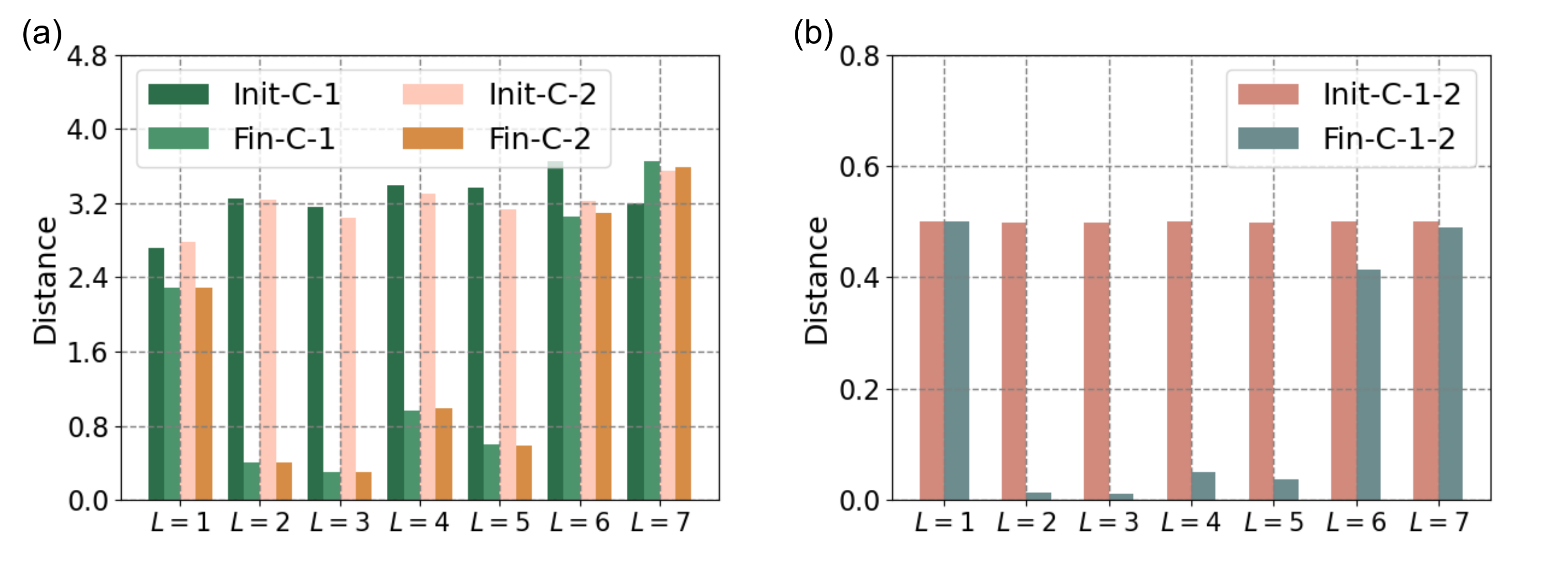} \caption{\small{\textbf{Geometric properties of the quantum feature states on parity dataset.} (a) The averaged performance of QC evaluated by $\mathcal{M}_1$ defined in Eq.~(\ref{eqn:append:metric-1}). The label `Init-C-$k$' with $k=1,2$ refers that the value of $\mathcal{M}_1^{(k)}$ at the initialization. Similarly, the label `Final-C-$k$' with $k=1,2$ refers that the value of $\mathcal{M}_1^{(k)}$ when the training of QC is completed. (b) The averaged performance of QC evaluated by $\mathcal{M}_2$ defined in Eq.~(\ref{eqn:append:metric-2}). The label `Init-C-$1$-$2$' (`Final-C-$1$-$2$') refers that the value of $\mathcal{M}_2$ before and after training of QC. The label `$L=a$' in the $x$-axis stands for that the layer number of hardware-efficient \revise{ansatz} is $a$.}}
	\label{fig:append:QC_parity-feature-states}
\end{figure*}

\subsection{Simulation results of the binary classification for the parity dataset}

\begin{figure*}
	\centering
	\includegraphics[width=0.99\textwidth]{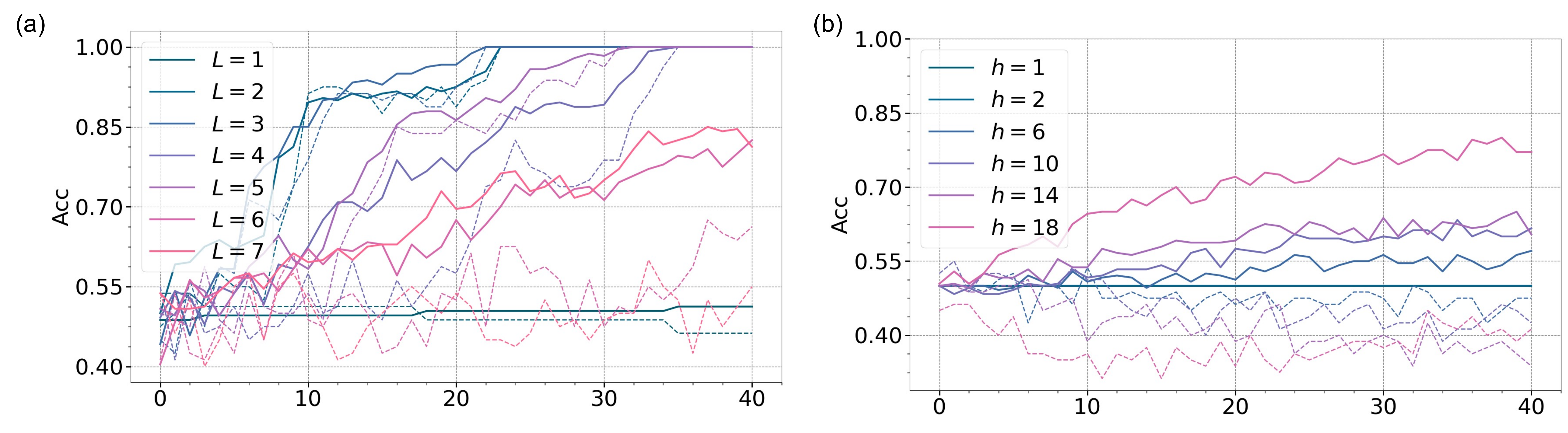}
\caption{\small{\textbf{Train (test) accuracy versus epoch on parity dataset.} (a) Train accuracy and test accuracy of QC with the varied layer number. The label `$L=a$' refers that the layer number used in hardware-efficient ansatz is $a$. The solid line and the dashed line separately correspond to the train and test accuracies of QC. (b) Train accuracy and test accuracy of MLP with the varied number of hidden neurons. The label `$h=a$' refers that the number of neurons is $a$. The solid and dashed lines have the same meaning with those in QC.}}
	\label{fig:append:QC_parity}
\end{figure*}

\textbf{The feature states before and after training.} We explore the geometric  properties of feature states when the layer number of hardware-efficient ansatz varies from $L=1$ to $L=7$. Other settings are identical to those introduced in the main text. Condition (i) in Lemma \ref{thm:condit-NC-QNN} is evaluated by the metric 
\begin{equation}\label{eqn:append:metric-1}
	\mathcal{M}_1^{(k)}=\sum_{i=1}^{n_c}\|\rho^{(i,k)}-\bar{\rho}^{(k)}\|,
\end{equation}
where the number of train examples $\{\rho^{(i,k)}\}_{i=1}^{n_c}$ belonging to the $k$-th class is $n_c$ and $\bar{\rho}^{(k)}$ refers to their class-feature mean. Since parity learning is a binary classification task, Condition (ii)  in Lemma \ref{thm:condit-NC-QNN} is evaluated by 
\begin{equation}\label{eqn:append:metric-2}
	\mathcal{M}_2=\Tr(\bar{\rho}^{(0)}\bar{\rho}^{(1)}). 
\end{equation}

The geometric properties of the feature states in the measure of $\mathcal{M}_1^{(k)}$ and $\mathcal{M}_2$ are visualized in Fig.~\ref{fig:append:QC_parity-feature-states}. The left panel shows that when $L\in\{2,3,4,5\}$, both the value of $\mathcal{M}_1^{(1)}$ (highlighted by the green color) and $\mathcal{M}_1^{(2)}$ (highlighted by the pink color) decrease  from $\sim 3.2$ (epoch $t=0$) to $\sim 0.5$ (epoch $t=40$).  These results comply with Condition (i) in the sense that the feature states in the same class concentrates to the class-feature mean and leads to the low empirical risk. By contrast, when $L$ is too small or too large, the value of $\mathcal{M}_1^{(1)}$ changes subtly before and after optimization, which is above $3.2$. The large deviation of feature states incurs the degrade performance of QC. The right panel depicts that when  $L\in\{2,3,4,5\}$, the value of $\mathcal{M}_1^{(2)}$  decreases from $0.5$ (epoch $t=0$) to $0.05$ (epoch $t=40$). This reduction means that the class-feature means are maximally separated and thus ensure a good learning performance. On the contrary, when $L\in\{1, 6, 7\}$, the the value of $\mathcal{M}_1^{(2)}$ oscillates around $0.5$, which implies that the class-feature means $\bar{\rho}^{(1)}$ and $\bar{\rho}^{(2)}$ are highly overlapped. 
 
\textbf{The learning dynamics of QC and MLP}. Fig.~\ref{fig:append:QC_parity} visualizes the learning dynamics of QC and MLP with respect to the varied trainable parameters. The left panel indicates that when the layer number is $L=2, 3, 4$, both train and test accuracies of QC fast converge to $100\%$ with $25$ epochs. When $L=1$, both train and test accuracies oscillate to $50\%$. When $L=7$, the number of train data becomes insufficient and the overfitting phenomenon appears. These results accord with the U-shaped risk curve of QCs. The right panel shows that when the number of hidden neurons ranges from $h=1$ to $h=18$, the test accuracy of MLP is no higher that $55\%$. These results reflect the incapability of MLP in learning parity dataset compared with QCs.

\subsection{Simulation results of multi-class classification for the Fashion-MNIST images dataset} 

\textbf{The feature states before and after training.} Here we discuss the geometric properties of feature states when the layer number of hardware-efficient ansatz  varies from $L=2$ to $L=150$. The metrics 
	$\mathcal{M}_1^{(k)}$ and $\mathcal{M}_2$ defined in Eqs.~(\ref{eqn:append:metric-1}) and (\ref{eqn:append:metric-2}) are employed. In the measure of $\mathcal{M}_2$, since the performance of QC for any two classes is similar, we only study the first two classes for ease of visualization.

Fig.~\ref{fig:append:QC_FMNIST-feature-states} depicts the geometric properties of the feature states in the measure of $\mathcal{M}_1^{(k)}$ and $\mathcal{M}_2$. The left panel shows that for all settings with $L\in\{2,5,25, 50,100,150\}$, the   value $\mathcal{M}_1^{(k)}$  at the initial step and the final step is very similar and $\mathcal{M}_1^{(k)}$ is larger than $0.2$ for $\forall k\in\{1,2,...,9\}$. These results indicate that QC cannot satisfy Condition (i) when learning Fashion-MNIST dataset, where the feature states from the same class cannot collapse to a unique point. Moreover, when we examine the performance of intra-class, the right panel implies that after training, the class-feature means of QC are still highly overlapping. The distance for all settings of $L$ is above $0.3$. The inability to achieve the optimal training loss shows the the limited power of QC on learning Fashion-MNIST dataset.
 
 \begin{figure*}
	\centering
	\includegraphics[width=0.99\textwidth]{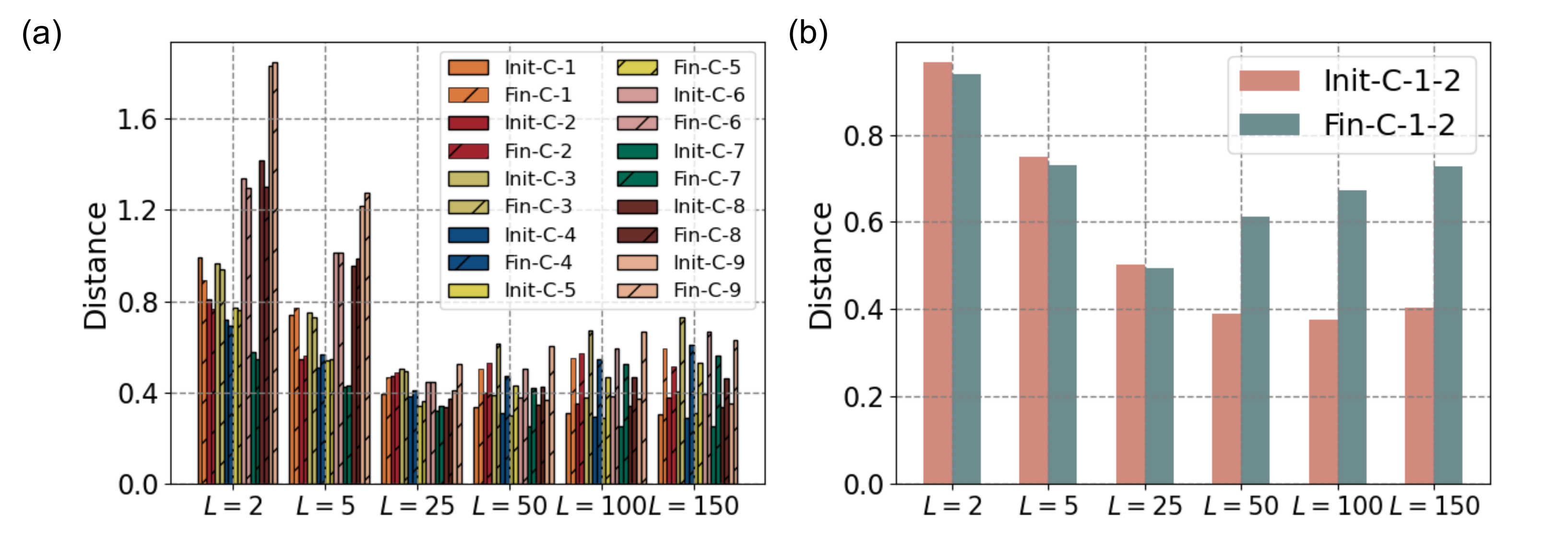} \caption{\small{\textbf{Geometric properties of the quantum feature states on Fashion-MNIST dataset.} (a) The averaged performance of QC evaluated by $\mathcal{M}_1$ defined in Eq.~(\ref{eqn:append:metric-1}).  (b) The averaged performance of QC evaluated by $\mathcal{M}_2$ defined in Eq.~(\ref{eqn:append:metric-2}). All labels have the same meaning with those introduced in Fig.~\ref{fig:append:QC_parity-feature-states}.}}
	\label{fig:append:QC_FMNIST-feature-states}
\end{figure*}
 
\begin{figure*}
	\centering
	\includegraphics[width=0.99\textwidth]{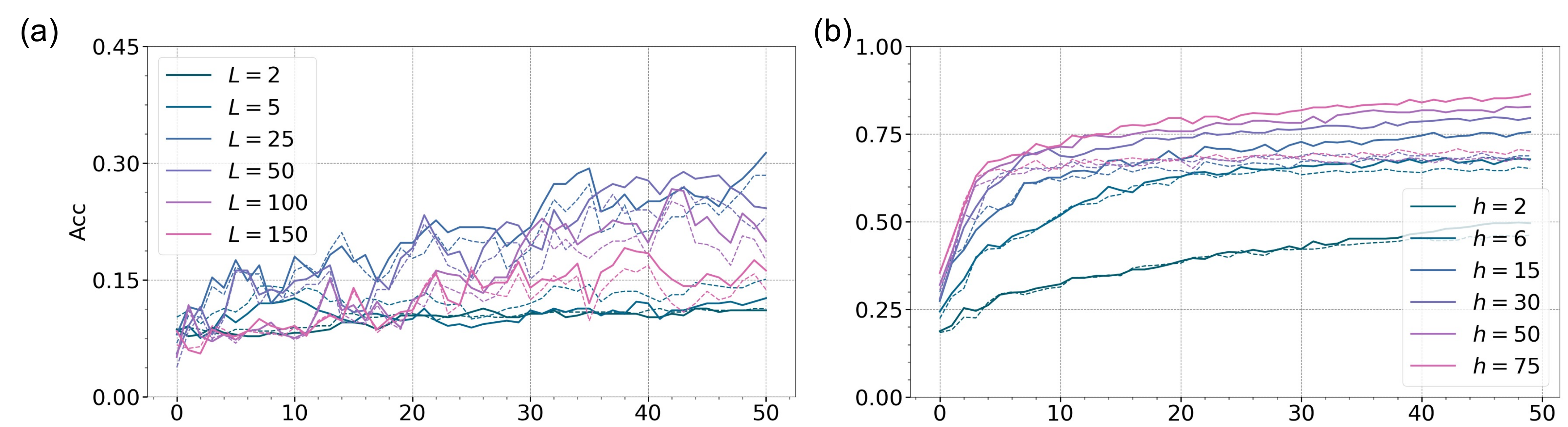}
\caption{\small{\textbf{Train (test) accuracy versus epoch on Fashion-MNIST dataset.} (a) Train accuracy and test accuracy of QC with the varied layer number. The labels have the same meaning with those presented in Fig.~\ref{fig:append:QC_parity}. (b) Train accuracy and test accuracy of CNN with the varied number of trainable parameters. The label `$h=a$' refers that the number of output channels at the second layer is $a$. The solid and dashed lines have the same meaning with those in QC.}}
	\label{fig:append:QC_Fashion}
\end{figure*}

\textbf{The learning dynamics of QC and CNN}. Fig.~\ref{fig:append:QC_Fashion} depicts the learning dynamics of QC and CNN with the varied number of trainable parameters. The left panel indicates that QC achieves the best performance when the layer number is $L\in[25, 100]$, where the corresponding number of parameters ranges from $750$ to $3000$. In these settings, both train and test accuracies of QC are around $30\%$ after $50$ epochs. When $L<25$ or $L>100$, both train and test accuracies oscillate at $15\%$.  These results accord with the U-shaped risk curve of QCs. The right panel shows that the train and test accuracies of CNN are steadily growing with the increased number of channels. That is, when the number of channels at the second layer is not less than $6$, both the train and test accuracies are higher than $60\%$.  These results indicate that the employed QC does not have potential advantages in learning image dataset compared with CNN.

\end{document}